\newtheorem{theorem}{Theorem}[section]
\newtheorem{lemma}[theorem]{Lemma}
\newtheorem{corollary}[theorem]{Corollary}
\newtheorem{proposition}[theorem]{Proposition}
\newcommand{\set}[1]{\ensuremath{\{#1\}}}
\newcommand\emphdef[1]{{\textit{\textbf{#1}}}}
\newcommand{\cZ}{\mathbb{Z}}
\newcommand{\cR}{\mathbb{R}}
\newcommand{\cH}{\mathbb{H}}
\newcommand{\surface}{\mathcal{S}}
\newcommand{\patchsystem}{\Sigma}
\newcommand{\rtriangulation}{T}
\newcommand{\graph}{G}
\newcommand{\embeddedgraph}{H}
\newcommand{\loopgraph}{L}
\newcommand{\spanningtree}{Y}
\newcommand{\drawing}{\delta}
\newcommand{\poly}{\text{poly}}
\title{Untangling Graphs on Surfaces%
\thanks{All three authors are supported by the grant ANR-17-CE40-0033 (SoS), and the first and last authors are supported by the grant ANR-19-CE40-0014 (MIN-MAX), of the French National Research Agency ANR. The conference version of this paper was published in SODA 2024.}%
}
\author{%
Éric Colin de Verdière%
\thanks{LIGM, CNRS, Univ Gustave Eiffel, F-77454 Marne-la-Vallée, France}
\and
Vincent Despré%
\thanks{Université de Lorraine, CNRS, Inria, LORIA, F-54000 Nancy, France}
\and
Loïc Dubois
\thanks{LIGM, CNRS, Univ Gustave Eiffel, F-77454 Marne-la-Vallée, France}
}
\date{}
\begin{document}
\pagenumbering{arabic}
\maketitle

\begin{abstract}
  Consider a graph drawn on a surface (for example, the plane minus a finite set of obstacle points), possibly with crossings.  We provide an algorithm to decide whether such a drawing can be \emph{untangled}, namely, if one can slide the vertices and edges of the graph on the surface (avoiding the obstacles) to remove all crossings; in other words, whether the drawing is homotopic to an embedding.  While the problem boils down to planarity testing when the surface is the sphere or the disk (or equivalently the plane without any obstacle), the other cases have never been studied before, except when the input graph is a cycle, in an abundant literature in topology and more recently by Despré and Lazarus [SoCG 2017, J.\ ACM 2019], who gave a near-linear algorithm for this problem.

  Our algorithm runs in $O(m+\poly(g+b)n\log n)$ time, where $g\ge0$ and~$b\ge0$ are the genus and the number of boundary components of the input orientable surface~$\surface$, and $n$ is the size of the input graph drawing, lying on some fixed graph of size~$m$ cellularly embedded on~$\surface$.

  We use various techniques from two-dimensional computational topology and from the theory of hyperbolic surfaces.  Most notably, we introduce \emph{reducing triangulations}, a novel discrete analog of hyperbolic surfaces in the spirit of \emph{systems of quads} by Lazarus and Rivaud [FOCS~2012] and Erickson and Whittlesey [SODA 2013], which have the additional benefit that reduced paths are unique and stable upon reversal; they are likely of independent interest.  Tailored data structures are needed to achieve certain homotopy tests efficiently on these triangulations.  As a key subroutine, we rely on an algorithm to test the \emph{weak simplicity} of a graph drawn on a surface by Akitaya, Fulek, and T\'oth~[SODA 2018, TALG 2019].  
\end{abstract}

\section{Introduction}\label{sec:introduction}

In this paper, we study the following problem: Given a \emph{drawing} $\drawing$ of a graph $\graph$ on a (compact, connected, orientable) surface $\surface$, possibly with boundary, decide whether it is possible to \emph{untangle}~$\drawing$, that is, to make $\drawing$ crossing-free by a continuous motion; more formally, whether $\delta$ is homotopic to an embedding, where the homotopy may move vertices and edges. See Figure~\ref{fig:graph-plane}.

We first remark that, in the case where the surface is topologically trivial (the sphere or the disk), this problem boils down to planarity testing, and is thus solvable in linear time (see Hopcroft and Tarjan~\cite{ht-ept-74}).  Before stating our results in detail and presenting the main techniques, we survey related works, on curves and graphs on surfaces.

\begin{figure}[!h]
\centering
\includegraphics[scale=1.3]{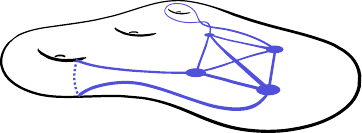}
\caption{A drawing of a graph that cannot be untangled, on an orientable surface of genus three without boundary.}\label{fig:graph-plane}
\end{figure}

\paragraph*{Related work on curves on surfaces.}

Topological questions on curves on surfaces have been an important field of study in the mathematical community for more than one hundred years. Dehn~\cite{d-tkzf-12} gave combinatorial characterizations of whether a closed curve on a surface is contractible (can be continuously moved to a point) or whether two such closed curves are freely homotopic (can be continuously deformed into each other).  Poincaré~\cite{p-ccal-04} provided similar characterizations to decide whether a closed curve on a surface is homotopic to a simple closed curve (in our language, can be untangled).  Although Dehn and Poincaré did not provide detailed analyses of their constructions, it is clear that they can be transformed into effective algorithms; see, e.g., Stillwell~\cite[Chapter~6]{s-ctcgt-93} for an overview.  

Generally, surfaces of genus at least two without boundary turn out to be the most difficult case.  For such a surface~$\surface$, a key insight, already present in these early works but reused in many other ones mentioned below, is the following: $\surface$ can be endowed with a hyperbolic metric (in the same way as the torus can be endowed with a Euclidean metric); under this metric, every closed curve is homotopic to a unique geodesic; moreover, any family of geodesic curves crosses minimally (has the least number of crossings among all curves in their respective homotopy classes).

The problem of determining whether a closed curve can be untangled has been extensively considered by the mathematical community in a long series of papers since the early 1960s; see Reinhart~\cite{r-ajccs-62}, Chillingworth~\cite{c-sccs-69,c-wns2-72}, and Birman and Series~\cite{bs-ascs-84}.  Later Cohen and Lustig~\cite{cl-pggin-87} and Lustig~\cite{l-pggin2-87} (see also Hass and Scott~\cite{hass1985intersections}) built upon these works to determine the geometric self-intersection number of a closed curve and the geometric intersection number of two closed curves (the number of crossings of homotopic curves crossing minimally).  De Graaf and Schrijver~\cite{gs-mcmcr-97} proved that it is possible to make curves cross minimally using homotopy (a.k.a.\ Reidemeister) moves that never increase the number of (self-)crossings.

These problems have been revisited under a more algorithmic lens by computational topologists since the 1990s.  Contractibility and homotopy can be tested in linear time, as proved by Lazarus and Rivaud~\cite{lr-hts-12} and Erickson and Whittlesey~\cite{ew-tcsr-13}, building upon earlier work by Dey and Guha~\cite{dg-tcs-99}.  One can decide whether a closed curve of length $n$, lying on a graph of size~$m$ itself embedded on~$\surface$, can be untangled in $O(m+n\log n)$ time, as proved by Despré and Lazarus~\cite{dl-cginc-19}, who also proved that the geometric self-intersection number can be computed in $O(m+n^2)$ time. Related works, by Chang and Erickson~\cite{ce-upc-17} and Chang and de Mesmay~\cite{cm-tcsma-22}, focus on computing the number of non-increasing homotopy moves needed to make a set of closed curves cross minimally.  Although all these algorithms are purely combinatorial, in many cases their proofs of correctness involve tools from hyperbolic geometry, suitably discretized, or at least rely on some intuition from hyperbolic geometry.

Related to deciding whether a curve can be untangled, the following problem has been studied recently: Decide whether a given closed curve drawn in a graph is \emph{weakly simple}, namely, whether it can be untangled in a neighborhood of the graph by an arbitrarily small perturbation.  Chang, Erickson, and Xu~\cite{cex-dwsp-15}, building upon earlier work by Cortese, Di Battista, Patrignani, and Pizzonia~\cite{cdpp-09}, provided a near-linear algorithm. 

\paragraph{Related work on graphs on surfaces.}

Many of the above questions can be extended from curves to graphs on surfaces.  Surprisingly, the literature studying them for graphs is rather scarce, in stark contrast with the central importance of graphs in theoretical computer science.

In the mathematical community, the only work that we are aware of, due to Ladegaillerie~\cite{l-cip1c-84}, is a reduction from the test of existence of an isotopy (a continuous family of embeddings) between two embeddings of a graph on a surface, to a test of isomorphism between combinatorial maps and an infinite number of homotopy tests between closed curves.  This characterization has been refined by Colin de Verdière and de Mesmay~\cite{cm-tgis-14}, leading to a polynomial time algorithm.

The well studied embeddability problem is to decide whether an input graph~$G$ can be embedded on an input surface~$\surface$.  Our problem adds some homotopy constraints to the embeddability problem.  The latter is NP-complete when $\surface$ is part of the input, as proved by Thomassen~\cite{thomassen1989graph}; however, Mohar~\cite{mohar1999linear} has given a linear time algorithm when $\surface$ is fixed, later simplified and improved by Kawarabayashi, Mohar, and Reed~\cite{kawarabayashi2008simpler}.  Our result indicates that adding these homotopy constraints makes the problem solvable in polynomial time, even if $\surface$ is part of the input.

There is a vast interest in crossing numbers for graphs drawn in the plane or in surfaces (for a survey, see Schaefer~\cite{s-gcnvs-13}), including constraints on the parity of the number of crossings between edges, related to the Hanani-Tutte theorem (see Schaefer~\cite{s-htrr-14}).  However, the constraint that we study, namely, fixing the homotopy of the drawing, appears to be entirely new.  A geometric variant has been studied by Goaoc, Kratochv\'\i{}l, Okamoto, Shin, Spillner, and Wolff~\cite{gkossw-upg-09}.  Given a straight-line drawing of a graph in the plane, they proved that the minimum number of vertices that have to be moved in order to untangle the drawing (the edges remaining drawn as line segments) is NP-hard to compute and to approximate; they also provided upper and lower bounds on the number of required moves.

More directly related to our work, in a recent paper, Atikaya, Fulek, and T\'oth~\cite{aft-rweg-19} extended the weak simplicity problem to graphs:  Given a drawing~$\drawing$ of a graph~$\graph$ in another graph~$\embeddedgraph$, itself embedded on a surface~$\surface$, is it possible to make $\delta$ simple with an arbitrarily small perturbation?  They solved this problem in near-linear time.  We will heavily rely on their algorithm, and postpone a detailed description to Section~\ref{sec:weak-embeddings}.  More recently, Fulek~\cite{f-egeg-20} gave a polynomial time algorithm to decide whether a graph drawn in the plane (without obstacles) can be perturbed by an arbitrarily small perturbation to turn it into an embedding in a pre-specified isotopy class.

\paragraph*{Our results.}

The input to our problem is a fixed cellular embedding$~\embeddedgraph$ of a graph on a surface$~\surface$, together with a drawing~$\drawing$ of a graph~$\graph$ on~$\embeddedgraph$, in the sense that $\drawing$ maps vertices of~$\graph$ to vertices of~$\embeddedgraph$ and edges of~$\graph$ to walks in~$\embeddedgraph$.  We obtain the following result:
\begin{theorem}\label{thm:surf}
  Let $\embeddedgraph$ be a graph of size~$m$ cellularly embedded on an orientable surface $\surface$ of genus $g \geq0$ with $b\ge0$ boundary components. Let $\graph$ be a graph and let $\drawing:\graph\to\embeddedgraph$ be a drawing of size~$n$. We can decide whether there is an embedding of~$\graph$ on~$\surface$ homotopic to~$\drawing$ in $O(m+(g+b)^2n\log((g+b)n))$ time. If so, we can construct in $O((g+b)^2(mn^2 + n \log((g+b)n)))$ time a weak embedding $\delta' : G \to H$, homotopic to $\delta$, such that $\delta'$ maps each edge of $G$ to a walk of length $O(g^2mn)$ in $H$ if $b=0$, and of length $O((g+b)mn)$ otherwise.
\end{theorem}
(Again, the cases of the sphere and the disk boil down to planarity testing so we omit these from now on.)
As in some previous papers in the area, e.g.,~\cite{ew-tcsr-13,lr-hts-12}, we use the RAM model, in which pointers and integers bounded by a polynomial in the input size can be manipulated in constant time (see Aho, Hopcroft, and Ullman.~\cite{ahu-daca-74} or Agarwal~\cite[Section~40.1]{a-rs-18}).  If the drawing~$\delta$ can be untangled, it is clear from our proofs that one can compute an embedding homotopic to~$\delta$ in polynomial time; details are omitted from this version.

We also consider an alternative framework in the case where $\surface$ is the plane minus a finite set of obstacle points, and $\drawing$ is a piecewise linear drawing of~$\graph$ avoiding the obstacles.  In this framework, we prove (here we use the real RAM model):

\begin{theorem}\label{thm:plane}
Let $P$ be a set of $p$ points of~$\cR^2$. Let $G$ be a graph, and let $\drawing:G\to\cR^2\setminus P$ be a piecewise linear drawing of size~$n$. In time $O(p^{5/2}n \log(pn))$, we can decide whether there is an embedding homotopic to $\drawing$ in $\mathbb R^2 \setminus P$. If so, we can construct in additional $O(p^5 n^2 \log(pn))$ time a piecewise-linear embedding homotopic to $\drawing$ in $\mathbb R^2 \setminus P$.
\end{theorem}

\paragraph*{A new tool: reducing triangulations.}

As is typical, the most difficult case is when the input surface~$\surface$ has genus at least two and no boundary, so we focus on this case.  For this purpose, we introduce the concept of \emph{reducing triangulation} of~$\surface$, a triangulation~$\rtriangulation$ of~$\surface$ with all vertex degrees at least eight and whose dual graph is bipartite.  Reducing triangulations form a new discrete analog of hyperbolic surfaces, and we define the notion of \emph{reduced walk} in~$\rtriangulation$, which behave similarly as geodesics in the continuous case.  A (possibly closed) walk~$W$ in~$\rtriangulation$ is reduced if no ``local'' reduction rule can be applied to~$W$.  We prove that $W$ is (possibly freely) homotopic to a unique (strongly) reduced walk, which we can compute in linear time, and that (strongly) reduced walks are stable upon reversal.  \emph{Systems of quads}, introduced by Lazarus and Rivaud~\cite{lr-hts-12}, refined by Erickson and Whittlesey~\cite{ew-tcsr-13}, and reused by Despré and Lazarus~\cite{dl-cginc-19}, have the same objective; in systems of quads, one can ensure either uniqueness in a given homotopy class or stability upon reversal, but (presumably) not both at the same time, which induces substantial technicalities in the previous papers~\cite{lr-hts-12,ew-tcsr-13,dl-cginc-19}.  Reducing triangulations are thus likely to be of independent interest.  Incidentally, the definition of reduced walks depends on the choice of an orientation of the surface, which is the main reason why we state our results for orientable surfaces only. 

Since the conference publication of this paper, reducing triangulations have been reused to give more efficient untangling algorithms for the specific case of multicurves~\cite{dubois2024making}, and to provide a discrete version of Tutte's barycentric theorem for graphs on surfaces~\cite{verdiere2025discrete}.

\paragraph*{Overview of the paper.}

After some preliminaries (Section~\ref{sec:preliminaries}), Sections \ref{sec:valid-triangulations} to~\ref{sec:contraction} are devoted to a proof of Theorem~\ref{thm:surf} in the case where $\embeddedgraph$ is a reducing triangulation on a surface of genus at least two without boundary; more precisely:

\begin{theorem}\label{thm:main-theorem}
  Let $\rtriangulation$ be a reducing triangulation of an orientable surface $\surface$ of genus~$g \geq 2$ without boundary. Let $\graph$ be a graph and let $\drawing:\graph\to\rtriangulation$ be a drawing of size~$n$. We can determine whether there is an embedding of~$\graph$ on~$\surface$ homotopic to~$\drawing$ in $O(gn\log(gn))$ time. If so, then we can construct in $O(n^2 + gn \log(gn))$ time a weak embedding $\delta' : G \to T$, homotopic to $\delta$, such that $\delta'$ maps each edge of $G$ to a walk of length $O(n)$ in $T$.
\end{theorem}

In detail, we introduce reducing triangulations, reduced walks and their properties, and reduction algorithms in Section~\ref{sec:valid-triangulations}. We then prove Theorem~\ref{thm:main-theorem} in a special case where $\graph$ is a (sparse) loop graph---each connected component of~$\graph$ contains a single vertex (Proposition~\ref{prop:untangling-a-loop-graph-general}, in Section~\ref{sec:untangle-loop-graph}).  The main idea, in the same spirit as Despré and Lazarus~\cite{dl-cginc-19}, is to use the fact that geodesics in hyperbolic surfaces cross minimally, and to prove that this remains true to some extent in our discrete analog.  In our case, roughly but not exactly, reducing all loops of a loop graph makes it weakly simple unless it cannot be untangled, so our algorithm eventually applies the weak simplicity test by Akitaya, Fulek, and T\'oth~\cite{aft-rweg-19}; but the proof of correctness involves putting a particular hyperbolic metric on the surface obtained by puncturing~$\surface$ and requires delicate arguments in the compactification of the universal cover of~$\surface$. Then, to solve the problem for an arbitrary graph~$\graph$, at a high level we contract a spanning forest of~$\graph$ and apply Proposition~\ref{prop:untangling-a-loop-graph-general} to the resulting loop graph, after removing the contractible loops and identifying the homotopic loops.  If the loop graph cannot be untangled, neither can the original graph.  Otherwise, it turns out that the graph can be untangled in~$\surface$ if and only if it is a weak embedding in (a neighborhood of) the loop graph, but proving this fact is rather subtle; we formalize this using the notion of \emph{factorization} (informally, the contraction of the graph, and the removals and identifications of loops in the resulting loop graph), and conclude the proof of Theorem~\ref{thm:main-theorem} by invoking once more the weak simplicity algorithm~\cite{aft-rweg-19}, all this in Section~\ref{sec:subdivision}.  A more efficient algorithm to compute a factorization, requiring tailored data structures to achieve certain homotopy tests efficiently, is deferred to Section~\ref{sec:contraction}.

After proving Theorem~\ref{thm:main-theorem}, we prove 
Theorem~\ref{thm:surf} for the case $g\ge2$, $b=0$ in Section~\ref{sec:untangling-closed-ggeq2}, essentially by converting our input embedded graph~$\embeddedgraph$ into a reducing triangulation.  In Section~\ref{sec:torus-and-boundary}, we prove our result for the remaining orientable surfaces, namely, the torus and the surfaces with non-empty boundary; while the overall strategy is the same, reducing triangulations are not needed anymore, which dramatically simplifies the algorithm.  Finally, in Section~\ref{sec:plane}, we prove our result on the plane with obstacles (Theorem~\ref{thm:plane}).

\section{Preliminaries}\label{sec:preliminaries}

\subsection{Graphs on surfaces, homotopies, and untangling}

In this paper, graphs are finite and undirected, but may have loops and parallel edges.  We use standard notions of graph theory, in particular the notion of \emphdef{walk} in a graph.  In a walk of length~$k$, $(v_0,e_0,v_1,e_1,\ldots,v_{k-1},e_{k-1},v_k)$, the first and last occurrences of vertices, $v_0$ and~$v_k$, are its \emphdef{endpoints}, and the other occurrences, $v_1,\ldots,v_{k-1}$, are the \emphdef{interior vertices}.  Vertices and edges may be repeated.  The two endpoints of a walk may coincide; a walk can even be reduced to a single vertex.  A \emphdef{closed walk} is similar to a walk, but the vertices are ordered cyclically instead of linearly; every vertex of a closed walk is an interior vertex.

We need some basic topology of surfaces~\cite{basic-topology,s-ctcgt-93}; we only provide the most basic definitions, sometimes only with an informal description.  All the surfaces we consider are connected and orientable, so we omit these adjectives in the sequel.  A compact \emphdef{surface}, possibly with boundary, is determined up to homeomorphism by its \emphdef{genus} (number of handles) and number of boundary components~\cite[Theorem~1.5]{basic-topology}.  We will occasionally consider \emphdef{non-compact surfaces}, which will be either obtained by removing finitely many points (\emphdef{punctures}) of a compact surface, or a universal cover, which are (usually) non-compact; see below.

A \emphdef{path} on a surface~$\surface$ is a continuous map $p:[0,1]\to\surface$; its \emphdef{endpoints} are $p(0)$ and~$p(1)$.  A \emphdef{loop} with \emphdef{basepoint} $b$ is a path whose both endpoints equal~$b$.  A path is \emphdef{simple} if it is one-to-one, except, of course, that $p(0)$ and~$p(1)$ coincide if $p$ is a loop.  A \emphdef{closed curve} is a continuous map $c:S^1\to\surface$, where $S^1=\cR/\cZ$ is the circle; it is \emphdef{simple} if it is one-to-one.  An \emphdef{arc} on a surface with boundary is a path that intersects the boundary precisely at its endpoints.  Paths and closed curves that differ only by their parameterizations are regarded as equal.

A \emphdef{drawing} of a graph~$\graph$ on a surface~$\surface$ maps every vertex of~$\graph$ to a point of~$\surface$, and every edge of~$\graph$ to a path with the appropriate endpoints.  An \emphdef{embedding} of~$\graph$ on~$\surface$ is a ``crossing-free'' drawing (vertices are mapped to pairwise distinct points, edges are mapped to simple, interior-disjoint paths, and the relative interior of each edge does not contain the image of a vertex). See Figure~\ref{fig:celldecomp}. The \emphdef{faces} of an embedding of~$\graph$ are the connected components of the complement of its image. A graph embedding is \emphdef{cellular} if every face is homeomorphic to an open disk, and a \emphdef{triangulation} if every face is homeomorphic to an open disk and is bounded by three sides of edges.  The \emphdef{rotation system} of an embedding of~$\graph$ is the cyclic ordering of the edges of~$\graph$ incident to each vertex in the embedding.  We use any of the numerous and standard data structures to represent combinatorial maps of cellular graph embeddings on orientable surfaces and move around quickly, e.g., the doubly-connected edge list, the halfedge data structure, or the gem representation~\cite{e-dgteg-03,k-ugpdd-99}.

\begin{figure}[!h]
\centering
\includegraphics[scale=1.3]{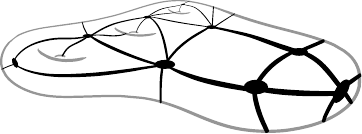}
\caption{A graph embedded on an orientable surface of genus three without boundary.}\label{fig:celldecomp}
\end{figure}

A \emphdef{homotopy} between two paths $p_0$ and~$p_1$ with the same endpoints is a continuous family of paths $(p_t)_{t\in[0,1]}$ with the same endpoints.  A \emphdef{(free) homotopy} between two closed curves $c_0$ and~$c_1$ is a continuous family of closed curves $(c_t)_{t\in[0,1]}$; this time, no point is required to be fixed.  A loop or closed curve is \emphdef{contractible} if it is homotopic to a constant loop or closed curve.  A \emphdef{homotopy} between two drawings of the same graph is a continuous family of  drawings between them; we emphasize that the vertices may move during the deformation.  A drawing of a graph on a surface can be \emphdef{untangled} if it is homotopic to an embedding.

Let $\embeddedgraph$ be a graph embedded in a surface $\surface$.  Consider also an abstract graph $\graph$.  A \emphdef{drawing}~$\varphi$ of $\graph$ in $\embeddedgraph$ is a drawing of~$\graph$ on~$\surface$ such that each edge of~$\graph$ is mapped to a walk in~$\embeddedgraph$ (possibly reduced to a single vertex).  The \emphdef{size} of $\varphi$ is the number of edges of $\graph$ plus the sum of the lengths of the walks $\varphi(e)$, for all edges $e$ of~$\graph$.

\subsection{Weak embeddings}\label{sec:weak-embeddings}

Let $\embeddedgraph$ be a graph embedded in a surface $\surface$, and let $\graph$ be an abstract graph.  A drawing $\varphi : \graph \to \embeddedgraph \subset \surface$ is a \emphdef{weak embedding} if there exist embeddings arbitrarily close to $\varphi$ (viewed as a drawing on~$\surface$), or equivalently but more formally if $\varphi$ is the limit of some sequence of embeddings $\psi : \graph \to \surface$ in the compact-open topology.  Akitaya, Fulek, and T\'oth~\cite{aft-rweg-19} provide an algorithm to decide whether such a $\varphi$ is a weak embedding.  We will heavily rely on this result, restated below, but we need some preparations.  As noted in~\cite[Section~1]{aft-rweg-19} the property for $\varphi$ to be a weak embedding does not depend on the precise embedding of~$\embeddedgraph$ on~$\surface$, but only on $\graph$, $\varphi$, and the abstract graph~$\embeddedgraph$ together with its rotation system.  Akitaya et al.\ formulate an alternative, more combinatorial definition of weak embeddings (in terms of \emph{strip systems}), and we present a trivially equivalent variation more suitable to our needs, the \emph{patch system}.

Intuitively, the patch system of a graph~$\embeddedgraph$ boils down to the dual graph of~$\embeddedgraph$, if $\embeddedgraph$ is cellularly embedded; but we cannot make this assumption in general.  Formally, the patch system of the graph~$\embeddedgraph$ (assumed here without loops or multiple edges, just for clarity of exposition, but this restriction can be dispensed of easily) equipped with its rotation system is defined as follows.  Consider an oriented closed disk~$D_v$ for each vertex $v$ of~$\embeddedgraph$; consider pairwise disjoint closed segments along the boundary of~$D_v$, one segment per edge incident to~$v$, ordered along the boundary of~$D_v$ as prescribed by the rotation system.  Now, for every edge~$uv$ of~$\embeddedgraph$, identify the corresponding segments of the disks $D_u$ and~$D_v$ in a way that respects their orientations.  These identifications result in a surface with boundary~$\patchsystem$, the \emphdef{patch system} of~$\embeddedgraph$.   The intersection of~$D_u$ and~$D_v$ is either a simple arc~$a(uv)$ in~$\patchsystem$, if $u$ and~$v$ are adjacent, or empty otherwise.  If $\embeddedgraph$ is embedded in a surface $\surface$ and inherits its rotation system from the embedding, then its patch system $\patchsystem$ can be thought of as a neighborhood of $\embeddedgraph$ in $\surface$, but its definition relies on combinatorial data only.

It follows from the considerations by Akitaya et al.~\cite[Section~1]{aft-rweg-19} that a map $\varphi : \graph \to \embeddedgraph \subset \surface$ is a weak embedding if and only if there exists an embedding $\psi: \graph \to \patchsystem$ that satisfies the following: (1) $\psi$ maps each vertex $v$ of $\graph$ inside~$D_v$; (2) $\psi$ maps each edge $e$ of~$\graph$, traversing edges $e_1,\ldots,e_k$ of~$\embeddedgraph$ in this order, to a path in~$\patchsystem$ crossing the arcs exactly in the order $a(e_1),\ldots,a(e_k)$.  We say that such an embedding $\psi : \graph \to \patchsystem$ \emphdef{approximates}~$\varphi$.  Finally, here is the result that we will use:
\begin{theorem}[Akitaya, Fulek, and Tóth~\cite{aft-rweg-19}]\label{thm:toth-et-al}
  Given an abstract graph~$\graph$, an embedded graph~$\embeddedgraph$, and a drawing $\varphi : \graph\to\embeddedgraph$ of size $n$, we can compute an embedding approximating~$\varphi$, or correctly report that no such embedding exists, in $O(n \log n)$ time.
\end{theorem}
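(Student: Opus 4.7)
The plan is to reduce the weak embedding test to a constrained planarity problem inside the patch system, decomposing the question patch by patch and gluing the local solutions along the arcs $a(uv)$. Recall that $\patchsystem$ is built by gluing oriented disks $D_v$ along segments labelled by arcs $a(uv)$; an approximating embedding $\psi$ sends the vertices of $\graph$ at~$v$ into the interior of~$D_v$ and sends each edge of~$\graph$ to a path that crosses the arcs $a(e_1),\ldots,a(e_k)$ in the prescribed order. Since $\patchsystem$ is a surface \emph{with boundary} whose closure of each patch is a disk, deciding whether such a $\psi$ exists is naturally a planarity question with boundary constraints.

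First I would do the local analysis. For every vertex~$v$ of $\embeddedgraph$, let $\graph_v$ be obtained from the subgraph induced by $\varphi^{-1}(v)$ by adding, for each traversal of~$D_v$ by an edge of~$\graph$, a pendant edge representing the stub that must land on a specific arc $a(uv)\subset\partial D_v$. The local problem is: can $\graph_v$ be embedded in the closed disk $D_v$ so that the stubs appear on the boundary in a cyclic order compatible with the partition of $\partial D_v$ into arcs (each stub lying on its prescribed arc, but the order within an arc being free)? This is a classical planarity-with-pinned-clusters test, solvable by a single call to a planar embedding algorithm that records, through an SPQR- or PC-tree, the set of admissible cyclic orders of the stubs along $\partial D_v$.

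Second, I would handle the global matching. Along each arc~$a(uv)$, the two patches $D_u$ and $D_v$ must agree on the sequence of stubs (read in opposite directions from each side), because every edge of~$\graph$ crossing the arc produces two matching stubs. I would therefore build an auxiliary instance in which the admissible cyclic orders coming from the two local PC-trees are intersected along each arc. If any intersection is empty, no weak embedding exists; otherwise, a globally consistent choice of embeddings in all patches yields the desired~$\psi$. Following the approach made standard by constrained planarity testing (in the spirit of Booth--Lueker and its surface-aware refinements), these intersections can be propagated through the tree of patches in near-linear time. Combining all pieces gives one planarity-like test on an auxiliary structure of size $O(|\varphi|)$, run in $O(|\varphi|\log |\varphi|)$ time.

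The main obstacle, and what genuinely requires the work of Akitaya, Fulek, and T\'oth, is the second step: making the cyclic-order constraints compose efficiently and correctly across arcs when the patches are glued into a non-planar surface. A naive propagation may force exponentially many orderings to be tested, and the constraint that the stub sequence on one side of an arc equals the reverse on the other side must be enforced consistently along cycles of patches. Circumventing this via a carefully chosen PC-tree (or SPQR-tree) representation, with a bookkeeping that supports union--find--style merges of admissible orderings, is the delicate heart of the proof, and is what produces the logarithmic factor in the final running time.
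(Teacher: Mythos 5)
This statement is not proved in the paper at all: it is quoted verbatim (up to reformulation) from Akitaya, Fulek, and T\'oth~\cite{aft-rweg-19} and used as a black box. The only original content the paper adds is the parenthetical remark that the ``simplicial'' hypothesis of the original theorem can be ensured by subdividing~$\graph$, and that the running time $O(m\log m)$ of the original statement becomes $O(|\varphi|\log|\varphi|)$ in this setting. Your proposal does not address either of these two points, and instead attempts to sketch the algorithm itself.

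As a proof of the theorem, the proposal has a genuine gap, and you essentially concede it yourself: the ``second step'' --- propagating the admissible cyclic orders of stubs consistently across all arcs --- is exactly the content of the cited result, and you leave it at the level of ``this is the delicate heart of the proof.'' Moreover, the route you sketch is not the one that works. The patches of $\patchsystem$ do not form a tree (the graph $\embeddedgraph$ is cellularly embedded on a surface of positive genus, so the dual structure has many cycles), so PC-tree constraints cannot simply be ``propagated through the tree of patches''; intersecting sets of admissible cyclic orders along arcs of a structure with cycles is precisely the kind of constrained-planarity problem that is not known to reduce to a single Booth--Lueker-style pass. The actual algorithm of Akitaya, Fulek, and T\'oth proceeds differently: it repeatedly applies local simplification operations (cluster expansions and pipe contractions) whose termination and near-linear total cost are controlled by a potential-function argument, together with dedicated data structures; the $\log$ factor comes from that machinery, not from a union--find merge of PC-trees. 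So the proposal identifies the right combinatorial reformulation (which is also how the paper sets up patch systems in Section~\ref{sec:weak-embeddings}) but does not supply the argument that makes the theorem true.
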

(We note that the authors require that $\varphi$ be ``simplicial'', but we can assume this trivially by subdividing the graph~$\graph$.  The running time, in their theorem $O(m\log m)$ where $m$ is the number of edges, becomes $O(n \log n)$.)

\subsection{Covering spaces, lifts, and a property on homotopy}

Here we recall basic properties of covering spaces; see, e.g.,~\cite[Section~10.4]{basic-topology}.  A \emphdef{covering space} of a surface~$\surface$ is a (usually non-compact) surface~$\hat\surface$ equipped with a \emphdef{projection} $\pi:\hat\surface\to\surface$ that is a local homeomorphism.  The preimages of a point under~$\pi$ are its \emphdef{lifts}.  A \emphdef{lift} of a path~$p$ in~$\surface$ is a path $\hat p$ on~$\hat\surface$ such that $\pi\circ\hat p=p$.  Given any path $p$ and any lift~$\hat b$ of~$p(0)$, there is a unique \emphdef{lift} $\hat p$, a path in~$\tilde\surface$ starting at~$\hat b$.  A path~$p$ self-intersects if and only if either a lift of~$p$ self-intersects, or two lifts of~$p$ intersect.  One can similarly lift \emphdef{bi-infinite paths} $p:\cR\to\surface$, and even homotopies; see, e.g., ~\cite[10.11]{basic-topology}.

Closed curves have no basepoint, and thus typically lift to infinite paths in covering spaces.  Given a closed curve $c:S^1=\cR/\cZ\to\surface$, let $c'$ be the infinite path that ``wraps around'' $c$ infinitely many times, namely $c':\cR\to\surface$ is such that $c'(t)=c(t \text{ mod } 1)$.  A \emphdef{lift} of the closed curve~$c$ is, by definition, a lift of the (bi-infinite) path~$c'$.

We will mostly use universal covers:  The \emphdef{universal cover} $\tilde\surface$ is a covering space in which every loop is contractible.  A loop in~$\surface$ is contractible if and only if it lifts to a loop in~$\tilde\surface$.  If $\surface$ has positive genus and no boundary, then $\tilde\surface$ is homeomorphic to the open disk.

A closed curve is \emphdef{primitive} if it is not homotopic to the $k$th power of some other closed curve for some $k \geq 2$ (a closed curve iterated $k$ times).  For future reference, we quote (see Farb and Margalit~{\cite[Proposition~1.4]{fm-pmcg-12}, or Epstein~\cite[Theorem~4.2]{e-c2mi-66}):

\begin{lemma}\label{L:primitive}
  In an orientable surface possibly with boundary, every simple non-contractible closed curve is primitive.
\end{lemma}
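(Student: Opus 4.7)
The plan is to argue by contradiction using an annular covering space of~$\surface$. Suppose $c$ is simple and non-contractible yet freely homotopic to~$d^k$ for some closed curve~$d$ and integer $k\ge 2$. Free homotopy of closed curves corresponds to conjugacy in $\pi_1(\surface,x_0)$, so after replacing~$d$ by a conjugate I may assume that $[c]=[d]^k$ holds in $\pi_1(\surface,x_0)$ for some basepoint $x_0$ on~$c$. The fundamental group of an orientable surface is torsion-free (it is trivial, $\cZ^2$, a free group, or a closed orientable surface group), and since $[d]^k=[c]$ is non-trivial, $[d]$ has infinite order; hence $H:=\langle[d]\rangle\cong\cZ$.

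Next, let $p\colon\hat\surface\to\surface$ be the covering corresponding to~$H$, so that $\pi_1(\hat\surface)\cong\cZ$. A connected orientable surface with cyclic fundamental group is ``annulus-like'': it deformation retracts onto a core simple closed curve, and every closed curve in it has a well-defined winding number around that core. Because $[c]\in H$, the lift $\hat c$ of~$c$ starting at a lift of~$x_0$ is itself a closed curve; it is freely homotopic in~$\hat\surface$ to the $k$-th power of a generator of $\pi_1(\hat\surface)$, so its winding number equals~$\pm k$. Moreover, $\hat c$ is automatically simple in~$\hat\surface$: if $\hat c(s)=\hat c(t)$, then $c(s)=p(\hat c(s))=p(\hat c(t))=c(t)$, which forces $s=t$ by simplicity of~$c$.

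To conclude, I would invoke the classical fact that a simple closed curve in an annulus has winding number $0$ or $\pm1$. Winding number~$0$ would make $\hat c$ contractible in~$\hat\surface$, hence $c=p\circ\hat c$ contractible in~$\surface$, contradicting the hypothesis; so the winding number is $\pm1$, contradicting $|k|\ge 2$.

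The step I expect to be the most delicate is confirming the ``annulus-like'' structure of~$\hat\surface$ in full generality, since the cover is typically non-compact and $\surface$ is allowed to have boundary; this follows from the classification of (possibly infinite-type) orientable surfaces with cyclic $\pi_1$ but needs to be cited carefully, which is why quoting Farb--Margalit or Epstein directly is attractive. Once this is in hand, both the transfer of simplicity from~$c$ to~$\hat c$ and the winding-number contradiction are immediate.
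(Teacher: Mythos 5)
Your argument is correct, but note that the paper does not actually prove this lemma: it is stated as a quotation, with the proof delegated to Farb and Margalit~\cite[Proposition~1.4]{fm-pmcg-12} and Epstein~\cite[Theorem~4.2]{e-c2mi-66}. What you supply is a valid self-contained proof via the cover associated to the cyclic subgroup $\langle[d]\rangle$: torsion-freeness of surface groups gives $H\cong\cZ$, the corresponding cover is an orientable surface with infinite cyclic fundamental group (hence homotopy equivalent to, and with interior homeomorphic to, an annulus), the lift $\hat c$ closes up because $[c]\in H$ and inherits simplicity from $c$, and the classical fact that a simple closed curve in an annulus has winding number $0$ or $\pm1$ rules out $\pm k$ with $k\ge2$ (winding number $0$ being excluded since $c$ is non-contractible). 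The step you rightly flag as delicate --- that an orientable surface with $\pi_1\cong\cZ$ is annulus-like even when non-compact or with boundary --- does hold (such a surface has a compact core that is an annulus, and one can isotope $\hat c$ off the boundary), but it is exactly the kind of point one wants a citation for, which is presumably why the authors chose to quote the lemma rather than prove it. Incidentally, the annular cover you use is the same device the paper deploys later in the proof of Proposition~\ref{prop:uniqueness-reduced-closed-walks}, so your route is very much in the spirit of the paper even though the paper offers no proof to compare against.
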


\subsection{Hyperbolic surfaces and properties from Riemannian geometry}\label{sec:hyperbolic-surfaces}

We will use standard notions of hyperbolic geometry, described succinctly in Farb and Margalit~\cite[Chapter~1]{fm-pmcg-12}. See also Cannon, Floyd, Kenyon, and Parry~\cite{cfkp-hg-97} for a thorough introduction to the properties of the hyperbolic plane.

One model of the \emphdef{hyperbolic plane}~$\cH$, the Poincaré model, is the unit open disk endowed with a specific Riemannian metric of constant curvature $-1$ defined by $ds^2=4(dx^2+dy^2)/(1-x^2-y^2))^2$ (intuitively, it is easier to move around at the center of the disk than close to its boundary).  In this model, the maximal shortest paths are the subsets of circles (or, in the limit case, lines) touching the boundary of the disk orthogonally.

A \emphdef{hyperbolic surface} is a metric surface locally isometric to the hyperbolic plane.  A compact topological surface without boundary can be endowed with a hyperbolic metric if and only if its genus is at least two.  The hyperbolic metric on such a surface lifts to its universal cover, which becomes isometric to the hyperbolic plane~$\cH$.

We need a few definitions and properties from Riemannian geometry that we will use in the realm of hyperbolic surfaces:  A \emphdef{crossing} between two paths is a point where they intersect in their relative interiors, and actually cross in a transverse manner.  A \emphdef{geodesic} is a path that is locally shortest.  Given any tangent vector~$v$ at a point~$p$, there is a unique maximal geodesic with tangent vector~$v$ at~$p$.  As a consequence, if any two geodesics intersect at~$p$, they are either crossing at~$p$, or they are tangent, which implies that they \emphdef{overlap}: They are both part of the unique maximal geodesic passing through~$p$ with that common tangent vector.

Some non-compact hyperbolic surfaces can be constructed from \emphdef{ideal} hyperbolic polygons, the sides of which are geodesics of infinite length. Pair the sides of a collection of ideal hyperbolic polygons, and identify the two sides in each pair in a way that respects the orientations of the polygons. The result is a hyperbolic surface whose topological type is that of a surface obtained by \emphdef{puncturing} (removing points from) a compact surface without boundary, but punctures are relegated to infinity.

Both kinds of hyperbolic surfaces presented above enjoy the following specific properties: (1) There is a unique geodesic path homotopic to a given path; (2) there is a unique geodesic closed curve freely homotopic to a given closed curve, provided that curve is non-contractible and not homotopic to a neighborhood of a puncture.  In both cases, the geodesic is the unique shortest path or closed curve in its (free) homotopy class.

\section{Reducing triangulations and reduced walks}\label{sec:valid-triangulations}

In this section, we introduce \emph{reducing triangulations} and \emph{reduced walks} and their properties.   We start by defining these objects, then show the existence of a unique (strongly) reduced (closed) walk in any given homotopy class and any non-trivial free homotopy class, and finally show how to compute them in linear time.

\subsection{Reducing triangulations, turns, and reduced walks}\label{sec:trails}

A triangulation of a surface without boundary is \emphdef{6-reducing} if its dual is bipartite (each triangle is colored either red or blue, and adjacent triangles have different colors) and its vertices all have degree at least six. It is \emphdef{8-reducing} if its dual is bipartite and its vertices all have degree at least eight. \textbf{In this paper we use the term \emph{reducing triangulation} for 8-reducing triangulations.}  In this section and in Section~\ref{sec:reduced-walks-unicity}, however, for possible future use~\cite{verdiere2025discrete}, we state some results in the wider class of 6-reducing triangulations\footnote{In the conference version of this paper, we used the term \emph{reducing triangulation} for 8-reducing triangulations, like in this version, but without making the distinction between 6-reducing and 8-reducing, thus obtaining results on 8-reducing triangulations only.}. A straightforward application of Euler's formula shows that any reducing triangulation of any surface of genus $g \geq 2$ without boundary has size $O(g)$  (this is not the case for 6-reducing triangulations). It is easily seen that reducing triangulations with at most two vertices exist for every such surface (see Section~\ref{sec:untangling-closed-ggeq2}, and in particular Figure~\ref{fig:valid-tr-from-canonical-sc}).  We will use reducing triangulations for these surfaces, and also for their covering spaces; indeed, note that any reducing triangulation on a surface naturally lifts to a reducing triangulation in its covering spaces, by lifting the colors. 

Let $W$ be a walk (closed or not) in a 6-reducing triangulation~$\rtriangulation$.  Assume that a subwalk of~$W$ traverses the directed edge~$e$, arrives at vertex~$v$, and from there traverses the directed edge~$e'$.  Vertex~$v$ of~$W$ makes a \emphdef{$k$-turn}, $k\ge0$, if the walk of length two composed of edges $e$ and~$e'$, in this order, leaves exactly $k$~triangles of~$\rtriangulation$ to its left in the cyclic ordering around~$v$ between $e$ and~$e'$.  Similarly (and not exclusively), it makes a \emphdef{$-k$-turn}, $k\ge1$, if that walk leaves exactly $k$~triangles of~$\rtriangulation$ to its right.  When a turn is not a 0-turn, the notation is ambiguous because it can be represented either by a positive integer or by a negative integer (whose absolute values sum up to the degree of the vertex in~$\rtriangulation$). We always use an integer between $-3$ and~$3$ if possible.  For a degree-six vertex,  we prefer $3$ over~$-3$.  For turns that cannot be represented by any integer between $-3$ and~$3$, we choose the positive integer.

We sometimes need to be more precise and refine the notation: For $k\in\cZ$, a \emphdef{$k_b$-turn}, respectively a \emphdef{$k_r$-turn}, is a $k$-turn such that $e$ (with the previous notation) has a blue, respectively red, triangle to its left.  See Figures~\ref{fig:some-turns} and~\ref{fig:turns-in-context}.

\begin{figure}
\centering
\includegraphics[scale=1]{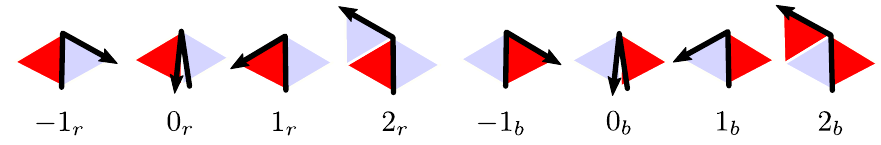}
\caption{Some of the turns that a walk can make in a 6-reducing triangulation.}\label{fig:some-turns}
\end{figure}

\begin{figure}
\centering
\includegraphics[scale=0.8]{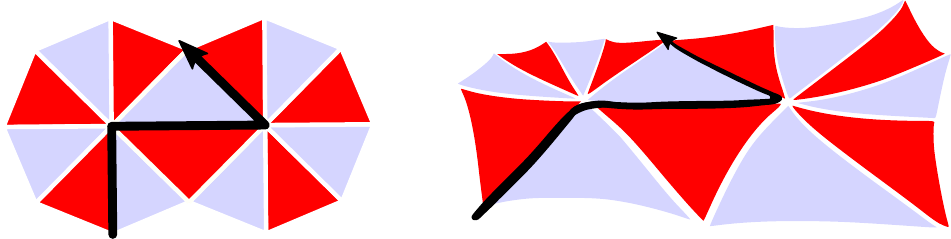}
\caption{In a 6-reducing triangulation, a portion of a walk that makes a $-2_r$-turn followed by a $1_b$-turn (both are bad turns).}\label{fig:turns-in-context}
\end{figure}

We call \emphdef{bad turn} any 0-turn, 1-turn, $-1$-turn, $2_r$-turn or $-2_r$-turn. A walk in $\rtriangulation$ is \emphdef{reduced} if none of its interior vertices makes a bad turn.  A closed walk is \emphdef{reduced} if none of its vertices makes a bad turn, and it is \emphdef{strongly} reduced if moreover not all of them make a $3_r$-turn, and not all of them make a $-3_b$-turn\footnote{Among the reduced closed walks, we will mostly consider those which are \emph{strongly} reduced. In fact, in the conference version of this paper, we did not make the distinction: we considered only strongly reduced closed walks, which we simply called reduced closed walks.}.  Intuitively, reduced walks are discrete geodesics in the triangulation~$\rtriangulation$ where all triangles are equilateral:  A reduced walk leaves an angle at least~$\pi$ on both sides at each interior vertex, except when the vertex makes a $2_b$-turn or a $-2_b$-turn, which corresponds to an angle of~$2\pi/3$ on one side.  The bipartiteness of the triangulation then ``breaks ties'' for determining the geodesic.  We emphasize that the notion of reduced walks requires an orientation of the surface.

Here are a few immediate but crucial properties that we will use repeatedly.  Any subwalk of a reduced walk (closed or not) is also reduced.  The reversal of any reduced walk (closed or not) is also reduced, because reversing a walk exchanges $2_r$-turns with $-2_r$-turns, and $3_r$-turns with $-3_b$-turns. The same properties hold for strongly reduced closed walks.

The \emphdef{turn sequence} of a (closed) walk is the list of turns made by the walk at its interior vertices; this is a cyclic sequence if the walk is closed.  We use some straightforward notations for turn sequences:  Exponents denote iterations, stars denote arbitrary nonnegative integers, and vertical bars denote ``or''.  For example, $23^*4$ denotes a 2 followed by a nonnegative number of~3s, followed by a~4.  $(23^*4)^*$ denotes a nonnegative number of concatenations of patterns of that form.  $(2|4)$ denotes either a 2 or a~4.

\subsection{Uniqueness of reduced walks}\label{sec:reduced-walks-unicity}

In this section, and in this section only, we consider the wider class of 6-reducing triangulations. We prove:
\begin{proposition}\label{prop:trail-homotopy}
In a 6-reducing triangulation~$\rtriangulation$, any two homotopic reduced walks are equal.
\end{proposition}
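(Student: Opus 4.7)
The overall plan is to pass to the universal cover $\tilde{\rtriangulation}$. A reducing triangulation naturally lifts to one on $\tilde{\rtriangulation}$ by lifting the triangle colors, and reducedness of a walk is purely local (determined by turns at interior vertices), so the lift of a reduced walk is reduced. Two homotopic reduced walks $W_1, W_2$ in $\rtriangulation$ sharing endpoints lift, from a common lift of the starting vertex, to two reduced walks $\tilde{W}_1, \tilde{W}_2$ in $\tilde{\rtriangulation}$ with the same endpoints. It therefore suffices to prove the stronger claim: in $\tilde{\rtriangulation}$, any two reduced walks sharing endpoints are equal.

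Assuming $\tilde{W}_1 \neq \tilde{W}_2$, I would let $a$ be the last common prefix vertex before the walks diverge, $b$ the first common vertex after the divergence, and let $P_1 \subset \tilde{W}_1$, $P_2 \subset \tilde{W}_2$ be the arcs from $a$ to $b$, which share only these endpoints by construction. A preliminary sub-argument in the same spirit shows that reduced walks in $\tilde{\rtriangulation}$ are in fact \emph{simple}: a self-intersection would produce a closed reduced sub-walk bounding a disk, which the counting argument below rules out. Hence $P_1$ and $P_2$ are simple, and since $\tilde{\surface}$ is homeomorphic to the open disk (genus $\geq 2$, no boundary), $P_1 \cup P_2$ is a Jordan curve bounding a topological disk $D$ in $\tilde{\rtriangulation}$.

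The main step is a two-sided bound on $\sum_{v \in \partial D} \theta_v$, where $\theta_v$ denotes the number of triangles of $D$ incident to a boundary vertex $v$. Euler's formula combined with the incidence identity $\sum_{v\in\partial D}\theta_v + \sum_{v\in\mathrm{int}(D)} d_v = 3F$ and the defining bound $d_v \geq 8$ yields the combinatorial Gauss--Bonnet estimate
\[
\sum_{v \in \partial D} \theta_v \;\leq\; 3 V_{\mathrm{bd}} - 6 - 2 V_{\mathrm{int}}.
\]
For the matching lower bound I would exploit reducedness together with the bipartite dual. The forbidden $0$-, $\pm 1$-, and $\pm 2_r$-turns force $\theta_v \geq 2$ at every interior walk vertex, and $\theta_v = 2$ is permitted only when the first (incoming-side) $D$-triangle at $v$ is blue. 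Tracking the colors $\gamma_i \in \{R,B\}$ of $D$-triangles adjacent to successive boundary edges, the color flips across $v$ exactly when $\theta_v$ is even. A careful case analysis distinguishing the orientation of each walk relative to $D$ then shows that a tight ($\theta_v = 2$) vertex on \emph{either} $P_1$ or $P_2$ always contributes a $B \to R$ color transition, whereas an interior walk vertex carrying an $R \to B$ transition must, by the opposite-color version of reducedness, satisfy $\theta_v \geq 4$ (even and $\geq 3$).

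The hard part will be the final bookkeeping. Since the total numbers of $B \to R$ and $R \to B$ transitions around the closed boundary coincide, and each of the two bigon endpoints $a,b$ can carry at most one transition of each type, partitioning interior walk vertices into tight, non-tight transitional, and non-transitional classes and summing $\theta_v$ category-by-category yields
\[
\sum_{v \in \partial D} \theta_v \;\geq\; 3V_{\mathrm{bd}} - 6 + \theta_a + \theta_b + 2V_{\mathrm{int}} - \epsilon,
\]
where $\epsilon \leq 2$ counts the endpoints that themselves carry an $R \to B$ transition. Comparing with the Gauss--Bonnet upper bound squeezes the configuration into $\theta_a = \theta_b = 1$, $V_{\mathrm{int}} = 0$, and $\epsilon = 2$. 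But a boundary vertex with $\theta_v = 1$ has only one incident $D$-triangle and admits no color transition, contradicting $\epsilon = 2$. This contradiction proves the proposition.
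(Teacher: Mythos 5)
Your proposal is correct and follows essentially the same route as the paper's proof: lift to the universal cover, extract a bigon (or a monogon, to establish simplicity) bounded by reduced subwalks with at most two exceptional vertices, bound $\sum_{v\in\partial D}\theta_v$ from above by the discrete Gauss--Bonnet/Euler count, and invoke bipartiteness of the dual to show the $2$-turns on the disk side cannot all have the permitted color. The only differences are bookkeeping --- you balance $B\to R$ against $R\to B$ color transitions globally around the boundary where the paper locates a forbidden $2_r$-turn inside a $23^*2$ subword avoiding the exceptional vertices --- plus a harmless slip: the $+2V_{\mathrm{int}}$ term in your displayed lower bound is not produced by the transition count (it belongs with the Gauss--Bonnet upper bound), though the final squeeze to $\theta_a=\theta_b=1$, $V_{\mathrm{int}}=0$, $\epsilon=2$ goes through identically without it.
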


Intuitively, reduced walks are geodesics, and the proof of Proposition~\ref{prop:trail-homotopy} goes by showing that any two distinct homotopic reduced walks would form a monogon or a bigon (a disk bounded by one or two subwalks, in the universal cover), and that this is impossible.  It relies on the following lemmas.

\begin{lemma}\label{lem:use-bipartition}
  In a 6-reducing triangulation~$\rtriangulation$, any walk whose turn sequence is of the form $23^*2$ contains a $2_r$-turn.   Any walk in~$\rtriangulation$ whose turn sequence is of the form $23^*4$ or $43^*2$ contains a $2_r$-turn or a $4_r$-turn.
\end{lemma}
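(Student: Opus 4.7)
\medskip
\noindent\textbf{Proof plan for Lemma~\ref{lem:use-bipartition}.}
The plan is to prove both claims by tracking, along the walk, the color of the triangle on the walker's left of the current directed edge, and observing that the parity of $k$ controls whether this color is preserved or flipped at each $k$-turn.

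\medskip
\emph{Setup.} Since the dual graph of $\rtriangulation$ is bipartite, each edge of $\rtriangulation$ bounds exactly one blue and one red triangle. For each directed edge~$f$ traversed by the walk, let $L(f)\in\{b,r\}$ denote the color of the triangle lying on the walker's left as she traverses~$f$; this is well-defined because $\surface$ is oriented. By the definition of the $b/r$ subscript in Section~\ref{sec:trails}, a turn at a vertex~$v$ with incoming edge~$e$ is a $k_b$-turn (respectively a $k_r$-turn) precisely when $L(e)=b$ (respectively $L(e)=r$).

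\medskip
\emph{Parity rule.} I claim that at a $k$-turn with incoming edge~$e$ and outgoing edge~$e'$, one has $L(e')=L(e)$ if $k$ is odd, and $L(e')\neq L(e)$ if $k$ is even. The $k$ triangles on the left of the walk at~$v$ form a cyclically consecutive run of triangles around~$v$; adjacent triangles of this run share an edge and hence have different colors (bipartiteness of the dual), so the colors alternate along the run. The two extremal triangles of the run are precisely the one immediately on the left of the directed edge~$e$ at~$v$ (of color~$L(e)$) and the one immediately on the left of the directed edge~$e'$ at~$v$ (of color~$L(e')$), so these extremal colors agree iff $k$ is odd. In particular, $3$-turns preserve~$L$, whereas $2$-turns and $4$-turns flip it.

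\medskip
\emph{Concluding contradiction.} Write the directed edges of the walk as $f_1,\ldots,f_{m+1}$ and the turns as $T_1,\ldots,T_m$, so that $T_i$ has incoming edge~$f_i$ and outgoing edge~$f_{i+1}$. Suppose first that the turn sequence is $23^*2$ and that the walk contains no $2_r$-turn; then both endpoint $2$-turns must be $2_b$-turns, so $L(f_1)=b$ and $L(f_m)=b$. However, the parity rule at $T_1$ (a $2$-turn, even) forces $L(f_2)=r$, and each subsequent $3$-turn among $T_2,\ldots,T_{m-1}$ preserves the color, so $L(f_2)=L(f_3)=\cdots=L(f_m)=r$, contradicting $L(f_m)=b$. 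Hence a $2_r$-turn must occur. The arguments for $23^*4$ and $43^*2$ are entirely analogous: in both cases the first and last turns have even $k$ (namely $2$ or $4$) and hence flip $L$, while the intermediate $3$-turns preserve it, so assuming all of them are $b$-colored forces $L(f_m)$ to be simultaneously $b$ and $r$, which is impossible. The only real obstacle is setting up the parity rule carefully with respect to the orientation conventions of Section~\ref{sec:trails}; once that is in place, the remainder is a short bookkeeping contradiction.
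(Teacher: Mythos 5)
Your proof is correct and follows exactly the paper's argument: the paper's one-line proof observes that, by bipartiteness of the dual, the color of the triangle to the left of the walk flips at a 2-turn or 4-turn and is preserved at a 3-turn, which is precisely your parity rule. Your version merely spells out the bookkeeping in more detail.
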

\begin{proof}
  By bipartiteness of the 6-reducing triangulation~$\rtriangulation$, the color of the triangle of $\rtriangulation$ located to the left of the walk changes when passing over an interior vertex that makes a 2-turn or 4-turn, and not when the interior vertex makes a 3-turn.
\end{proof}

\begin{lemma}\label{lem:gauss-bonnet}
In a 6-reducing triangulation~$\rtriangulation$, let $C$ be a simple closed walk that bounds a (non-empty) disk to its left. For every $k\ge1$, let $m_k$ be the number of vertices of~$C$ at which $C$ has exactly $k$~triangles to its left. Then $2 m_1 + m_2 \geq 6 + \sum_{k\ge4}m_k$.
\end{lemma}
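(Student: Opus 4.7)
The plan is to run a discrete Gauss--Bonnet style argument on the closed topological disk~$D$ that~$C$ bounds to its left. First, since $C$ is a simple closed walk, the triangles of~$\rtriangulation$ contained in~$D$, together with their edges and vertices, form a genuine triangulation of~$D$ whose boundary cycle is exactly~$C$. In particular, the number of boundary vertices and the number of boundary edges of this triangulation both equal $|C|=\sum_{k\ge1}m_k$, and each vertex of~$C$ is visited only once, so the quantities~$m_k$ unambiguously record the angle data at distinct boundary vertices.

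Next, denote by~$V_i$ the number of interior vertices and by~$F$ the number of triangles of this induced triangulation of~$D$. I would apply Euler's formula $V-E+F=1$ for the disk, together with the standard double-count $3F=2E_i+|C|$ (each triangle has three sides, interior edges being shared between two triangles and boundary edges belonging to exactly one triangle inside~$D$), to derive the identity
$$F \;=\; 2V_i + |C| - 2.$$

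The main step is to count triangle corners in two ways. On the one hand, the total number of corners is $3F$. On the other hand, at each interior vertex at least~$8$ corners meet, because $\rtriangulation$ is a reducing triangulation; and at each boundary vertex that has exactly $k$ triangles to its left in~$D$, exactly $k$ corners lie in~$D$. Hence
$$3F \;\ge\; 8\,V_i + \sum_{k\ge1} k\,m_k.$$
Substituting the expression for~$F$ and using $|C|=\sum_k m_k$ yields $6V_i + 3\sum_k m_k - 6 \ge 8V_i + \sum_k k\,m_k$, that is,
$$\sum_{k\ge1}(3-k)\,m_k \;\ge\; 6 + 2V_i \;\ge\; 6.$$
Rearranging gives $2m_1 + m_2 \ge 6 + \sum_{k\ge4}(k-3)\,m_k$, and since $k-3\ge 1$ for $k\ge 4$ this implies the announced bound $2m_1 + m_2 \ge 6 + \sum_{k\ge4} m_k$.

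No serious obstacle is expected: this is the classical combinatorial Gauss--Bonnet argument, and only the degree bound at interior vertices of a reducing triangulation plays a role (the bipartiteness of the dual is not needed here). The one point to keep in mind is the simplicity hypothesis on~$C$, which guarantees that the induced cell structure on~$D$ is really a triangulation with boundary combinatorially equal to~$C$; without it, the counting of boundary angles via the $m_k$ could double count at repeated vertices.
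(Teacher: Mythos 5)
Your proof is correct and is essentially the paper's argument: the paper packages the same computation as a discharging scheme (weights $6$ on vertices and faces, $-6$ on edges, yielding $\sum_v\bigl(6-3\deg(v)+2\deg'(v)\bigr)=6$ and then $\sum_{k\ge1}(3-k)m_k\ge 6$), which is exactly your Euler-formula-plus-corner-count derivation in different clothing. Your closing remarks about the role of simplicity and the irrelevance of dual bipartiteness here are also consistent with the paper.
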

The proof can be viewed as a consequence of a discrete version of the Gauss--Bonnet theorem~\cite[Section 2.3]{ew-tcsr-13}:
\begin{proof}
Let $D$ be the restriction of the 6-reducing triangulation to the closed disk bounded by $C$. Consider the following discharging argument. Give initial weight $6$ to each vertex and each triangle of $D$, and weight $-6$ to each edge of $D$. By Euler's formula, the weights initially attributed sum up to~$6$. Discharge as follows. For each incidence between a vertex $v$ and an edge $e$, transfer $3$ from $v$ to $e$. For each incidence between a vertex $v$ and a triangle~$t$, transfer $2$ from $t$ to $v$.  Now, edges and triangles all have weight $0$, while every vertex~$v$ has weight $\kappa(v) := 6 - 3 \deg(v) + 2 \deg'(v)$ where $\deg(v)$ and $\deg'(v)$ denote respectively the number of edge incidences and triangle incidences of $v$ in~$D$. We proved $6 = \sum_v\kappa(v)$.

Every vertex $v$ that lies in the interior of $D$ (not on $C$ itself) satisfies $\deg(v) = \deg'(v)$ and thus $\kappa(v) = 6 - \deg(v)\le0$, since each vertex of~$\rtriangulation$ has degree at least six.  Every vertex~$v$ that lies on $C$ satisfies $\deg(v) = \deg'(v) + 1$ and thus $\kappa(v) = 3 - \deg'(v)$.  Thus, we have $6=\sum_v\kappa(v)\le\sum_{k\ge1}(3-k)m_k$, implying the result.
\end{proof}

\begin{lemma}\label{L:bad left turns}
In a 6-reducing triangulation~$\rtriangulation$, let $C$ be a simple closed walk that bounds a (non-empty) disk to its left. There are at least three vertices at which $C$ makes a $1$-turn or a $2_r$-turn.
\end{lemma}

\begin{proof}
Let $S$ be the set of vertices of~$C$ that make a $1$-turn or a $2_r$-turn. By contradiction, assume $|S|\le2$.   Using the notations of Lemma~\ref{lem:gauss-bonnet}, we have $m_1\le|S|\le2$; indeed, any vertex that makes a 1-turn belongs to~$S$. We consider the subwalks with turn sequence of the form $2(1|3)^*2$. These subwalks may only share their first and last edges and are otherwise disjoint. Lemma~\ref{lem:gauss-bonnet} implies that $m_2\ge(6-2m_1)+\sum_{k\ge4}m_k$; thus, there are at least $6-2m_1$ such sequences.  At most $|S|-m_1$ vertices in~$S$ make a 2-turn, and thus at most $2(|S|-m_1)$ such sequences start or end in~$S$, because the sequences all start and end with a vertex that makes a 2-turn.  There remains at least $(6-2m_1)-2(|S|-m_1)=6-2|S|\ge1$ such sequences whose first and last elements are not in~$S$.  In such a sequence, exactly one of the two vertices that make a 2-turn actually makes a $2_r$-turn by Lemma~\ref{lem:use-bipartition}, which is impossible because the vertices of~$C$ that make a $2_r$-turn are all in~$S$.
\end{proof}

\begin{proof}[Proof of Proposition~\ref{prop:trail-homotopy}]
The 6-reducing triangulation $\rtriangulation$ lifts to an infinite 6-reducing triangulation $\tilde \rtriangulation$ in its universal cover.  Assume that there exist two distinct homotopic reduced walks in~$\rtriangulation$ (possibly one of them being a single vertex). Let $W_1$ and~$W_2$ be lifts of these walks in~$\tilde\rtriangulation$, with the same endpoints; they are also reduced.  We show below that this implies that there exists, in $\tilde\rtriangulation$, a simple closed walk~$C$ with at most two vertices that make a bad turn. This contradicts Lemma~\ref{L:bad left turns}.

Indeed, if one of $W_1$ and~$W_2$ is not simple, it contains a non-empty subwalk with the same starting and ending vertex~$v$, and otherwise not repeating any vertices; since subwalks of reduced walks are also reduced, the claim holds (only at $v$ the walk can make a bad turn).  Otherwise, $W_1$ and~$W_2$ are simple, distinct, and have the same first and last vertices, which implies that they admit non-empty subwalks that have the same endpoints and are otherwise disjoint.  (Indeed: up to exchanging $W_1$ and~$W_2$, some edge in $W_1$ does not belong to~$W_2$.  Consider the subwalk of~$W_1$ of minimum length containing that edge and intersecting~$W_2$ at its endpoints.  The subwalks of~$W_1$ and~$W_2$ with these endpoints have the desired property.)  Let $C$ be the simple closed walk that is the concatenation of these subwalks (or their reversals) $W'_1$ and~$W'_2$.  The subwalks $W'_1$ and~$W'_2$ are also reduced, since subwalks and reversals of reduced walks are reduced. Hence, $C$ has at most two vertices that make a bad turn, which concludes as desired.
\end{proof}

\subsection{Uniqueness of strongly reduced closed walks}

From now on, we consider only reducing triangulations, for which all vertices have degree at least eight.

\begin{proposition}\label{prop:uniqueness-reduced-closed-walks}
   In a reducing triangulation~$\rtriangulation$, any two freely homotopic, non-contractible, strongly reduced closed walks are equal.
\end{proposition}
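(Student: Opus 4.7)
The plan is to lift $C_1$ and $C_2$ to bi-infinite reduced paths $\tilde C_1, \tilde C_2$ in the universal cover $\tilde \rtriangulation$, both invariant under a common non-trivial deck transformation $\gamma$ realizing the free homotopy class shared by $C_1$ and $C_2$ (conjugation inside $\pi_1(\surface)$ lets us arrange a single $\gamma$). Each $\tilde C_i$ is simple: any self-intersection would produce a non-constant reduced walk from a vertex to itself in the simply-connected $\tilde \rtriangulation$, hence contractible, which by Proposition~\ref{prop:trail-homotopy} would have to equal the constant walk---a contradiction.

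I would then split on whether $\tilde C_1$ and $\tilde C_2$ share a vertex. If they share a vertex~$v$, then by $\gamma$-equivariance they also share $\gamma(v)$, and the two subwalks $\tilde C_1[v,\gamma(v)]$ and $\tilde C_2[v,\gamma(v)]$ are reduced walks with common endpoints, hence homotopic in the disk $\tilde \rtriangulation$. Proposition~\ref{prop:trail-homotopy} gives $\tilde C_1[v,\gamma(v)] = \tilde C_2[v,\gamma(v)]$, and $\gamma$-equivariance propagates the equality to $\tilde C_1 = \tilde C_2$, so $C_1 = C_2$.

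The hard part is ruling out the remaining case $\tilde C_1 \cap \tilde C_2 = \emptyset$. For this I would pass to the cylinder cover $\hat \rtriangulation = \tilde \rtriangulation / \langle \gamma \rangle$, in which the two lifts project to disjoint simple non-contractible closed walks $\hat C_1, \hat C_2$ bounding a compact triangulated annulus~$A$. A discharging argument modeled on Lemma~\ref{lem:gauss-bonnet} but applied to the annulus (Euler characteristic zero), combined with reducedness (forcing $m_0 = m_1 = 0$ and every 2-turn on the $A$-side to be a $2_b$) and repeated applications of Lemma~\ref{lem:use-bipartition} to $23^*2$, $23^*4$, and $43^*2$ subwalks of each boundary component, pins down the structure of~$A$ very tightly: no interior vertex, only turns of value $2$, $3$, or $4$ on each boundary, with $2_b$- and $4_r$-turns alternating and separated by runs of $3$-turns. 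A careful bipartite color analysis of the triangles of~$A$ adjacent to $\hat C_1$ then shows that this structure forces $\hat C_2$ to contain either an all-$3_r$ pattern (when $m_2 = 0$) or a $2_r$-turn (when $m_2 > 0$), contradicting its reducedness. This final color-tracing step is the main technical obstacle; the $m_2 = 0$ subcase reduces to a thin ``up/down'' strip where the triangles adjacent to $\hat C_1$ are uniformly colored, whereas the $m_2 > 0$ subcase is an analogous but more intricate bookkeeping around the alternating $2_b/4_r$ pattern on the boundary.
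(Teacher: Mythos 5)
Your proposal follows essentially the same route as the paper: simplicity of the lifts via Proposition~\ref{prop:trail-homotopy}, a case split on whether the two lifts meet, the shared-vertex case resolved by uniqueness of reduced walks between common endpoints, and the disjoint case reduced to a discharging argument on the annulus in the cyclic (annular) cover. The ``main technical obstacle'' you flag is precisely the content of the paper's Lemma~\ref{lemma:reduced-closed-walks}, which completes it by pinning each boundary turn sequence to $(2_b3^*4_r3^*)^*$ or $3_b^*$ (with no interior vertices) and then observing that the vertex directly across a $2_b$-turn makes a forbidden $4_b$-turn, while in the all-$3_b$ case the vertices across make $3_r$-turns---close in spirit to, though slightly different in detail from, the contradictions you anticipate.
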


The proof of Proposition~\ref{prop:uniqueness-reduced-closed-walks} relies on the following lemma.
\begin{lemma}\label{lemma:reduced-closed-walks}
 Consider two disjoint simple closed walks $C$ and $C'$ that bound an annulus in~$\rtriangulation$. Then $C$ and $C'$ are not both strongly reduced.
\end{lemma}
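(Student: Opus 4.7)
My approach is to adapt to the annulus~$A$ bounded by $C$ and~$C'$ the Gauss--Bonnet / bipartiteness argument that proved Proposition~\ref{prop:trail-homotopy}. I orient each boundary so that $A$ lies on its left, and replay the discharging proof of Lemma~\ref{lem:gauss-bonnet} on the triangulated closure~$\overline{A}$; since $\chi(A)=0$, the initial weights now total~$0$ instead of~$6$. Writing $m_k$ (resp.\ $m_k'$) for the number of vertices of~$C$ (resp.~$C'$) having exactly $k$ triangles of~$A$ on their left, and using that every interior degree is at least eight, I obtain
\[
  (m_2+m_2') + 2(m_1+m_1') \;\ge\; \sum_{k\ge 4}(k-3)(m_k+m_k'),
\]
with equality if and only if $A$ has no interior vertex. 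Reducedness of $C$ and~$C'$ rules out $0$- and $1$-turns, so $m_1=m_1'=0$, and forces every $2$-turn to be a $2_b$.

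I then use Lemma~\ref{lem:use-bipartition}: a reduced walk contains no subpattern $23^*2$ (otherwise a $2_r$-turn would appear), so in the cyclic turn sequence of~$C$ any two consecutive $2$-turns must be separated by a non-$3$-turn, i.e.\ a vertex with $\ell\ge 4$; hence $\sum_{k\ge 4}m_k\ge m_2$, and analogously for~$C'$. Combining with Gauss--Bonnet yields
\[
  m_2+m_2' \;\ge\; \sum_{k\ge 4}(k-3)(m_k+m_k') \;\ge\; \sum_{k\ge 4}(m_k+m_k') \;\ge\; m_2+m_2',
\]
so all inequalities are equalities. I conclude that $A$ has no interior vertex, $m_k=m_k'=0$ for $k\ge 5$, $m_4=m_2$, $m_4'=m_2'$, and, by the second half of Lemma~\ref{lem:use-bipartition} applied to each $23^*4$ (or $43^*2$) subword, every $4$-turn is a $4_r$; in particular the cyclic turn sequence of~$C$ alternates $2_b$-turns and $4_r$-turns separated by runs of $3$-turns (or consists entirely of $3$-turns when $m_2=0$), and similarly for~$C'$.

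I finish with a color count. Disjointness of $C$ and~$C'$, together with the absence of $0$- and $1$-turns, forces each triangle of~$A$ to carry exactly one boundary edge; Euler's formula (now with no interior vertex) gives $F=n+n'$ triangles, and the dual graph of~$A$ is a single cycle of length~$F$, which is bipartite by the reducing property of~$\rtriangulation$, so $F_b=F_r$. When $m_2=m_2'=0$, reducedness forbids the all-$3_r$ pattern, so every turn is a $3_b$; a direct accounting then gives $3F_b=2(n+n')$ and $3F_r=n+n'$, a $2{:}1$ imbalance contradicting $F_b=F_r$. When $m_2,m_2'>0$, I trace the colors of the two middle triangles inside each $4_r$-fan of~$C$: each such middle is the unique triangle of~$A$ adjacent to a specific $C'$-edge, its color is fixed by the alternation along the bipartite dual cycle, and the turn type it induces at the corresponding $C'$-vertex turns out to be a $2_r$ or a $4_b$, contradicting the reducedness of~$C'$. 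The main obstacle is this last step: the color-propagation bookkeeping in the mixed case requires simultaneously tracking constraints on both boundaries and re-invoking Lemma~\ref{lem:use-bipartition} on a carefully chosen sub-walk of~$C'$ to expose the forbidden turn.
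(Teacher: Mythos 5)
Your first two-thirds track the paper's proof exactly: the same discharging argument with $\chi(A)=0$, the same use of Lemma~\ref{lem:use-bipartition} to exclude $23^*2$ and force equality everywhere, yielding no interior vertices and turn sequences of the form $(2_b3^*4_r3^*)^*$ or $3_b^*$ on each boundary. That part is sound. Your closing argument, however, has a genuine gap. First, the case analysis is incomplete: you treat $m_2=m_2'=0$ and $m_2,m_2'>0$, but not the mixed situation where exactly one boundary component has a $2$-turn (one sequence is $(2_b3^*4_r3^*)^*$ with $m_2>0$, the other is $3_b^*$). Second, and more seriously, the case $m_2,m_2'>0$ is only sketched: the assertions that each ``middle'' triangle of a $4_r$-fan of~$C$ has its unique boundary edge on~$C'$ (rather than elsewhere on~$C$), and that the turn it induces at the corresponding $C'$-vertex is a $2_r$ or a $4_b$, are exactly the claims that need proof, and you yourself flag this color-propagation bookkeeping as an unresolved obstacle. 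As written, the contradiction in the mixed case is not established.

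For comparison, the paper closes the argument with a single local observation that handles both remaining configurations at once: if some vertex $v$ of one boundary makes a $2_b$-turn, the vertex across~$v$ (the far endpoint of the diagonal shared by the two triangles at~$v$) lies on the other boundary component and makes a $4_b$-turn, which appears in neither allowed form; and if all turns are $3_b$ on both components, the vertices across a $3_b$-turn make a $3_r$-turn, again forbidden. Your global count in the all-$3_b$ case ($3F_b=2(n+n')$ versus $3F_r=n+n'$ against $F_b=F_r$ from the bipartite dual cycle) is a correct and pleasant alternative for that subcase, but to complete the lemma you either need to carry out the color-propagation argument in full for every case with $m_2+m_2'>0$, or replace it with the local ``vertex across'' argument.
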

\begin{proof}
We prove the claim by contradiction so assume that $C$ and $C'$ are both strongly reduced. Let $A$ be the restriction of the reducing triangulation to the closed annulus bounded by $C$ and $C'$. Let $i$ be the number of vertices in the interior of $A$, and let $B$ be the vertex set of the boundary of~$A$. For every vertex $v$ of $A$ let $\deg(v)$ and $\deg'(v)$ be the number of incidences of $v$ with respectively the edges and the triangles of $A$.

We first claim that $2i \leq \sum_{v \in B} (3 - \deg'(v))$. We prove this claim with the same discharging rules as in the proof of Lemma~\ref{lem:gauss-bonnet}. Give initial weight $6$ to each vertex and each triangle of $A$, and weight $-6$ to each edge of $A$. By Euler's formula, the weights initially attributed sum up to $0$. Discharge as follows. For each incidence between a vertex $v$ and an edge $e$, transfer $3$ from $v$ to~$e$. For each incidence between a vertex $v$ and a triangle~$t$, transfer $2$ from $t$ to~$v$. Now, edges and triangles all have weight $0$, while each vertex $v$ has weight $\kappa(v) := 6 - 3 \deg(v) + 2 \deg'(v)$. We proved $0 = \sum_v\kappa(v)$. Every vertex $v$ of $A$ satisfies $\kappa(v) = 3 - \deg'(v)$ if $v \in B$ and $\kappa(v)=6-\deg(v)\leq -2$ otherwise. Therefore, $0 \leq \sum_{v\in B}(3-\deg'(v))-2i$, proving the claim.

We orient $C$ and~$C'$ so that $A$ lies to their left.  Let $C_1\in\set{C,C'}$, and let $B_1\subset B$ be the set of vertices of~$C_1$.  We remark that, because $C_1$ has no bad turn, every $v \in B_1$ satisfies $\deg'(v) \geq 2$.  Moreover, because $C_1$ is strongly reduced, by Lemma~\ref{lem:use-bipartition}, its turn sequence does not contain $23^*2$ as a subword, and it is not of the form $23^*$.  Thus, $\sum_{v \in B_1} (3 - \deg'(v)) \leq 0$, with equality if and only if the turn sequence of $C_1$ has the form $(23^*43^*)^*$ or $3^*$.

From the above claim and the conclusion of the previous paragraph, we deduce that $A$ has no interior vertices and that each of the turn sequences of $C$ and~$C'$ is of the form $(23^*43^*)^*$ or $3^*$, which must actually be $(2_b3^*4_r3^*)^*$ (by Lemma~\ref{lem:use-bipartition} and because there is no $2_r$) or $3_b^*$ (by definition of a strongly reduced closed walk).

\begin{figure}
\centering
\includegraphics[scale=1.3]{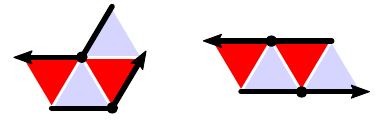}
\caption{The cases leading to a contradiction at the end of the proof of Lemma~\ref{lemma:reduced-closed-walks}.}\label{fig:unique-closed}
\end{figure}

Assume that the turn sequence of one boundary component of~$A$ has the form $(2_b3^*4_r3^*)^*$.  Let $v$ be a vertex of~$B$ that makes a $2_b$-turn. Then, the vertex across~$v$ in the other boundary component of~$A$ makes a $4_b$-turn (Figure~\ref{fig:unique-closed}, left), which is a contradiction since no $4_b$ appears in the allowed forms above.  So the turn sequences of $C$ and~$C'$ are both of the form $3_b^*$; however, for a similar reason as above, if $v$ is a vertex of~$B$ that makes a $3_b$-turn, the vertices across~$v$ make a $3_r$-turn (Figure~\ref{fig:unique-closed}, right), a contradiction.
\end{proof}

\begin{proof}[Proof of Proposition~\ref{prop:uniqueness-reduced-closed-walks}]
  Consider a reducing triangulation $\rtriangulation$ of a surface $\surface$ and two freely homotopic strongly reduced closed walks $C$ and $C'$ in $\rtriangulation$.

  We consider the \emph{annular cover}~$\hat\surface$ of~$\surface$ defined by~$C$~\cite[Section~1.1]{ce-tnpcs-10}.  This is a covering space homeomorphic to an open annulus, in which $C$ lifts to a closed curve~$\hat C$, and every simple closed curve is either contractible or homotopic to~$\hat C$ or its reverse.  Lifting the homotopy from~$C$ to~$C'$ yields a lift~$\hat C'$ that is homotopic to~$\hat C$.  Both $\hat C$ and~$\hat C'$ are strongly reduced closed walks in~$\hat\rtriangulation$, the lift of the reducing triangulation~$\rtriangulation$.

  We claim that $\hat C$ (and, for the same reasons, also~$\hat C'$) is simple.  Indeed, lifting~$\hat C$ to the universal cover~$\tilde\surface$ of~$\surface$ yields a lift, which is a bi-infinite path~$\tilde P$, and is actually reduced in the reducing triangulation~$\tilde\rtriangulation$ obtained from lifting~$\rtriangulation$.  By Proposition~\ref{prop:trail-homotopy}, $\tilde P$ is simple.  Since it is the \emph{only} lift of~$\hat C$, this implies that $\hat C$ is simple.
  
  Now there are two cases. If $\hat C$ and $\hat C'$ are disjoint, Lemma~\ref{lemma:reduced-closed-walks} implies a contradiction.  If they intersect, at vertex~$v$ say, then we transform $\hat C$ and~$\hat C'$ into loops with starting and ending vertex~$v$; these two loops have the same turning number, and are thus homotopic reduced walks, and equal by Proposition~\ref{prop:trail-homotopy}. Thus $\hat C$ and $\hat C'$ are equal, and so are $C$ and~$C'$ by projection.
\end{proof}

\subsection{Reducing a walk}\label{sec:reduced-walks-reduction}

\begin{proposition}\label{prop:trail-algo}
 Given a walk~$W$ of length~$n$ in a reducing triangulation~$\rtriangulation$, we can compute a reduced walk homotopic to~$W$ in $O(n)$ time.
\end{proposition}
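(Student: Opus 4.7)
The plan is to apply local reduction moves that eliminate bad turns one by one, using a doubly-linked-list representation of the walk together with a FIFO queue of the interior vertices currently making a bad turn. Concretely, I would store $W$ as a doubly-linked list of directed edges, each carrying a pointer into the combinatorial map of $\rtriangulation$ so that the turn type at any interior vertex can be recovered in $O(1)$. The algorithm pops a vertex from the queue, applies the corresponding local move, updates the $O(1)$ affected neighboring turn types, and re-inserts into the queue any neighbor whose turn has become bad.

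One reduction is defined for each of the three types of bad turns, each visibly a homotopy that modifies only $O(1)$ consecutive entries of the list. At a $0$-turn the two incident edges form a backtrack and are simply deleted. At a $\pm 1$-turn the two incident edges bound a single triangle, and they are replaced by the third edge of that triangle (a homotopy across one triangle). At a $\pm 2_r$-turn at a vertex~$v$ the two incident edges bound a ``bigon'' of two triangles $T_1 \cup T_2$, which are red and blue respectively by bipartiteness; they are replaced by the two edges on the opposite side of this bigon, meeting at the opposite vertex $a$ (a homotopy across two triangles). Correctness is immediate: each move is a homotopy, so when the queue empties the walk is reduced and homotopic to $W$; Proposition~\ref{prop:trail-homotopy} ensures moreover that this reduced walk is independent of the order of moves.

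For the running time, the $0$- and $\pm 1$-turn moves each strictly decrease the number of edges of the walk, so there are at most $n$ of them in total. The delicate case is the $\pm 2_r$-turn moves, which preserve length. The key local observation is that bipartiteness of the dual of $\rtriangulation$ forces a directional behavior on these flips: a direct check shows that a $2_r$-turn at $v$ becomes a $-2_b$-turn at the opposite vertex $a$ after the flip (since the triangle to the left of the newly-introduced incoming edge $(u,a)$ is the neighbor of the red triangle $T_1$, hence blue), and symmetrically for $-2_r$. Thus the flip at a given combinatorial site cannot be immediately undone.

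The main obstacle is turning this local directionality into a global amortized bound on the number of $\pm 2_r$-flips, since a single flip at $v$ can change the turn types at the adjacent vertices $u$ and $w$ and thereby cascade arbitrarily far back through the walk. I would handle this with a left-to-right sweep: read the edges of $W$ one by one into a working walk, and after each append cascade reductions leftward until no bad turn remains at the right end. Each length-reducing move can be charged to a removed edge, and the crucial claim to prove is that each $\pm 2_r$-flip can be charged to the most recently appended edge (or to an edge within a bounded range of it), so that each of the $n$ edges of $W$ is charged only $O(1)$ times. Establishing this amortized bound, combined with the constant-time cost of each individual move thanks to the linked-list representation, yields the desired $O(n)$ running time.
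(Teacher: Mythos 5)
Your correctness argument is fine (each move is a homotopy, and Proposition~\ref{prop:trail-homotopy} gives order-independence of the result), and your spur and spike moves match the paper's. The genuine gap is exactly where you flag it: the bound on the number of length-preserving $\pm2_r$-flips is the heart of the proof, and it is not established. Worse, the specific charging scheme you propose cannot work as stated. If a reduced walk ends with a run of $k$ consecutive $3_r$-turns and you append one edge creating a $2_r$-turn at the former endpoint, then your purely local flip at that vertex turns the preceding $3_r$ into a $2_r$ (the flip rotates the outgoing edge at the predecessor by one triangle, and by the color-propagation argument of Lemma~\ref{lem:use-bipartition} the preceding $3$-turns in the run all carry the subscript $r$), so the bad turn marches leftward through the entire run: a single append triggers $\Theta(k)=\Theta(n)$ local flips spread over an unbounded range, not a bounded one. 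Note also that the natural potential $3|W|+\#\{\text{bad turns}\}$ does not decrease under your local flip, since it removes one bad turn but may create another at the predecessor.

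The paper sidesteps this by defining the flip move \emph{non-locally}: it acts on a whole maximal pattern $a\,3^k\,2_r\,3^\ell\,b$ with $a,b\notin\{-1,0,1,2,3\}$, pushing the entire strip across in one move, so that (using the degree-$\ge8$ hypothesis and the conditions on $a,b$) no new bad turn is created and the potential $\varphi(W)=3\,|W|+\#\{\text{bad turns}\}$ strictly decreases at every move, giving an $O(n)$ bound on the number of moves; it also adds a \emph{bracket} move for the pattern $23^k2$, which your move set handles only via a long cascade. The price of these long-range moves is that a single move touches a run of unbounded length, which is why the paper stores the walk as a run-length-compressed sequence of elementary subwalks ($3^k$, $(-3)^k$, or a single symbol) so that each move costs $O(1)$, and performs a left-to-right scan with $O(1)$ backtracking after each move. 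Your doubly-linked list of individual edges avoids the compression but shifts all the difficulty onto the unproven amortized bound. To repair your proof you would either need to prove that cascades of local flips total $O(n)$ over the whole execution (which requires a new potential function and a careful analysis of how runs of $3_r$-turns are created and consumed), or adopt the paper's run-based moves and compressed representation.
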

The techniques of the proof of Proposition~\ref{prop:trail-algo} will be reused in Section~\ref{sec:contract-datastruct}.  In particular, we need a few data structures here, which will be extended later.

Our reducing triangulation~$\rtriangulation$ is stored as an embedded graph, but we need additional information.  Specifically, for each vertex~$v$ of~$\rtriangulation$, we number the directed edges with source~$v$ in clockwise order (starting at an arbitrary directed edge).  This number is stored on each directed edge.  Moreover, vertex~$v$ contains an array $A_v$ of pointers such that $A_v[i]$ is the directed edge number~$i$ with source~$v$.  This array has size the degree of~$v$, which is also stored in~$v$.  We note that this additional data can be computed in time linear in the size of the reducing triangulation, so we assume that the reducing triangulation incorporates this information.  With this data structure, we can perform the following operations in constant time: (1) given two directed edges $uv$ and~$vw$, compute the integer~$i$ such that $uvw$ forms an $i$-turn; (2) Given a directed edge~$uv$ and an integer~$i$, compute the directed edge~$vw$ such that the walk $uvw$ forms an $i$-turn.  Again, this can be done as a preprocessing step in time linear in the size of the reducing triangulation.

We also need a data structure, called \emphdef{compressed homotopy sequence}, that represents walks in a compact form, in the same spirit as Erickson and Whittlesey~\cite[Section~4.1]{ew-tcsr-13} use run-length encoding to encode turn sequences.  An \emphdef{elementary subwalk} of~$W$ is an inclusionwise maximal subwalk of~$W$ whose turn sequence has the form $3^k$ or $(-3)^k$ for some $k\ge1$, or is a single symbol $a\not\in\set{-3,3}$.  The elementary subwalks of~$W$ are naturally ordered along~$W$ and cover~$W$, with overlaps because the last edge of an elementary subwalk appears as the first edge of the following one.  We represent~$W$ by storing the elementary subwalks of~$W$ in a doubly linked list in order along~$W$; each elementary subwalk is described by the images in~$\rtriangulation$ of its first and last directed edges, as well as its turn sequence in compressed form, namely, whether it is of the form $3^k$, $(-3)^k$, or $a$, and the integer $k$ or the symbol~$a$.  (Walks~$W$ of length zero or one cannot be encoded in this data structure, but can be handled separately in a trivial manner.)

\begin{figure}
\centering
\includegraphics[scale=1.5]{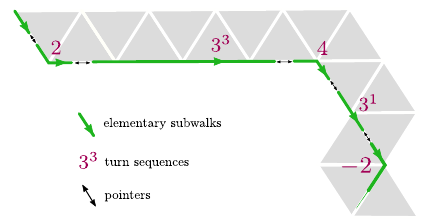}
\caption{The compressed homotopy sequence data structure.}\label{fig:compressed-homotopy-sequence}
\end{figure}

\begin{proof}[Proof of Proposition~\ref{prop:trail-algo}]
  The \emph{extended turn sequence} of a walk~$W$ is its usual turn sequence, with a new symbol~$E$ appended before its first symbol and after its last letter.  We consider the \emph{reduction moves} depicted in Figure~\ref{fig:walk-reduction-moves} that modify a walk $W$ by homotopy.  Each of them can be applied when a specific subsequence appears in the extended turn sequence of~$W$ \emph{or of its reversal}:
\begin{itemize}
    \item\emph{spur move}: $0$,
    \item\emph{spike move}: $1$,
    \item\emph{bracket move}: $23^k2$ for some $k\ge0$,
    \item\emph{flip move}: $a3^k2_r3^\ell b$ for some $k,\ell\ge0$ and some symbols $a,b\not\in\set{-1,0,1,2,3}$.  (In particular, $a$ and/or~$b$ can be equal to~$E$.)
\end{itemize}

\begin{figure}
\centering
\def\svgwidth{40em}
\begingroup%
  \makeatletter%
  \providecommand\color[2][]{%
    \errmessage{(Inkscape) Color is used for the text in Inkscape, but the package 'color.sty' is not loaded}%
    \renewcommand\color[2][]{}%
  }%
  \providecommand\transparent[1]{%
    \errmessage{(Inkscape) Transparency is used (non-zero) for the text in Inkscape, but the package 'transparent.sty' is not loaded}%
    \renewcommand\transparent[1]{}%
  }%
  \providecommand\rotatebox[2]{#2}%
  \newcommand*\fsize{\dimexpr\f@size pt\relax}%
  \newcommand*\lineheight[1]{\fontsize{\fsize}{#1\fsize}\selectfont}%
  \ifx\svgwidth\undefined%
    \setlength{\unitlength}{374.91644912bp}%
    \ifx\svgscale\undefined%
      \relax%
    \else%
      \setlength{\unitlength}{\unitlength * \real{\svgscale}}%
    \fi%
  \else%
    \setlength{\unitlength}{\svgwidth}%
  \fi%
  \global\let\svgwidth\undefined%
  \global\let\svgscale\undefined%
  \makeatother%
  \begin{picture}(1,0.34343041)%
    \lineheight{1}%
    \setlength\tabcolsep{0pt}%
    \put(0,0){\includegraphics[width=\unitlength,page=1]{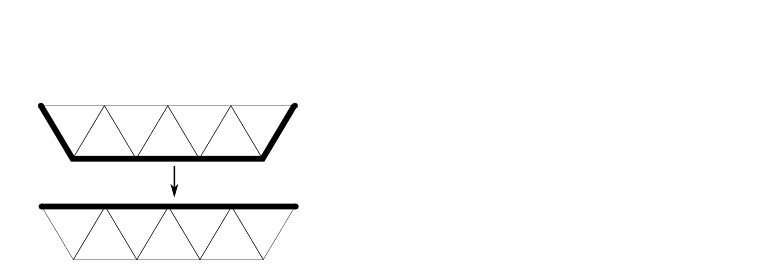}}%
    \put(0.054465,0.10754375){\color[rgb]{0,0,0}\makebox(0,0)[lt]{\lineheight{1.5}\smash{\begin{tabular}[t]{l}bracket move\end{tabular}}}}%
    \put(0,0){\includegraphics[width=\unitlength,page=2]{reduce-3.pdf}}%
    \put(0.13380617,0.3206461){\color[rgb]{0,0,0}\makebox(0,0)[lt]{\lineheight{1.5}\smash{\begin{tabular}[t]{l}spur move\end{tabular}}}}%
    \put(0,0){\includegraphics[width=\unitlength,page=3]{reduce-3.pdf}}%
    \put(0.53289139,0.31974239){\color[rgb]{0,0,0}\makebox(0,0)[lt]{\lineheight{1.5}\smash{\begin{tabular}[t]{l}spike move\end{tabular}}}}%
    \put(0,0){\includegraphics[width=\unitlength,page=4]{reduce-3.pdf}}%
    \put(0.4766444,0.07112417){\color[rgb]{0,0,0}\makebox(0,0)[lt]{\lineheight{1.5}\smash{\begin{tabular}[t]{l}flip move\end{tabular}}}}%
    \put(0,0){\includegraphics[width=\unitlength,page=5]{reduce-3.pdf}}%
  \end{picture}%
\endgroup%

\caption{The reduction moves for the reduction of walks.}\label{fig:walk-reduction-moves}
\end{figure}

We observe that if no reduction move can be applied to~$W$, then $W$ is reduced.  Moreover, for each walk~$W$, let $\varphi(W)$ be equal to three times the length of~$W$, plus the number of bad turns of~$W$.  We observe that $\varphi(W)$ strictly decreases at each move; indeed, a spur, spike, or bracket move decreases the length of $W$ (by at least one) and creates at most two bad turns, while a flip move does not affect the length and removes at least one bad turn (no bad turn is created because the degree of each vertex is at least eight, and by the conditions on $a$ and~$b$).  Thus, any sequence of moves has length~$O(n)$.

The algorithm is as follows.  In a first step, we walk along~$W$ to initialize its compressed homotopy sequence in~$O(n)$ time (using Operation~(1) above). Then, in a second step, we traverse this sequence in order, and reduce~$W$ as soon as a possible move is encountered, maintaining the compressed homotopy sequence as it evolves.  Each possible move involves at most six consecutive elementary subwalks, which are replaced by at most seven elementary subwalks.  Then, before resuming the algorithm and looking for a possible move, we need to backtrack by at most five elementary subwalks in the compressed homotopy sequence, in order not to skip any possible move created by performing the previous one.  Finally, in a third step, when we reach the end of the list, no move is possible any more; we convert our compressed homotopy sequence back to a walk on~$\rtriangulation$ (using Operation~(2) above).  All of this takes $O(n)$ time because the sequence of moves has length~$O(n)$, and because the length of a walk does not increase when reducing it, so that the compressed homotopy sequence always has length~$O(n)$.
\end{proof}

\subsection{Reducing a closed walk}

\begin{proposition}\label{prop:closed-walk-reduction}
 Given a closed walk~$C$ of length~$n$ in a reducing triangulation~$\rtriangulation$, we can compute a strongly reduced closed walk freely homotopic to~$C$ in $O(n)$ time.
\end{proposition}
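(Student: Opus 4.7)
The plan is to mirror the algorithm of Proposition~\ref{prop:trail-algo}, but with the compressed homotopy sequence treated cyclically, and with an additional post-processing step to handle the two terminal configurations that are reduced as walks but not as closed walks. First, I would store $C$ in a cyclic variant of the compressed homotopy sequence: the doubly linked list of elementary subwalks with its two ends glued into a cycle. Initializing this structure takes $O(n)$ time via a single traversal of $C$, using the turn-navigation data structure of Section~\ref{sec:reduced-walks-reduction}.

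Second, I would apply the same four reduction moves (spur, spike, bracket, flip) cyclically around this list, with the understanding that a pattern may now span the notional start of the sequence. The same potential function $\varphi$ as in the proof of Proposition~\ref{prop:trail-algo} still strictly decreases at each move, so the total number of moves is $O(n)$; each move rewrites $O(1)$ consecutive elementary subwalks and, as in the walk case, is followed by a bounded backtrack in the list, so this phase runs in $O(n)$ time. A contractible input is detected when $C$ collapses to a single vertex.

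The main new obstacle is that local termination of these moves does not suffice to declare the closed walk reduced: a closed walk whose turn sequence is entirely $3_r$ or entirely $-3_b$ is a fixed point of these moves but is explicitly forbidden by the definition. The geometric content of Lemma~\ref{lemma:reduced-closed-walks}, in particular the configuration in the right panel of Figure~\ref{fig:unique-closed}, provides the remedy: an all-$3_r$ closed walk in $\rtriangulation$ is always accompanied by a parallel closed walk on the other side of an annulus having no interior vertices, of the same length, and whose turn sequence is all-$3_b$. This parallel walk is reduced (its turns are $3_b$, hence neither all-$3_r$ nor all-$-3_b$, and none is bad), and it is freely homotopic to the original through the annulus. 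I would therefore add a final step: when the local-reduction output has all-$3_r$ (respectively all-$-3_b$) turns, I replace the walk by the parallel one, constructed in $O(n)$ time by using the turn-navigation data structure to locate, for each vertex of $C$, the opposite vertex reached by crossing the fan of three triangles on the appropriate side. Correctness then follows from the uniqueness of the reduced representative given by Proposition~\ref{prop:uniqueness-reduced-closed-walks}, and the overall running time is $O(n)$.
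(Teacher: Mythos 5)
Your overall strategy is the same as the paper's: run the reduction moves of Proposition~\ref{prop:trail-algo} on a cyclic version of the compressed homotopy sequence, then handle the all-$3_r$ / all-$-3_b$ terminal configuration by a separate final move. However, there is a genuine gap in your termination argument. You assert that ``the same potential function $\varphi$ still strictly decreases at each move,'' but this fails in the cyclic setting. The culprit is the flip move when its pattern $a3^k2_r3^\ell b$ wraps entirely around the closed walk so that $a$ and $b$ are the \emph{same} vertex, namely when the cyclic turn sequence of~$C$ (or its reversal) is exactly $43^k2_r3^\ell$: the flip removes the bad $2_r$-turn, but now \emph{both} edges incident to the $4$-turn vertex are rotated by one triangle, turning the $4$ into a $2$ and creating a new bad $2$-turn, so $\varphi$ is unchanged. (In the path case this cannot happen because $a$ and $b$ are distinct occurrences and each loses only one incident edge to the rotation.) Your proof therefore does not establish the $O(n)$ bound on the number of moves, nor even termination, in this configuration. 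The paper isolates this ``heart-like'' case and argues separately that after this one non-decreasing flip the only available move fixes the new bad $2$-turn, after which the closed walk is reduced. A complete proof must either include such a case analysis or modify the potential. Relatedly, you should also verify that the moves remain well defined cyclically, e.g., that a bracket pattern $23^k2$ cannot have its two $2$'s be the same vertex (the paper rules this out via bipartiteness: the color to the left flips after a $2$).

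Your treatment of the all-$3_r$ terminal case is the right idea and matches the paper's final move, but the details are off. With the three-triangle fans lying to the \emph{left} of the all-$3_r$ walk, the opposite boundary of the resulting annulus, when oriented \emph{parallel} to $C$ (annulus to its right), has turn sequence all-$(-3_r)$, not all-$3_b$; it is the \emph{antiparallel} orientation that reads all-$3_b$, and that orientation is freely homotopic to $C^{-1}$ rather than to~$C$. Since an all-$(-3_r)$ closed walk is reduced (only all-$3_r$ and all-$-3_b$ are excluded), the construction still works once the orientation is fixed, and your $O(n)$ construction via the turn-navigation structure is fine (the paper even does it in $O(1)$ on the compressed representation). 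Also note that appealing to Proposition~\ref{prop:uniqueness-reduced-closed-walks} for ``correctness'' is unnecessary: what must be checked is simply that the output is reduced and freely homotopic to~$C$, which the annulus provides directly.
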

\begin{proof}
  We first remove the bad turns from~$C$.  For this purpose, we use the same reduction moves as in the proof of Proposition~\ref{prop:trail-algo}, but the turn sequence is now cyclic.  We observe that, in each move, the part of~$C$ that is removed corresponds to an actual subwalk of~$C$ (and does not ``wrap around'' it).  In particular, in a bracket move, the entire cyclic turn sequence of~$C$ cannot be $23^k$, because the color of the triangle to the left of the walk changes after passing a~$2$. As a limit case, in a flip move, the symbols $a$ and~$b$ in the sequence $a3^k2_r3^\ell b$ may correspond to the same vertex of the cyclic walk; see Figure~\ref{fig:walk-reduction-moves-closed}, left for an illustration in the case where the turn at $a=b$ is a 4-turn.
  \begin{figure}
  \centering
  \def\svgwidth{40em}
  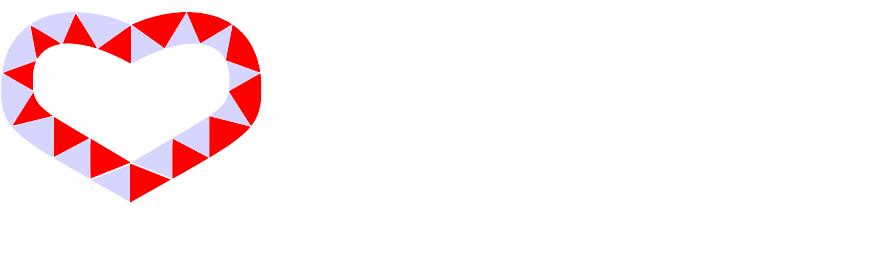
  \caption{The ``heart-like'' situation in the proof of Proposition~\ref{prop:closed-walk-reduction}.}\label{fig:walk-reduction-moves-closed}
  \end{figure}

  As in the proof of Proposition~\ref{prop:trail-algo}, if no move is possible, then $C$ has no bad turn. If $\varphi(C)$ is equal to three times the length of~$C$, plus the number of bad turns of~$C$, then $\varphi(C)$ decreases at each move, except in the very special ``heart-like'' case of Figure~\ref{fig:walk-reduction-moves-closed}, corresponding to the case where the cyclic turn sequence of~$C$ (or its reversal) is $43^k2_r3^\ell$ for some $k,\ell\ge0$.  The flip move does not affect the length and removes the bad 2-turn, but both edges of~$C$ incident to the 4-turn are rotated by one triangle, which creates another bad 2-turn, see Figure~\ref{fig:walk-reduction-moves-closed}.  However, after this move, the only possible move flips the new bad 2-turn into a good one, at which point the closed walk is strongly reduced.  Thus, as in the proof of Proposition~\ref{prop:trail-algo}, any sequence of moves has length $O(n)$.
  
  Finally, if the turn sequence of~$C$ is of the form $(3_r)^*$ or~$(-3_b)^*$, then $C$ can be strongly reduced in $O(1)$ time using the new move depicted in Figure~\ref{fig:walk-reduction-moves-closed-final}.
  \begin{figure}
  \centering
  \includegraphics[scale=1]{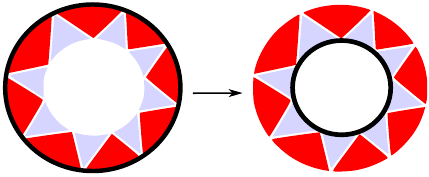}
  \caption{The last reduction move in the proof of Proposition~\ref{prop:closed-walk-reduction}.}\label{fig:walk-reduction-moves-closed-final}
  \end{figure}
\end{proof}

\section{Untangling a loop graph}\label{sec:untangle-loop-graph}

A \emphdef{loop graph} is a graph $\loopgraph$ whose connected components have a single vertex. A drawing $\lambda$ of a loop graph $\loopgraph$ in a surface $\surface$ is \emphdef{sparse} if, under~$\lambda$, the edges of each connected component of~$\loopgraph$ are non-contractible and pairwise non-homotopic.  In this section, we prove:

\begin{proposition}\label{prop:untangling-a-loop-graph-general}
Let $T$ be a reducing triangulation of a surface of genus at least two without boundary. Let $L$ be a loop graph, and let $\lambda : L \to T$ be a sparse drawing of size $n$. In $O(n \log n)$ time, we can determine whether $\lambda$ can be untangled, and if so we can compute a weak embedding $\lambda' : \loopgraph \to T$ homotopic to $\lambda$, in which for every loop $\ell$ of $\loopgraph$ the respective lengths $k'$ and $k$ of the walks $\lambda'(\ell)$ and $\lambda(\ell)$ satisfy $k' = O(k)$.
\end{proposition}

Reducing triangulations exist for every surface of genus at least two; see Figure~\ref{fig:valid-tr-from-canonical-sc}.  Throughout this section, $\surface$ has genus at least two and is without boundary.

\subsection{Relation to weak embeddability of straightened loop graphs}

For the needs of this section, we say that a drawing $\lambda$ of a loop graph $\loopgraph$ in a reducing triangulation $\rtriangulation$ is \emphdef{straightened} if it is sparse and satisfies each of the following on every connected component $\loopgraph_0$ of $\loopgraph$. Firstly, there is an edge $e$ of $\loopgraph_0$ that is mapped by $\lambda$ to a \emph{strongly} reduced \emph{closed} walk in $\rtriangulation$, where $e$ is seen as a (non-directed) closed curve.  (This is consistent because (strongly) reduced (closed) walks are stable upon reversal.)  Secondly, every other edge~$e'$ of $\loopgraph_0$ is mapped by $\lambda$ to a reduced walk in $\rtriangulation$, where $e'$ is seen as a (non-directed) path this time.  We say that $e$ is a \emphdef{major edge} and that $e'$ is a \emphdef{minor edge} of $\loopgraph$ (with respect to $\lambda$).  (A similar idea has been used several times in the continuous setting~\cite[Section~5]{dkpt-23}.)

\begin{lemma}\label{lem:straightening}
Consider a sparse drawing $\lambda$ of size $n$ of a loop graph $\loopgraph$ in a reducing triangulation $\rtriangulation$ of~$\surface$. In time $O(n)$, we can compute a straightened drawing $\lambda'$ homotopic to $\lambda$, in which for every loop $\ell$ of $\loopgraph$ the respective lengths $k'$ and $k$ of the walks $\lambda'(\ell)$ and $\lambda(\ell)$ satisfy $k' = O(k)$.
\end{lemma}
\begin{proof}
The algorithm works independently on the connected components of $\loopgraph$, so assume that $\loopgraph$ is connected and let $v$ be the vertex of $\loopgraph$. We choose the major edge $e$ of $\loopgraph$ as an edge whose image $\lambda(e)$ has minimal length $m \geq 1$. We reduce the closed walk $\lambda(e)$ with Proposition~\ref{prop:closed-walk-reduction} to construct $\lambda'(e)$ in time $O(m)$. While reducing the closed walk $\lambda(e)$, we update homotopically the vertex $\lambda(v)$ so that $\lambda(v)$ remains on $\lambda(e)$ at all times, and we record the walk $W$ performed by $\lambda(v)$ in $\rtriangulation$. In the end, the walk $W$ has length $O(m)$. For every other (minor) edge~$e'$ of~$\loopgraph$, we concatenate the reverse of~$W$ with $\lambda(e')$ and then with~$W$, and we reduce this walk with Proposition~\ref{prop:trail-algo} to construct $\lambda'(e')$ in time linear in the length of $\lambda(e')$.
\end{proof}

Here is the main technical ingredient of this section:
\begin{proposition}\label{prop:untangle-loop-graph}
Consider a straightened drawing $\lambda$ of a loop graph $\loopgraph$ in a reducing triangulation~$\rtriangulation$ of a surface without boundary of genus at least two. If there is an embedding homotopic to $\lambda$, then $\lambda$ is a weak embedding.
\end{proposition}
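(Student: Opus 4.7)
The plan is to translate the weak-embedding condition into a local non-crossing condition at each vertex of the reducing triangulation $\rtriangulation$, and then derive that condition from the existence of a homotopic embedding $\mu$ by comparing reduced walks with hyperbolic geodesics in a suitably punctured version of $\surface$. By the patch-system characterization of Section~\ref{sec:weak-embeddings}, $\lambda$ is a weak embedding if and only if one can realize the combinatorial data that it assigns to each disk $D_u$ of $\patchsystem$ as a planar drawing inside $D_u$. Concretely, each passage of a reduced walk $\lambda(e)$ through a vertex $u$ of $\rtriangulation$, together with each half-loop of a loop $\lambda(e)$ whose basepoint $\lambda(v_i)$ equals $u$, defines a chord-like object in the oriented disk $D_u$; realizability amounts to checking that any two such chords are non-crossing in the cyclic boundary order of $D_u$. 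I would establish this pair by pair, aiming for a contradiction from any crossing pair.

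For the hyperbolic setup, let $v_1,\ldots,v_c$ be the vertices of the connected components of $\loopgraph$. The embedded points $\mu(v_1),\ldots,\mu(v_c)$ are pairwise distinct; puncturing $\surface$ at these $c$ points yields a non-compact surface $\surface^*$ of Euler characteristic $2-2g-c<0$, which we equip with a complete finite-area hyperbolic metric having a cusp at each puncture (Section~\ref{sec:hyperbolic-surfaces}). Each edge $\mu(e)$, with its basepoint removed, becomes a proper simple arc in $\surface^*$ running from cusp to cusp; since $\mu$ is an embedding in $\surface$, these proper arcs are pairwise interior-disjoint. Each is properly homotopic to a unique geodesic representative $\gamma(e)$, and the minimal-crossing property of hyperbolic geodesics ensures that the family $\{\gamma(e)\}_e$ remains pairwise disjoint in $\surface^*$.

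Each edge $\lambda(e)$ of the straightened drawing is a reduced walk or reduced closed walk in $\rtriangulation$ in the same homotopy class, in $\surface$, as $\mu(e)$, so, after a harmless perturbation dragging its endpoints toward the cusp $\mu(v_i)$, it is properly homotopic to $\gamma(e)$ in $\surface^*$. I would compare $\lambda(e)$ with $\gamma(e)$ inside the universal cover $\tilde\surface^*$, equipped with the lifted hyperbolic metric and the lifted reducing triangulation $\tilde\rtriangulation$. The crux of the argument, and the main obstacle I anticipate, is to show that if two reduced lifts $\tilde\lambda(e_1)$ and $\tilde\lambda(e_2)$ share a vertex $\tilde u\in\tilde\rtriangulation$ with chord-crossing turns there, then the endpoint pairs of $\tilde\lambda(e_1)$ and $\tilde\lambda(e_2)$ on the circle at infinity of $\tilde\surface^*$ interleave. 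The proof of this linking statement should rest on the rigidity of reduced walks already proved in Section~\ref{sec:reduced-walks-unicity}: the bigon and annulus obstructions behind Lemma~\ref{lem:gauss-bonnet} and Lemma~\ref{lemma:reduced-closed-walks} forbid a reduced walk from ``doubling back'' in a way that would decouple its behaviour at an interior vertex from its asymptotic direction, and this is exactly what forces the interleaving of endpoints at infinity.

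Once the linking property is established, the conclusion is immediate. Suppose for contradiction that at some vertex $u$ of $\rtriangulation$ two chords cross, arising either from two passages, from a passage and a loop-leg, or from two loop-legs. Lifting to $\tilde\rtriangulation$ produces reduced lifts $\tilde\lambda(e_1)$ and $\tilde\lambda(e_2)$ chord-crossing at some lift $\tilde u$ of $u$, so their geodesic counterparts $\tilde\gamma(e_1)$ and $\tilde\gamma(e_2)$ have interleaving endpoints on the boundary circle of $\cH$. Two hyperbolic geodesics with interleaving endpoints at infinity must cross, so $\tilde\gamma(e_1)\cap\tilde\gamma(e_2)\neq\emptyset$ in $\tilde\surface^*$; projecting to $\surface^*$ then produces a crossing of $\gamma(e_1)$ and $\gamma(e_2)$, contradicting the disjointness established above. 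Hence every vertex of $\rtriangulation$ satisfies the non-crossing chord condition, and $\lambda$ is a weak embedding.
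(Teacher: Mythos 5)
Your overall strategy---straighten, compare with hyperbolic geodesics, and detect unavoidable crossings via interleaved endpoints on the circle at infinity---is the right spirit, but the proposal has two genuine gaps, and the second is precisely the crux of the proposition. First, the reduction of weak embeddability to ``any two chords at each disk $D_u$ are non-crossing'' is incomplete. An approximating embedding in the patch system also requires a \emph{globally consistent} ordering of strands that run in parallel through a sequence of strips; two distinct major edges in different components of $\loopgraph$ may be freely homotopic after reduction and then \emph{overlap} along their entire length, so that no vertex of $\rtriangulation$ exhibits any ``chord crossing,'' yet one must still decide which strand lies above which, consistently along the whole closed walk. This is why the paper needs the ``pulled to the left/right'' analysis (Lemma~\ref{L:majoverlap}) and a final perturbation step for coinciding major lifts; your local pairwise criterion cannot see this obstruction at all.

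Second, the ``linking statement'' you defer---that a chord-crossing of two reduced lifts at a vertex of $\tilde\rtriangulation$ forces interleaved ideal endpoints of the corresponding geodesics---is the heart of the matter, and the rigidity results you cite (Lemmas~\ref{lem:gauss-bonnet} and~\ref{lemma:reduced-closed-walks}) do not supply it: they are purely combinatorial statements about $\rtriangulation$ and say nothing about a hyperbolic metric on the surface punctured at the loop-graph vertices, which has no structural relationship to the triangulation. The paper resolves exactly this by puncturing at the \emph{triangle centers} and building the hyperbolic metric on the patch system of $\rtriangulation$ out of ideal polygons dual to the triangulation; then a reduced walk and its geodesic representative cross the dual arcs in the same sequence (no $0$-turns), which is what ties the combinatorics of reduced walks to the geometry. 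Relatedly, your interleaving criterion degenerates in the very cases you need it for: a minor lift and the two ``loop-legs'' at a basepoint become ideal arcs sharing a cusp, so their endpoint pairs never strictly interleave, and crossing there is governed by finer data (the order of germs at the cusp); the paper's H-block construction, which anchors each minor lift by the two bi-infinite major lifts through its endpoints and uses their four distinct limit points, is the device that makes the endpoints-at-infinity argument work in those cases. Without a proof of the linking statement and a treatment of shared ideal endpoints and of overlaps, the argument does not go through.
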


This proposition implies immediately Proposition~\ref{prop:untangling-a-loop-graph-general}:

\begin{proof}[Proof of Proposition~\ref{prop:untangling-a-loop-graph-general}, assuming Proposition~\ref{prop:untangle-loop-graph}]
Apply Lemma~\ref{lem:straightening} to compute in $O(n)$ time a straightened drawing~$\lambda'$ of~$\loopgraph$ homotopic to $\lambda$. Apply Theorem~\ref{thm:toth-et-al} to determine in $O(n \log n)$ time whether $\lambda'$ is a weak embedding. If not, then conclude by Proposition~\ref{prop:untangle-loop-graph} that there is no embedding homotopic to $\lambda$. Otherwise return $\lambda'$.
\end{proof}

The rest of this section is devoted to the proof of Proposition~\ref{prop:untangle-loop-graph}.  We need to start with some preliminaries.

\subsection{The limit points of lifts in the universal cover}\label{sec:limit-points}

Because $\surface$ has genus at least two and has no boundary, the universal cover~$\tilde\surface$ of~$\surface$ can be compactified into a topological space~$\tilde\surface\cup\partial\tilde\surface$, by adding a set $\partial \tilde \surface$ of \emphdef{limit points}, such that the compactified space is homeomorphic to the closed disk, and under this homeomorphism $\tilde\surface$ is represented by the open disk and $\partial\tilde\surface$ is represented by the circle.  Moreover, every lift $\tilde c:\cR\to \tilde \surface$ of every non-contractible closed curve on $\surface$ admits well defined limit points at $+\infty$ and~$-\infty$ in $\partial \tilde \surface$. We will rely on the following topological properties.

\begin{restatable}{lemma}{LemExistDistinctLimitPoints}\label{lem:exist-distinct-limit-points}
  If $\tilde{c} : \cR \to \tilde \surface$ is a lift of a non-contractible closed curve on $\surface$, then $\lim_{+\infty}\tilde c$ and $\lim_{-\infty}\tilde c$ exist (in $\tilde \surface \cup \partial \tilde \surface$) and are distinct points of $\partial \tilde \surface$.
\end{restatable}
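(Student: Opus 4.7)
The plan is to use the geodesic representative of $c$ together with a fellow-traveller argument in hyperbolic geometry. Since $\surface$ has genus at least two and no boundary, I would first equip it with a hyperbolic metric (Section~\ref{sec:hyperbolic-surfaces}), so that $\tilde\surface$ becomes isometric to the Poincar\'e model of $\cH$; the compactification $\tilde\surface\cup\partial\tilde\surface$ is then the closed unit disk with $\partial\tilde\surface$ its Euclidean boundary circle. Let $c$ be the closed curve whose lift is $\tilde c$, and let $\gamma$ be the unique closed geodesic freely homotopic to $c$; this $\gamma$ exists and is non-contractible by the properties recalled in Section~\ref{sec:hyperbolic-surfaces}.

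Next I would pick a free homotopy $H:S^1\times[0,1]\to\surface$ from $c$ to $\gamma$, extend it periodically to $H':\cR\times[0,1]\to\surface$ by $H'(t,s)=H(t\bmod 1,s)$, and lift it to $\tilde H:\cR\times[0,1]\to\tilde\surface$ with $\tilde H(\cdot,0)=\tilde c$. Setting $\tilde\gamma(t):=\tilde H(t,1)$ yields a lift of $\gamma$. Because $\gamma$ is a non-constant closed geodesic on $\surface$, $\tilde\gamma$ is a bi-infinite hyperbolic geodesic in $\cH$, which in the Poincar\'e model is an open chord meeting $\partial\tilde\surface$ orthogonally at two distinct points $p_{+},p_{-}\in\partial\tilde\surface$, with $\tilde\gamma(t)\to p_{\pm}$ (in the Euclidean sense) as $t\to\pm\infty$.

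The crux is to show that $\tilde c$ stays at bounded hyperbolic distance from $\tilde\gamma$. Here I anticipate the main technical nuisance, namely that the vertical slices $s\mapsto H(\theta,s)$ of a merely continuous homotopy need not be rectifiable. I would handle this with a Lebesgue-number argument: by continuity of $H$ on the compact set $S^1\times[0,1]$, there is a partition $0=s_0<s_1<\cdots<s_N=1$ such that for every $\theta\in S^1$, the image $H(\set{\theta}\times[s_i,s_{i+1}])$ lies in some evenly covered open ball of $\surface$ of diameter at most a fixed constant. Lifting slice by slice then shows that each path $s\mapsto\tilde H(t,s)$ has hyperbolic diameter bounded by some $D<\infty$ independent of~$t$; in particular $d_\cH(\tilde c(t),\tilde\gamma(t))\le D$ for every $t\in\cR$.

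To conclude, I would invoke the standard Poincar\'e-model fact that the Euclidean diameter of a hyperbolic ball $B_D(x)$ tends to zero as $x\to\partial\cH$. Since $\tilde\gamma(t)\to p_{\pm}$ in the Euclidean sense as $t\to\pm\infty$, the fellow-traveller bound forces $\tilde c(t)\to p_{\pm}$ in the Euclidean sense as well. Hence $\lim_{\pm\infty}\tilde c$ both exist, lie in $\partial\tilde\surface$ (not in the open disk $\tilde\surface$), and coincide with the two distinct points $p_{+}$ and~$p_{-}$, which proves the lemma.
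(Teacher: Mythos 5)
Your proof is correct, but it takes a genuinely different route from the paper's. The paper proves this lemma as an immediate corollary of Lemma~\ref{lem:lift-and-isometries}: it extracts the deck transformation $f\in\Gamma\setminus\{1\}$ with $\tilde c(t+1)=f(\tilde c(t))$, observes that $f$ is a hyperbolic translation (as is every non-identity element of $\Gamma$), and concludes from the standard dynamics of such translations --- iterates of any point converge to the two fixed points on $\partial\cH$ --- that the limits exist and are the two distinct fixed points of $f$. You instead bypass the holonomy entirely: you pass to the closed geodesic representative $\gamma$, lift the free homotopy, prove a uniform fellow-traveller bound $d_\cH(\tilde c(t),\tilde\gamma(t))\le D$, and finish with the fact that hyperbolic $D$-balls have vanishing Euclidean diameter near $\partial\cH$. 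Both arguments are sound; your treatment of the non-rectifiability of the homotopy tracks via a Lebesgue-number/evenly-covered-ball chaining is careful and valid (an alternative shortcut: $t\mapsto d_\cH(\tilde H(t,0),\tilde H(t,1))$ is continuous and $1$-periodic because the common deck transformation is an isometry, hence bounded). The paper's route is shorter in context because the translation machinery of Appendix~\ref{sec:compactification} is needed anyway for Lemmas~\ref{lem:lift-limit-points}--\ref{L:samelimit}, and it identifies the limit points explicitly as the fixed points of $f$, which those later lemmas exploit; your route is more self-contained geometrically and essentially re-derives quasi-geodesic stability for a bounded perturbation of a geodesic, at the cost of invoking the existence of the closed geodesic representative. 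One cosmetic slip: the lift $\tilde\gamma$ is a complete geodesic of the Poincar\'e disk, i.e.\ a circular arc orthogonal to the boundary circle (a chord only when it is a diameter), not an ``open chord''; this does not affect the argument since all you use is that it has two distinct endpoints on $\partial\tilde\surface$.
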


\begin{restatable}{lemma}{LemLiftLimitPoints}\label{lem:lift-limit-points}
Lift a homotopy $c \simeq d$ between non-contractible closed curves on $\surface$ to a homotopy $\tilde{c} \simeq \tilde{d}$ between lifts of $c$ and $d$. Then $\tilde c$ and $\tilde d$ have the same limit points.
\end{restatable}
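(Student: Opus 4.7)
The plan is to lift the homotopy all the way to the universal cover, show that corresponding points on the two lifts stay at uniformly bounded hyperbolic distance, and then invoke a classical fact about the Poincaré disk: two families of points at bounded hyperbolic distance that escape to the boundary must converge to the same limit. For this I would fix a hyperbolic metric on $\surface$ (available since $\surface$ has genus at least two and no boundary), so that $\tilde \surface$ becomes isometric to $\cH$ and the topological compactification $\tilde\surface \cup \partial\tilde\surface$ agrees with the Euclidean closed-disk compactification.

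First, I would write the given homotopy as a continuous map $H : S^1 \times [0,1] \to \surface$ with $H(\cdot,0) = c$ and $H(\cdot,1) = d$, and precompose with the covering $\cR \to S^1$ to obtain $H' : \cR \times [0,1] \to \surface$, $H'(t,s) := H(t \bmod 1, s)$. Since $\cR \times [0,1]$ is simply connected, $H'$ admits a continuous lift $\tilde H : \cR \times [0,1] \to \tilde\surface$; choosing the starting sheet so that $\tilde H(\cdot,0) = \tilde c$, uniqueness of lifts forces $\tilde H(\cdot,1)$ to be precisely the lift $\tilde d$ singled out by the hypothesis.

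Second, I would show that $d_{\tilde\surface}(\tilde c(t), \tilde d(t)) \leq L$ for some $L$ independent of $t$. The image $K := H(S^1 \times [0,1])$ is compact, so it is covered by finitely many evenly covered open balls $B_1,\ldots,B_m \subset \surface$ of hyperbolic diameter at most $\varepsilon$; uniform continuity of $H$ on its compact domain then yields $\delta > 0$ such that any subset of $S^1 \times [0,1]$ of diameter at most $\delta$ is mapped by $H$ into some $B_i$. For each $t$, partitioning $\{t\} \times [0,1]$ into $N := \lceil 1/\delta \rceil$ subintervals of length at most $\delta$ and lifting each piece through the isometric sheet over its target ball, one concatenates to obtain that $\tilde H(\{t\} \times [0,1])$ has hyperbolic diameter at most $N\varepsilon$. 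In particular, its endpoints $\tilde c(t)$ and $\tilde d(t)$ satisfy $d_{\tilde\surface}(\tilde c(t), \tilde d(t)) \leq N\varepsilon =: L$.

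Finally, I would invoke the standard property of the Poincaré disk: hyperbolic balls are Euclidean disks, and the Euclidean diameter of a hyperbolic ball of fixed hyperbolic radius $L$ around $x \in \cH$ tends to $0$ as $x$ approaches $\partial \cH$ (this follows from writing hyperbolic isometries as Möbius transformations). By Lemma~\ref{lem:exist-distinct-limit-points}, $\tilde c(t)$ converges Euclidean-ly to a point $p^+ \in \partial\tilde\surface$ as $t \to +\infty$; since $\tilde d(t)$ lies in the hyperbolic ball of radius $L$ around $\tilde c(t)$, its Euclidean distance to $p^+$ also tends to zero, so $\lim_{+\infty}\tilde d = p^+ = \lim_{+\infty}\tilde c$. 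The symmetric argument at $-\infty$ completes the proof. The main obstacle I expect is the second step: extracting a uniform distance bound from a merely continuous (not rectifiable) homotopy requires the open-cover/uniform-continuity setup above, whereas the first and last steps are essentially direct applications of the homotopy lifting property and of classical hyperbolic geometry.
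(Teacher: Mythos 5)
Your proof is correct, but it takes a genuinely different route from the paper's. The paper argues algebraically: it attaches to $\tilde c$ and $\tilde d$ the deck transformations $f,g\in\Gamma$ with $\tilde c(t+1)=f(\tilde c(t))$ and $\tilde d(t+1)=g(\tilde d(t))$ (Lemma~\ref{lem:lift-and-isometries}), notes that because the lifted homotopy is equivariant the distance between $\tilde c(k)$ and $\tilde d(k)$ is bounded (in fact constant) over $k\in\cZ$, and then invokes Lemma~\ref{lem:isometries-aux}, which characterizes hyperbolic translations sharing a fixed point at infinity by the existence of orbit points at bounded distance; since the limit points of the lifts are exactly the fixed points of $f$ and~$g$, they coincide. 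You instead bound $d_{\tilde\surface}(\tilde c(t),\tilde d(t))$ uniformly over \emph{all} real~$t$ by a compactness/uniform-continuity argument on the homotopy itself, and conclude with the classical fact that a hyperbolic ball of fixed radius has Euclidean diameter tending to zero as its center approaches $\partial\cH$. Your route is more self-contained (it bypasses Lemmas~\ref{lem:lift-and-isometries} and~\ref{lem:isometries-aux} entirely) and is the standard ``bounded distance implies same endpoint at infinity'' argument from coarse hyperbolic geometry; the paper's route is shorter in context because both auxiliary lemmas are needed elsewhere anyway (e.g., in Lemmas~\ref{lem:separate-limit-points} and~\ref{L:samelimit}). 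Two small points you assert without proof, neither of which is a gap: that convergence in the half-plane topology the paper places on $\tilde\surface\cup\partial\tilde\surface$ agrees with Euclidean convergence in the closed Poincar\'e disk (standard), and that the sheets over your evenly covered balls are isometric to the balls, which holds once the balls are taken small enough.
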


\begin{restatable}{lemma}{LemSeparateLimitPoints}\label{lem:separate-limit-points}
Consider lifts $\tilde{c}$ and $\tilde{d}$ of non-contractible closed curves on $\surface$. Assume either that $\tilde{c}$ and $\tilde{d}$ intersect exactly once, or that they are disjoint lifts of the same curve on $\surface$. Then the four limit points of $\tilde{c}$ and $\tilde{d}$ are pairwise distinct.
\end{restatable}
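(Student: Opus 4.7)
The plan is to fix a hyperbolic metric on $\surface$ (possible since the genus is at least two), identifying $\tilde\surface$ isometrically with the hyperbolic plane $\cH$, $\partial\tilde\surface$ with the circle at infinity $\partial\cH$, and $\pi_{1}(\surface)$ with a cocompact, torsion-free Fuchsian group $\Gamma$ of hyperbolic isometries of $\cH$. I will use the standard facts that, for any lift $\tilde{c}$ of a non-contractible closed curve $c$: (i) the two limit points of $\tilde{c}$ on $\partial\cH$ are exactly the two fixed points of the deck transformation $\gamma_{c}\in\Gamma$ associated with the lift; and (ii) the image of $\tilde{c}$ in $\cH$ is invariant under the primitive element $\gamma_{0}\in\Gamma$ of which $\gamma_{c}$ is a power (equivalently, under the deck transformation associated with the primitive closed curve underlying~$c$).

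The main algebraic input will be the following classical consequence of discreteness of~$\Gamma$: if two hyperbolic elements of~$\Gamma$ share one fixed point on $\partial\cH$, then they share both fixed points, and hence have the same axis and lie in a common cyclic subgroup. Indeed, because $\surface$ is closed, $\Gamma$ contains no parabolic elements; and if two hyperbolic elements shared one fixed point but had distinct axes, iterated conjugation of one by powers of the other would produce a sequence of hyperbolic elements whose two fixed points collapse to a single point of $\partial\cH$, with translation lengths tending to zero, contradicting discreteness of~$\Gamma$.

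With this in hand, suppose for contradiction that $\tilde{c}$ and $\tilde{d}$ share a limit point. By the algebraic fact applied to $\gamma_{c}$ and $\gamma_{d}$, they in fact share both limit points $\{p,q\}$; let $\gamma_{0}\in\Gamma$ be the primitive hyperbolic isometry whose axis joins $p$ and $q$. By fact~(ii), the images of $\tilde{c}$ and $\tilde{d}$ in $\cH$ are each $\gamma_{0}$-invariant, so their intersection is also $\gamma_{0}$-invariant. In Case~1, where $\tilde{c}\cap\tilde{d}=\{x\}$ consists of a single point, invariance would force $x$ to be a fixed point of $\gamma_{0}$ in $\cH$, which is impossible since a hyperbolic isometry fixes only points of $\partial\cH$. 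In Case~2, where $\tilde{c}$ and $\tilde{d}$ are disjoint lifts of the same curve~$c$, I write $\tilde{d}=\delta\cdot\tilde{c}$ for some $\delta\in\Gamma$; sharing of limit points forces $\delta$ to fix the set $\{p,q\}$, and then to fix $p$ and $q$ individually since $\Gamma$ is torsion-free, so $\delta$ lies in the cyclic stabilizer $\langle\gamma_{0}\rangle$. But $\gamma_{0}$ preserves the image of $\tilde{c}$, so $\delta\cdot\tilde{c}$ has the same image as $\tilde{c}$, contradicting disjointness.

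The main obstacle I expect is to make fact~(ii) rigorous: the image of a lift of $c=c_{0}^{a}$ (with $c_{0}$ primitive) has to be identified with the image of an appropriate lift of~$c_{0}$, so that invariance under the primitive deck transformation follows cleanly. Particular care is needed to distinguish a lift as a parametrized map $\cR\to\cH$ from its image as a subset of $\cH$, and to ensure that the reasoning passes uniformly between a closed curve and its underlying primitive. Once this is set up, the contradictions in both cases follow immediately from invariance under $\gamma_{0}$.
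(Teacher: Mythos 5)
Your overall framework matches the paper's (Appendix~A): a hyperbolic structure $\surface=\cH/\Gamma$, the identification of the limit points of a lift with the fixed points of its deck transformation, and the algebraic fact that two non-identity elements of $\Gamma$ sharing one fixed point on $\partial\cH$ share both and lie in a common cyclic subgroup (the paper's Lemma~\ref{lem:limit-points-and-isometries}, proved there via proper discontinuity and B\'ezout rather than your collapsing-conjugates argument; both are standard). The gap is exactly the one you flagged: your fact~(ii) is false in general. The image of $\tilde c$ is invariant under $\gamma_c$, but if $\gamma_c=\gamma_0^{n}$ with $n\ge 2$, it need not be invariant under the primitive root $\gamma_0$: the curve $c$ is only \emph{freely homotopic} to $c_0^{n}$ for a primitive $c_0$, not equal to it as a parametrized map, so $\mathrm{im}(\tilde c)=\bigcup_{k}\gamma_0^{nk}(P)$ for a fundamental arc $P$ that $\gamma_0$ need not preserve. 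Your proposed repair (``identify the image of a lift of $c=c_0^{a}$ with the image of a lift of $c_0$'') only works when $c$ literally equals $c_0^{a}$.

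Case~1 survives a cheap fix: writing $\gamma_c=\gamma_0^{n}$ and $\gamma_d=\gamma_0^{m}$, the non-identity element $h=\gamma_c^{m}=\gamma_d^{n}$ preserves both images, so the singleton intersection would be a fixed point in $\cH$ of a hyperbolic isometry, which is impossible; this is essentially the paper's argument, which exhibits $f^{i}(x)=g^{j}(x)$ as a second intersection point. Case~2 does not survive: there you conclude from $\delta=\gamma_0^{k}$ that $\delta$ preserves $\mathrm{im}(\tilde c)$, which fails whenever $n\nmid k$, i.e., precisely when $c$ is homotopic to a proper power. Two disjoint bi-infinite simple paths can perfectly well share both endpoints on $\partial\cH$, so no purely topological or invariance argument closes this case; one must use that $\tilde c$ and $\tilde d$ are lifts of the \emph{same} closed curve. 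The paper does this by passing to the geodesic representative $\alpha$ of $c$: the corresponding lifts of $\alpha$ have the same limit points, hence the same image (a geodesic line), hence coincide up to reparametrization, and uniqueness of lifted homotopies transports this back to $\tilde c$ and $\tilde d$, contradicting disjointness. You need some such argument (or to restrict the lemma to primitive curves, which would cover the paper's applications but not the statement as written).
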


\begin{restatable}{lemma}{LemSameLimit}\label{L:samelimit}
  Let $c$ and~$d$ be two non-contractible closed curves on $\surface$.  If $c$ and $d$ admit lifts with the same pairs of limit points, there is a closed curve~$e$ such that $c$ and~$d$ are homotopic to powers of~$e$.
\end{restatable}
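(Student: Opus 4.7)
The plan is to reduce the statement to a geometric fact about geodesics in the hyperbolic plane~$\cH$ and then to a standard algebraic fact about Fuchsian groups. First, since $\surface$ has genus at least two and no boundary, I would endow it with a hyperbolic metric (Section~\ref{sec:hyperbolic-surfaces}); the universal cover $\tilde\surface$ then becomes isometric to~$\cH$, and the compactification of Section~\ref{sec:limit-points} identifies with the standard closed disk model, so that $\partial \tilde\surface$ is the circle at infinity of~$\cH$. Let $c^*$ and $d^*$ be the unique closed geodesics freely homotopic to~$c$ and~$d$, respectively. Lifting the free homotopies $c \simeq c^*$ and $d \simeq d^*$ to the universal cover and applying Lemma~\ref{lem:lift-limit-points}, I may replace $\tilde c$ and $\tilde d$ by geodesic lifts $\tilde{c}^*$ and $\tilde{d}^*$ without changing their pairs of limit points; in particular $\tilde{c}^*$ and $\tilde{d}^*$ still share a common pair of limit points.

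Second, I would invoke the basic fact that a complete geodesic in~$\cH$ is uniquely determined by its unordered pair of endpoints on~$\partial\cH$. Combined with the previous step, this forces $\tilde{c}^*$ and $\tilde{d}^*$ to coincide as unoriented geodesic lines; call this common line~$\gamma$.

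Third comes the algebraic step. The closed geodesics $c^*$ and $d^*$ correspond, up to conjugation, to nontrivial deck transformations $g_c,g_d$ in the torsion-free cocompact Fuchsian group~$\Gamma$ of covering transformations of~$\tilde\surface\to\surface$, each acting on~$\cH$ as a hyperbolic isometry with axis~$\gamma$. The stabilizer of~$\gamma$ in~$\Gamma$ is infinite cyclic, by the classical fact that two hyperbolic isometries sharing an axis commute and therefore lie in a common cyclic subgroup (in a torsion-free Fuchsian group). Choosing a generator $g_0$ of this stabilizer and letting $e$ be the primitive closed geodesic on~$\surface$ associated with the conjugacy class of~$g_0$, we get $g_c = g_0^{m}$ and $g_d = g_0^{n}$ for some nonzero integers $m,n$, so $c \simeq c^*$ and $d \simeq d^*$ are each homotopic to an iterate of~$e$.

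The main obstacle I anticipate is handling cleanly the orientation subtlety together with the cyclicity of the stabilizer: the hypothesis provides only an \emph{unordered} pair of limit points, and reversing a closed curve inverts its deck transformation, but since $g_0$ and $g_0^{-1}$ preserve the same axis, both orientations of~$e$ are absorbed into the single cyclic stabilizer, which makes the conclusion symmetric in the way the statement requires. Everything else reduces to standard properties of closed hyperbolic surfaces and their universal covers.
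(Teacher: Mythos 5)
Your plan is correct and lands on the same underlying algebra as the paper, but it gets there by a slightly different route. The paper does not pass to geodesic representatives of $c$ and $d$: it homotopes $c$ so that the two given lifts intersect, takes the deck transformations $f,g$ associated with the lifts (Lemma~\ref{lem:lift-and-isometries}), and then invokes Lemma~\ref{lem:limit-points-and-isometries}, which states that two nontrivial elements of $\Gamma$ with the same fixed point at $+\infty$ are both positive powers of a common $h\in\Gamma$. That lemma is proved by proper discontinuity (pigeonholing the orbit of a point in a ball to get $f^a(x)=g^b(x)$, hence a common axis) followed by a B\'ezout argument on the translation lengths to manufacture $h=f^u\circ g^v$. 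You instead get the common axis from the uniqueness of the geodesic line with prescribed endpoints at infinity, and then appeal to cyclicity of the stabilizer of that axis. The one place where your justification is too thin is precisely there: ``two hyperbolic isometries sharing an axis commute and therefore lie in a common cyclic subgroup'' is not a proof --- commuting alone does not force cyclicity (think of $\cZ^2$); what you need is that the translation lengths of the elements of $\Gamma$ preserving the axis form a \emph{discrete} subgroup of $\cR$ (by proper discontinuity of $\Gamma$), hence an infinite cyclic one. That discreteness argument is exactly the content the paper supplies in Lemma~\ref{lem:limit-points-and-isometries}, so your plan is sound once that step is fleshed out. A small bonus of your version is that you explicitly handle the unordered-pair/orientation issue, which the paper's proof glosses over (its Lemma~\ref{lem:limit-points-and-isometries} is stated for a common fixed point at $+\infty$, so the reversed case silently produces a negative power); a small cost is that you rely on existence and uniqueness of geodesic representatives, which the paper's argument avoids here.
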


For the sake of concision, we defer the (classical) construction of the limit points and the proofs of the above lemmas to Appendix~\ref{A:limit}. Interestingly, although the properties of the limit points are expressed purely topologically above, they all follows from folklore arguments in hyperbolic geometry:  The limit points are defined by endowing~$\surface$ with a hyperbolic metric and lifting this metric to its universal cover $\tilde \surface$; then $\tilde \surface$ is isometric to the hyperbolic plane, which can be compactified as described in Appendix~\ref{sec:compactification} (by adding the boundary of the open disk in the Poincaré model), the limit points are naturally defined, and the properties follow.  However, we stress that this hyperbolic structure is only used for the proofs in the appendix.  We will need a different hyperbolic metric in the coming subsections.

\subsection{Patch system and hyperbolic metric on the punctured surface}

To prove Proposition~\ref{prop:untangle-loop-graph}, we assume that there is an embedding homotopic to $\lambda$ in $\surface$, and we will prove that $\lambda$ is a weak embedding.  Recall that since the reducing triangulation~$\rtriangulation$ is a graph cellularly embedded on~$\surface$, its patch system corresponds to its dual.  More precisely, to build (the interior of) the patch system~$\patchsystem$ of~$\rtriangulation$, we remove (puncture) a point $p_t$ from the interior of each triangle~$t$ of $\rtriangulation$; then, for each edge $e$ of $\rtriangulation$, we consider the two distinct triangles $t,t'$ incident to $e$ in $\rtriangulation$, and we draw a simple arc between $p_t$ and $p_{t'}$ that crosses $e$ exactly once and does not intersect any other edge or any vertex of $\rtriangulation$. We assume without loss of generality that the resulting arcs are pairwise disjoint. From now on we see the patch system $\patchsystem$ as included in $\surface$ in this way. (Compared to Section~\ref{sec:weak-embeddings}, the patch system is open, not closed, but this does not change anything.)  Now recall from Section~\ref{sec:weak-embeddings} the notion of map $L \to \patchsystem$ that \emph{approximates} $\lambda$. We will show that $\lambda$ is a weak embedding by exhibiting an embedding $L \to \patchsystem$ that approximates $\lambda$. We construct our candidate $\mu : L \to \patchsystem$ for an embedding that approximates $\lambda$ as follows.

Firstly, we endow $\patchsystem$ with a complete hyperbolic metric for which the arcs of $\patchsystem$ are geodesics, as follows. The arcs of $\patchsystem$ separate $\patchsystem$ into open disks. Each resulting connected component $P$ has $k \geq 8$ incidences with arcs since the vertices of $\rtriangulation$ have degree at least eight. We replace $P$ by a hyperbolic polygon with $k$ geodesic sides, with the particularity that these $k$ geodesic sides all have \emph{infinite} length (such polygons are called ideal polygons, see Section~\ref{sec:hyperbolic-surfaces}). Then we identify the sides of the polygons by pairs corresponding to the arcs of $\patchsystem$.

\begin{figure}
\centering
\includegraphics[scale=0.3]{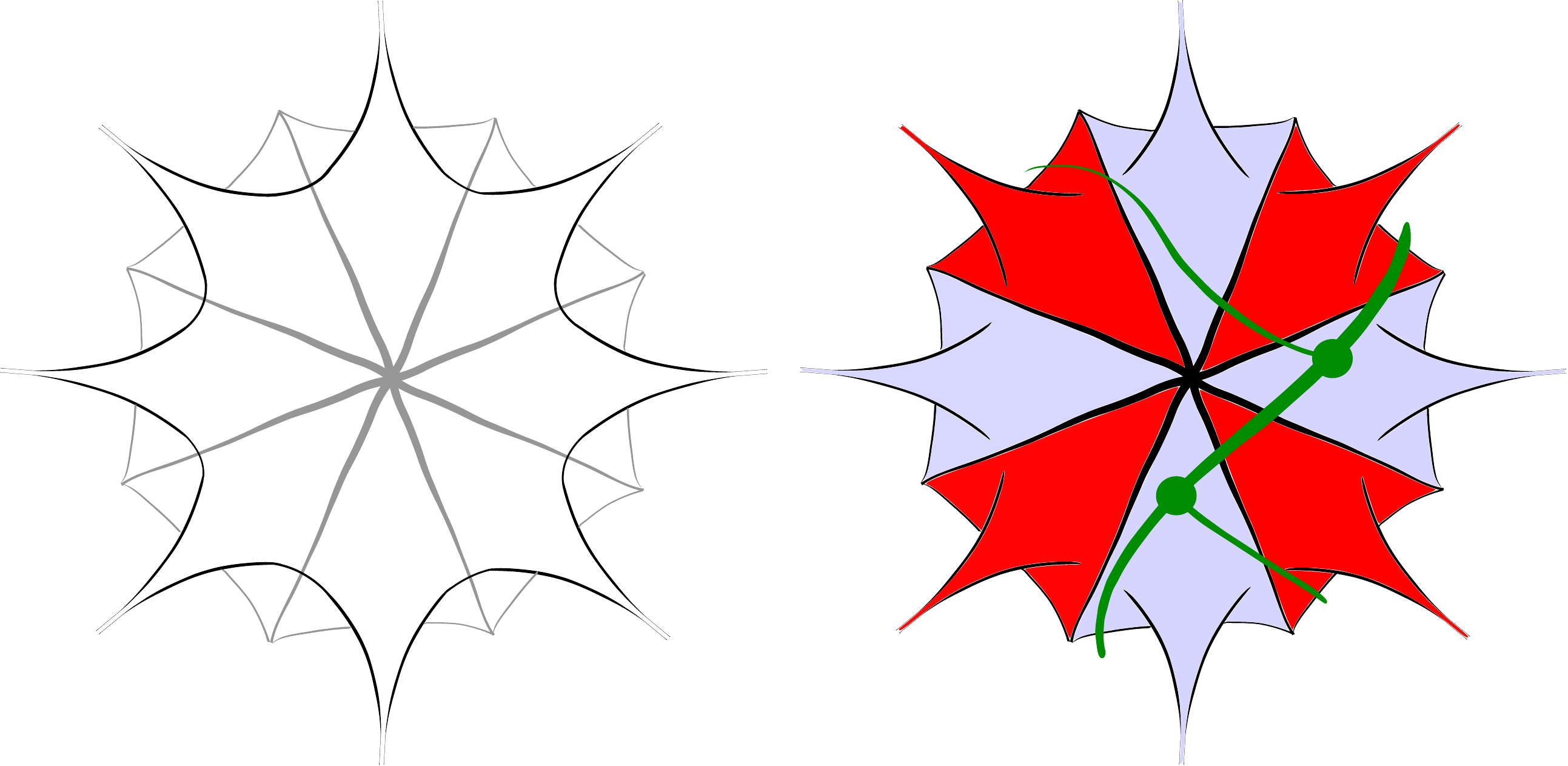}
\caption{The map $\mu$ (in green) in the hyperbolic patch system $\patchsystem$ of the reducing triangulation $\rtriangulation$, in the proof of Proposition~\ref{prop:untangle-loop-graph}.}\label{fig:mu}
\end{figure}

Secondly, we see $\lambda$ as a map from $\loopgraph$ to $\patchsystem$ and we construct $\mu$ by modifying $\lambda$ homotopically on every connected component $\loopgraph_0$ of $\loopgraph$, as follows. Let $v$ be the vertex of $\loopgraph_0$ and let $e$ be the major edge of $\loopgraph_0$. We start by replacing homotopically the image of $e$ by the geodesic closed curve $c_e$ in its free homotopy class, thus possibly changing the image of $v$. Here we make use of the fact that $\lambda(e)$ is neither contractible nor homotopic to the neighborhood of a puncture in $\patchsystem$, since it is non-contractible in $\surface$. We require that during this homotopy $\lambda \simeq \mu$ the image of $v$ remains in the same disk of $\patchsystem$ (never belongs to an arc of $\patchsystem$). We can do so since the strongly reduced closed walk $\lambda(e)$ makes no 0-turn and thus has the same sequence of crossings with the arcs of $\patchsystem$ as the geodesic closed curve $c_e$. Then, we replace for each minor edge $e'$ of $\loopgraph_0$ the image of $e'$ by the geodesic segment in its homotopy class, this time holding the image of $v$ fixed. The map $\mu$ approximates $\lambda$ since the images of the vertices of $\loopgraph$ remained in the same disks of $\patchsystem$ and since for every edge~$e''$ of $\loopgraph$ the walk $\lambda(e'')$ makes no 0-turn.

Finally, we may ensure without loss of generality that the vertices of $\loopgraph$ are mapped by $\mu$ to distinct points of $\patchsystem$, and belong to no arc of~$\patchsystem$. That can be achieved by moving infinitesimally the image of each vertex of $\loopgraph$ along the image of its major edge.

At this point $\mu$ is not necessarily an embedding; indeed, several major edges can overlap. The strategy is to prove that this is the only reason why $\mu$ may fail to be an embedding, and that such overlaps can be eliminated by a small perturbation of~$\mu$.

\subsection{Proof of Proposition~\ref{prop:untangle-loop-graph}}

In this section, we conclude the proof of Proposition~\ref{prop:untangle-loop-graph}.  To ease the reading, we abuse the notation and write $\loopgraph$ to refer to the map $\mu$ on $\loopgraph$ (defined in the previous section).

Recall from Section~\ref{sec:limit-points} that the universal cover $\tilde \surface$ of~$\surface$ is homeomorphic to an open disk, which can be compactified into a closed disk~$\tilde \surface \cup \partial \tilde \surface$, so that the properties of Section~\ref{sec:limit-points} can be applied. Note that, here, we view $\tilde \surface \cup \partial \tilde \surface$ as a purely topological space, and completely ignore any metric on it. In this section, we consider the subset~$\tilde\patchsystem$ of~$\tilde\surface$ made of the points that are lifts of~$\patchsystem$; in other words, $\tilde\patchsystem$ is obtained from~$\tilde\surface$ by removing the lifts of the punctures.
(It turns out that $\tilde\patchsystem$ is a covering space of~$\patchsystem$, but we will never use this property.)  $\tilde\patchsystem$ naturally inherits a hyperbolic metric, obtained by lifting the hyperbolic metric of~$\patchsystem$.

By construction, lifts of edges of~$\loopgraph$ are geodesics under the metric of $\tilde \patchsystem$.  A \emphdef{major lift} is a lift of a major edge; it is a bi-infinite geodesic path in $\tilde \patchsystem$ with two distinct limit points on~$\partial \tilde \surface$, by Lemma~\ref{lem:exist-distinct-limit-points} and sparsity. A \emphdef{minor lift} is a lift of a minor edge; it is a geodesic path in $\tilde \patchsystem$.

If $P$ and $Q$ are minor lifts of distinct minor edges, then they may share one of their endpoints, but not both, since $\loopgraph$ is sparse and since the vertices of $\loopgraph$ are distinct points of $\patchsystem$.  In particular, the minor lifts of distinct minor edges are distinct.  However, the situation is slightly more complicated for major lifts.  Two major lifts obtained from different major edges are considered different.  Recall that major edges are primitive (by Lemma~\ref{L:primitive} and sparsity). So any two major lifts $\tilde c,\tilde d:\cR\to\tilde\patchsystem$ of the same major edge that differ only by a homeomorphism of~$\cR$ actually differ by an integer translation; in that case, we see them as equal.

The proof of Proposition~\ref{prop:untangle-loop-graph} combines a long series of relatively easy lemmas on the lifts of the edges of~$\loopgraph$ in~$\tilde\patchsystem$. Recall that by assumption $L$ can be untangled.

\begin{lemma}\label{L:nogon}
  Every lift of an edge is simple.  Any two lifts of edges (possibly the same edge) that do not overlap intersect at most once, and if they intersect in their relative interiors, then they cross.
\end{lemma}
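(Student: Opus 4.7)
The strategy mimics the proofs of Propositions~\ref{prop:trail-homotopy} and~\ref{prop:uniqueness-reduced-closed-walks}: a hypothetical geodesic monogon (for a self-intersecting lift) or geodesic bigon (for two non-overlapping lifts meeting twice) in~$\tilde\patchsystem$ is translated into a forbidden simple closed subwalk of the lifted reducing triangulation~$\tilde\rtriangulation$ bounding a topological disk, and Lemmas~\ref{lem:gauss-bonnet} and~\ref{lem:use-bipartition} are then applied. This translation is possible because each lift of an edge in~$\tilde\patchsystem$ is a geodesic that crosses the arcs of~$\tilde\patchsystem$ in the same cyclic order as the corresponding lift of the reduced walk~$\lambda(e)$ traverses the edges of~$\tilde\rtriangulation$; this follows from the construction of~$\mu$, which modifies $\lambda$ homotopically without ever introducing a 0-turn in~$\rtriangulation$, so that the two paths visit the same patches in the same order.

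The ``crossing'' part of the statement is immediate: two geodesics meeting at a common interior point without crossing transversally must share a tangent vector there, and by uniqueness of maximal geodesics with a given tangent, they coincide along that maximal geodesic and hence overlap.

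For simplicity, suppose a lift~$\tilde e$ satisfies $\tilde e(t_1)=\tilde e(t_2)=\tilde p$ with $t_1<t_2$, and choose $(t_1,t_2)$ so that $\tilde e$ is simple on the open interval $(t_1,t_2)$. The two tangent vectors of $\tilde e$ at~$\tilde p$ must differ: matching tangents would, by uniqueness of geodesics, force the image of $\tilde e$ to be periodic, which is impossible for a minor edge (whose lift is a finite geodesic segment) and, for a major edge, would contradict the non-contractibility of~$\lambda(e)$ in~$\surface$ granted by sparsity. Hence $\tilde e|_{[t_1,t_2]}$ is a geodesic monogon with one transverse corner at~$\tilde p$, bounding a topological disk in the simply connected~$\tilde\surface$. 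Via the patch-traversal correspondence, this monogon corresponds to a closed subwalk of the lifted reduced walk in~$\tilde\rtriangulation$; from it I extract a simple closed subwalk~$C$ by selecting an innermost repeated pair of vertices. Then $C$ bounds a disk in~$\tilde\rtriangulation$, and every vertex of~$C$ except possibly one ``corner'' vertex inherits its turn from the ambient reduced walk and hence makes a good turn. With the bad-turn set~$S$ satisfying~$|S|\le1$, Lemma~\ref{lem:gauss-bonnet} yields $m_2\ge 6-2m_1\ge 4$, which forces a subsequence of the form $2(1|3)^*2$ whose endpoints both lie outside~$S$; Lemma~\ref{lem:use-bipartition} then exhibits a $2_r$-turn outside~$S$, the desired contradiction, exactly as in the final paragraph of the proof of Proposition~\ref{prop:trail-homotopy}.

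For two non-overlapping lifts $\tilde\alpha,\tilde\beta$ intersecting at distinct points $\tilde p$ and $\tilde q$, a parallel argument applies: the sub-geodesics of~$\tilde\alpha$ and~$\tilde\beta$ between~$\tilde p$ and~$\tilde q$ (well-defined by the simplicity just proven) form a geodesic bigon with two transverse corners, which translates via the same correspondence into a simple closed subwalk of~$\tilde\rtriangulation$ with $|S|\le2$ bad-turn vertices; Lemma~\ref{lem:gauss-bonnet} gives $m_2\ge2$, still sufficient to produce a $2(1|3)^*2$ subsequence outside~$S$ and the same contradiction via Lemma~\ref{lem:use-bipartition}. The main obstacle throughout is the careful translation between the hyperbolic-geometric picture in~$\tilde\patchsystem$ and the combinatorial picture in~$\tilde\rtriangulation$: one must verify that the innermost-repeated-vertex extraction always yields a simple closed walk that bounds a disk and whose only potentially bad-turn vertices are the prescribed corners.
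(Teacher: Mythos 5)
Your handling of the transversality claim and of simplicity is essentially sound: the monogon argument amounts to re-deriving, via Lemmas~\ref{lem:gauss-bonnet} and~\ref{lem:use-bipartition}, the special case of Proposition~\ref{prop:trail-homotopy} in which one of the two homotopic reduced walks is constant, and the extraction of an innermost simple closed subwalk with $|S|\le1$ is the same device as in that proof. You do skip one case, namely when the monogon crosses no arc of $\tilde\patchsystem$ at all: then the associated closed walk in $\tilde\rtriangulation$ is empty and the discharging argument says nothing, and one must instead observe that a geodesic segment contained in a single ideal polygon is a segment of a line of $\cH$ and hence simple. That is a minor omission.

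The bigon case, however, contains a genuine gap. The two sides of the bigon have arc-crossing sequences that are reduced walks $W_1,W_2$ in $\tilde\rtriangulation$ with the same endpoints; since $\tilde\rtriangulation$ triangulates the simply connected $\tilde\surface$, these walks are homotopic, hence \emph{equal} by Proposition~\ref{prop:trail-homotopy}. So the concatenation $W_1\cdot\overline{W_2}$ backtracks onto itself and contains no simple closed subwalk bounding a disk: the ``combinatorial bigon'' you propose to feed to Lemma~\ref{lem:gauss-bonnet} never exists, and the counting of $2(1|3)^*2$ subsequences has nothing to act on. The combinatorics can only establish that the two crossing sequences coincide; the contradiction must then be geometric: two paths in $\tilde\patchsystem$ with the same endpoints and the same arc-crossing sequence are homotopic in $\tilde\patchsystem$ (each disk of the patch system is simply connected with no puncture in its interior), and two distinct homotopic geodesics sharing their endpoints cannot coexist in a hyperbolic surface. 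This is precisely how the paper argues (it equates the crossing sequences by uniqueness of reduced walks and concludes by uniqueness of geodesics in a homotopy class); your proposal omits this step, so the ``intersect at most once'' assertion is not established.
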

\begin{proof}
If two portions of geodesic paths intersect in their relative interiors without overlapping, then they cross. Now assume, for a contradiction, that a lift is not simple; it contains a loop, which projects to a geodesic loop $\ell$ in $\patchsystem$ that is contractible in~$\surface$. The sequence of crossings of $\ell$ with the arcs of $\patchsystem$ is that of a reduced walk.  Thus, by Proposition~\ref{prop:trail-algo}, the loop~$\ell$ does not cross any arc of $\patchsystem$, and so $\ell$ is contractible in $\patchsystem$, which is impossible since $\ell$ is geodesic. 

Similarly, if two lifts intersect twice without overlapping, they form two paths with the same endpoints and otherwise disjoint, which project to geodesic paths $p$ and~$q$ in~$\patchsystem$ that are homotopic in~$\surface$.  The sequence of crossings of $p$ and~$q$ with the arcs of~$\patchsystem$ must be the same by Proposition~\ref{prop:trail-algo}, so $p$ and~$q$ are homotopic in~$\patchsystem$, which is impossible since $p$ and~$q$ are geodesics.
\end{proof}

\begin{lemma}\label{L:maj}
    Any two major lifts either have the same image or are disjoint.
\end{lemma}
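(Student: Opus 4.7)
The plan is to argue by contradiction: suppose two major lifts $\tilde c$ and $\tilde d$ are neither disjoint nor equal in image. Because $\tilde c$ and $\tilde d$ are bi-infinite geodesics of the hyperbolic metric inherited by $\tilde\patchsystem$, and two geodesics that coincide on a nondegenerate arc must be identical (by uniqueness of the maximal geodesic with a given tangent vector), every intersection of $\tilde c$ and $\tilde d$ is a transverse crossing in the relative interior. Lemma~\ref{L:nogon} then forces $\tilde c$ and $\tilde d$ to cross at exactly one point.

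Next I would translate this single crossing into a statement about limit points on $\partial\tilde\surface$. Lemma~\ref{lem:separate-limit-points}, applied in the ``intersect exactly once'' case, guarantees that the four limit points of $\tilde c$ and $\tilde d$ are pairwise distinct. I would then invoke the following standard fact about the closed disk $\tilde\surface\cup\partial\tilde\surface$: a simple bi-infinite path with distinct limit points $a',b'$, together with one of the two boundary arcs between $a'$ and $b'$, is a Jordan curve that separates the closed disk into two regions. Applying this to $\tilde d$, the single transverse crossing of $\tilde c$ with $\tilde d$ forces the two limit points of $\tilde c$ to lie in different regions, so the four limit points interlink cyclically on $\partial\tilde\surface$.

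Finally I would exploit the hypothesis that $\lambda$ is homotopic to an embedding $\psi:\loopgraph\to\surface$. Let $e,e'$ be the major edges (possibly $e=e'$) of which $\tilde c,\tilde d$ are lifts, and lift the homotopy $\lambda\simeq\psi$ to obtain bi-infinite paths $\tilde\psi(\tilde e)$ and $\tilde\psi(\tilde e')$ homotopic to $\tilde c$ and $\tilde d$ respectively. By Lemma~\ref{lem:lift-limit-points} these inherit the same pairs of limit points as $\tilde c$ and $\tilde d$, so their four limit points also interlink. But $\psi(e)$ and $\psi(e')$ are disjoint in $\surface$: if $e\neq e'$ they lie in distinct connected components of $\loopgraph$ with disjoint vertex sets, and if $e=e'$ then distinct lifts of a simple closed curve to the universal cover are disjoint by a standard covering-space argument. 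Hence $\tilde\psi(\tilde e)$ and $\tilde\psi(\tilde e')$ are disjoint simple bi-infinite paths in $\tilde\surface$, which by the Jordan-curve argument of the previous paragraph is incompatible with their limit points interlinking, a contradiction.

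The main subtle point I expect is the bookkeeping of lifts across the homotopy $\lambda\simeq\psi$, ensuring that Lemma~\ref{lem:lift-limit-points} genuinely pairs the limit points of $\tilde c$ with those of $\tilde\psi(\tilde e)$ (and likewise for the primed versions), so that the final topological obstruction is extracted from a genuinely disjoint pair of lifts; once this matching is in place, the argument reduces to a clean interplay between hyperbolic rigidity on $\tilde\patchsystem$ and purely topological disk arguments on $\tilde\surface\cup\partial\tilde\surface$.
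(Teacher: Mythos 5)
Your proposal is correct and follows essentially the same route as the paper: assume a crossing, use Lemma~\ref{L:nogon} to reduce to a single transverse crossing, Lemma~\ref{lem:separate-limit-points} to get four distinct, interleaved limit points, and Lemma~\ref{lem:lift-limit-points} to transfer these to the lifts of the untangled embedding, contradicting their disjointness. The only difference is that you spell out the Jordan-curve separation argument and the disjointness of the embedded lifts, which the paper leaves implicit (pointing to a figure).
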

\begin{proof}
    Assume that two major lifts $P$ and~$Q$ intersect and do not have the same image. They cross exactly once by Lemma~\ref{L:nogon}. But then, their limit points are pairwise distinct by Lemma~\ref{lem:separate-limit-points}, and interleaved on~$\partial \tilde \surface$ (see Figure~\ref{fig:major-major-minor} (left)), which implies that $\loopgraph$ cannot be untangled by Lemma~\ref{lem:lift-limit-points}.
\end{proof}

\begin{figure}
\centering
\def\svgwidth{25em}
\begingroup%
  \makeatletter%
  \providecommand\color[2][]{%
    \errmessage{(Inkscape) Color is used for the text in Inkscape, but the package 'color.sty' is not loaded}%
    \renewcommand\color[2][]{}%
  }%
  \providecommand\transparent[1]{%
    \errmessage{(Inkscape) Transparency is used (non-zero) for the text in Inkscape, but the package 'transparent.sty' is not loaded}%
    \renewcommand\transparent[1]{}%
  }%
  \providecommand\rotatebox[2]{#2}%
  \newcommand*\fsize{\dimexpr\f@size pt\relax}%
  \newcommand*\lineheight[1]{\fontsize{\fsize}{#1\fsize}\selectfont}%
  \ifx\svgwidth\undefined%
    \setlength{\unitlength}{156.10342026bp}%
    \ifx\svgscale\undefined%
      \relax%
    \else%
      \setlength{\unitlength}{\unitlength * \real{\svgscale}}%
    \fi%
  \else%
    \setlength{\unitlength}{\svgwidth}%
  \fi%
  \global\let\svgwidth\undefined%
  \global\let\svgscale\undefined%
  \makeatother%
  \begin{picture}(1,0.47736671)%
    \lineheight{1}%
    \setlength\tabcolsep{0pt}%
    \put(0,0){\includegraphics[width=\unitlength,page=1]{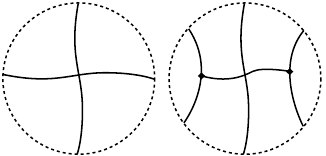}}%
    \put(0.23972012,0.362613){\color[rgb]{0,0,0}\makebox(0,0)[lt]{\lineheight{1.25}\smash{\begin{tabular}[t]{l}$P$\end{tabular}}}}%
    \put(0.75052051,0.38804342){\color[rgb]{0,0,0}\makebox(0,0)[lt]{\lineheight{1.25}\smash{\begin{tabular}[t]{l}$P$\end{tabular}}}}%
    \put(0.09043249,0.25865749){\color[rgb]{0,0,0}\makebox(0,0)[lt]{\lineheight{1.25}\smash{\begin{tabular}[t]{l}$Q$\end{tabular}}}}%
    \put(0.65784072,0.24933811){\color[rgb]{0,0,0}\makebox(0,0)[lt]{\lineheight{1.25}\smash{\begin{tabular}[t]{l}$Q$\end{tabular}}}}%
    \put(0.90302536,0.27883505){\makebox(0,0)[lt]{\lineheight{1.25}\smash{\begin{tabular}[t]{l}$R$\end{tabular}}}}%
    \put(0.54720872,0.28494846){\makebox(0,0)[lt]{\lineheight{1.25}\smash{\begin{tabular}[t]{l}$R'$\end{tabular}}}}%
  \end{picture}%
\endgroup%

\caption{(Left) Proof of Lemma~\ref{L:maj}. (Right) Proof of Lemma~\ref{L:majmin}.}\label{fig:major-major-minor}
\end{figure}

\begin{lemma}\label{L:majminsame}
  Let $P$ be a major lift and $Q$ a minor lift such that $P$ and~$Q$ are lifts of edges in the same connected component of~$\loopgraph$, and such that an endpoint of~$Q$ belongs to the image of~$P$.  Then no other point of~$Q$ lies on the image of~$P$.
\end{lemma}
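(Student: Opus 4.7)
The plan is to assume, for contradiction, that $Q$ meets $P$ at a second point beyond the specified endpoint~$\tilde{v}_1$, and split into two cases according to whether $P$ and $Q$ overlap. If they do not overlap, Lemma~\ref{L:nogon} already forbids two intersection points, yielding an immediate contradiction. Otherwise $P$ and $Q$ share a non-trivial geodesic segment, and the plan is to invoke the standard uniqueness of hyperbolic geodesics in~$\tilde{\patchsystem}$ (a geodesic is determined globally by a point and a tangent direction) to conclude that $Q$ and $P$ lie on a common maximal geodesic. Since $P$ is already bi-infinite, this forces the entire image of $Q$ to be contained in the image of~$P$, so in particular the other endpoint $\tilde{v}_2$ of~$Q$ also lies on~$P$.

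Let $e$ and $e'$ be the major and minor edges of the common component, and $v$ their common basepoint. Both $\tilde{v}_1$ and $\tilde{v}_2$ are lifts of $\mu(v)$; because $\mu(v)$ was slid onto the closed geodesic $c_e = \mu(e)$ during the construction of $\mu$, and $P$ is a lift of~$c_e$, the lifts of $\mu(v)$ lying on $P$ form a single orbit under the deck transformation $\gamma_e$ associated to~$e$. Hence $\tilde{v}_2 = \gamma_e^{k}(\tilde{v}_1)$ for some integer $k$, and comparing with $\gamma_{e'}(\tilde{v}_1) = \tilde{v}_2$ yields the based-homotopy relation $\mu(e') \simeq \mu(e)^k$; sparsity forces $k \neq 0$.

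To conclude, I would exploit the hypothesis that $\loopgraph$ can be untangled: fix an embedding $\delta$ of $\loopgraph$ homotopic to~$\lambda$ (hence to~$\mu$). The closed curves $\delta(e)$ and $\delta(e')$ are simple and non-contractible on~$\surface$, hence primitive by Lemma~\ref{L:primitive}. The free-homotopy relation $\delta(e') \simeq \delta(e)^k$ together with the primitivity of $\delta(e')$ forces $|k|=1$, because a proper power of any non-trivial element cannot be primitive. Therefore $\mu(e')$ and $\mu(e)$ are freely homotopic, contradicting the sparsity of~$\lambda$. The main obstacle is the step that upgrades a shared geodesic segment to full containment $Q \subset P$, which relies on the uniqueness of geodesics in the (non-compact) hyperbolic surface $\tilde{\patchsystem}$: one must check carefully that no portion of $Q$ can escape $P$ once it has joined it, despite $Q$ having finite endpoints while $P$ extends to infinity.
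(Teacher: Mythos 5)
Your proposal is correct and follows essentially the same route as the paper: Lemma~\ref{L:nogon} forces an overlap, uniqueness of maximal geodesics places $Q$ inside the bi-infinite geodesic $P$, the segment of $P$ between the two endpoints of $Q$ then projects to a power $e^k$ of the major edge while $Q$ projects to the minor edge, and primitivity (Lemma~\ref{L:primitive}, via the untangling hypothesis) forces $|k|=1$, contradicting sparsity. The paper phrases this more tersely (identifying $Q$ with the subpath $P'$ of $P$ rather than arguing through the deck-transformation orbit of the lifted basepoint), but the substance is identical, and the containment step you flag as delicate is exactly the overlap-implies-common-maximal-geodesic property already recorded in the preliminaries.
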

\begin{proof}
  Otherwise, $P$ and~$Q$ overlap by Lemma~\ref{L:nogon}. Let $P'$ be the part of~$P$ that starts and ends at the endpoints of~$Q$ (which are both lifts of the vertex of the corresponding connected component~$\loopgraph_0$ of~$\loopgraph$).  By Lemma~\ref{L:primitive}, the projection of~$P'$ is the major edge of~$\loopgraph_0$.  Moreover, the projection of~$Q$ is a minor edge of~$\loopgraph_0$.  This contradicts the sparsity of~$\loopgraph$.
\end{proof}

Before treating the other cases, we need a definition.  Let $P$ be a minor lift and let $v$ and~$v'$ be the endpoints of~$P$ (they are distinct). Let $p$ be the minor edge of~$\loopgraph$ that is the projection of~$P$, and let $q$ be the major edge of~$\loopgraph$ that lies in the same connected component of~$\loopgraph$ as~$p$.  Let $Q$ and~$Q'$ be the major lifts, starting at~$v$ and~$v'$ respectively, that are lifts of~$q$.  We say that $Q$ and~$Q'$ form the \emphdef{H-block} of~$P$.  Indeed, $Q$, $P$, and~$Q'$ together form the shape of the embedded letter~``H'', because they touch only at $v$ and~$v'$ by the preceding lemmas, and moreover the four limit points of $Q$ and~$Q'$ are pairwise distinct by Lemma~\ref{lem:separate-limit-points}.

\begin{lemma}\label{L:majmin}
  There is no intersection between a major lift~$P$ and the relative interior of a minor lift~$Q$.
\end{lemma}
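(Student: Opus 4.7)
The approach is to argue by contradiction: suppose a major lift $P$ meets the relative interior of a minor lift $Q$, and derive an impossibility from the existence of a homotopic embedding $\varphi$ by tracking the limit points of the relevant lifts on $\partial\tilde\surface$ under the lifted homotopy.

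\smallskip\noindent\textbf{Setup.} Let $v,v'$ be the endpoints of $Q$ and let $R,R'$ be the H-block of $Q$, i.e.\ the major lifts of $Q$'s component's major edge $q$ based at $v$ and $v'$. The first thing I would verify is that $R\neq R'$: if $R=R'$ as sets, then $v'=g^k v$ where $g$ is the deck transformation associated to going once around $q$; projecting this shows that the minor edge underlying $Q$ is freely homotopic to $q^k$. In the hypothetical embedding $\varphi$ this minor edge is a simple non-contractible closed curve, hence primitive by Lemma~\ref{L:primitive}, so $k=\pm1$, which contradicts sparsity. Thus Lemma~\ref{lem:separate-limit-points} applies to $R,R'$, giving four pairwise distinct non-interleaving limit points $r_1,r_2$ and $r'_1,r'_2$ on $\partial\tilde\surface$.

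\smallskip\noindent\textbf{Ruling out overlap; locating $v,v'$.} By Lemma~\ref{L:nogon}, $P$ and $Q$ either overlap or cross transversely at a single point. If they overlapped, the finite geodesic $Q$ would be contained in the bi-infinite geodesic $P$, so $v\in P$; applying Lemma~\ref{L:maj} to the major lifts $P$ and $R$ sharing $v$ gives $P=R$ as sets, and similarly $P=R'$, contradicting $R\neq R'$. Hence $P$ crosses $Q$ transversely at a unique interior point $x$. Having both $v$ and $x$ on $P$ would again force overlap, so $v\notin P$, and similarly $v'\notin P$; thus $v,v'$ lie in opposite open half-disks of $\tilde\surface$ determined by $P\cup\{p_1,p_2\}$ (where $p_1,p_2$ are the limit points of $P$). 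By Lemma~\ref{L:maj}, $P$ is disjoint from $R$ (since $v\in R\setminus P$) and from $R'$, so $R$ lies entirely in $v$'s half-disk and $R'$ in $v'$'s. Therefore $\{p_1,p_2\}$ separates $\{r_1,r_2\}$ from $\{r'_1,r'_2\}$ on $\partial\tilde\surface$.

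\smallskip\noindent\textbf{Contradiction via the embedding.} Now I would transfer this picture to the embedding $\varphi$. By Lemma~\ref{lem:lift-limit-points}, the corresponding lifts $P^\varphi,R^\varphi,R'^\varphi$ have the same limit points as $P,R,R'$, so the cyclic separation persists. In $\varphi$, either $P$ and $Q$ lie in distinct components of $\loopgraph$ (so the underlying edges share no vertex and their images in $\surface$ are disjoint, forcing their lifts to be globally disjoint), or both edges lie in the same component and $P^\varphi,R^\varphi,R'^\varphi$ are distinct lifts of the single primitive simple closed curve $q$, hence pairwise disjoint. In either case $P^\varphi$ is disjoint from $R^\varphi\cup R'^\varphi$, so it separates the lift $v^\varphi$ of the vertex on $R^\varphi$ from $v'^\varphi$ on $R'^\varphi$. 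The embedded arc $Q^\varphi$ from $v^\varphi$ to $v'^\varphi$ must then cross $P^\varphi$ at an interior point. However, in the embedding, the two edges underlying $P^\varphi$ and $Q^\varphi$ only meet at shared vertices, so any such crossing point must be a vertex-lift; on the other hand the minor loop is simple in $\varphi$ and therefore does not revisit the vertex in its interior, so $Q^\varphi$ carries no interior vertex-lift, a contradiction.

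\smallskip\noindent\textbf{Main obstacle.} The conceptually delicate step is showing that the H-block is non-degenerate ($R\neq R'$), which hinges on combining primitivity of embedded edges (Lemma~\ref{L:primitive}) with sparsity; the same argument underlies the exclusion of overlap. A secondary subtlety is the same-component subcase of the final step, where one must use that a simple embedded loop does not revisit its basepoint in its interior to prevent $Q^\varphi$ and $P^\varphi$ from meeting at an internal vertex-lift.
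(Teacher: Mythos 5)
Your proof follows the paper's architecture quite closely (H-block of $Q$, overlap excluded via Lemma~\ref{L:maj}, contradiction in the crossing case from the invariance of limit points under the lifted homotopy), but there is a genuine gap at the step where you pass from ``$R$ lies entirely in $v$'s half-disk and $R'$ in $v'$'s'' to ``$\{p_1,p_2\}$ separates $\{r_1,r_2\}$ from $\{r'_1,r'_2\}$ on $\partial\tilde\surface$''. Containment of $R$ in an open half-disk bounded by $P$ only places $r_1,r_2$ in the corresponding \emph{closed} boundary arc, and two disjoint geodesics can perfectly well share a limit point; moreover, by the rigidity of the associated hyperbolic translations (Lemma~\ref{lem:limit-points-and-isometries}), sharing one limit point forces $\{r_1,r_2\}=\{p_1,p_2\}$. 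In that degenerate case your final step collapses: after the homotopy, the limit points no longer force $R^\varphi$ onto the $v$-side of $P^\varphi$, so $R^\varphi$ and $R'^\varphi$ could lie on the same side, $Q^\varphi$ need not cross $P^\varphi$, and no contradiction arises. Excluding this coincidence is a substantive step that your proposal omits. The paper does it by showing $P$ cannot have the same pair of limit points as $R$ (or $R'$): otherwise, by Lemma~\ref{L:samelimit} and Lemma~\ref{L:primitive} the two major edges would be freely homotopic in $\surface$ (up to reversal), hence freely homotopic in $\patchsystem$ by Proposition~\ref{prop:uniqueness-reduced-closed-walks}, hence would have the same closed geodesic representative, forcing $P$ and $R$ to have the same image and contradicting their disjointness. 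You need this argument (or an equivalent one) before the separation claim is legitimate.

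A secondary remark: your preliminary verification that $R\neq R'$ rederives a property already built into the H-block (the paper obtains disjointness of $R$ and $R'$ directly from Lemma~\ref{L:majminsame}, since otherwise both endpoints of $Q$ would lie on the image of a single major lift, and then Lemma~\ref{lem:separate-limit-points} gives the four distinct limit points). Your variant via ``$v'=g^kv$'' tacitly assumes that every vertex-lift on the image of $R$ belongs to the single $\langle g\rangle$-orbit of $v$, which is not immediate if the closed geodesic representing $q$ in $\patchsystem$ passes through the image of the basepoint more than once per period. This is a cosmetic issue compared with the limit-point coincidence above, but worth tightening.
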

\begin{proof}
  Let $R$ and~$R'$ be the major lifts forming the H-block of~$Q$. Assume first that there is an intersection between $P$ and the relative interior of~$Q$ that is not a crossing. Then $P$ and~$Q$ overlap by Lemma~\ref{L:nogon}. In particular, $P$ intersects $R$ and $R'$. However $R$ and~$R'$ are disjoint, being part of the same H-block. This contradicts Lemma~\ref{L:maj}.

  Now assume that there is an intersection between $P$ and the relative interior of~$Q$ that is a crossing.  By Lemma~\ref{L:nogon}, there is exactly one. By Lemma~\ref{L:maj}, and since $P$ does not have the same image as~$R$ or~$R'$, the major lifts $P$, $R$, and~$R'$ are disjoint.  See Figure~\ref{fig:major-major-minor} (right).  Also, $P$ does not have the same pair of limit points as~$R$ (or~$R'$): for otherwise, by Lemma~\ref{L:samelimit} and Lemma~\ref{L:primitive}, $P$ and $R$ would project to closed curves freely homotopic in $S$ (up to reversal), and thus freely homotopic in $\patchsystem$ by Proposition~\ref{prop:uniqueness-reduced-closed-walks}, but then $P$ and $R$ would project to the same geodesic closed curve, which is impossible since they are disjoint.  Any homotopy of~$\loopgraph$ induces a homotopy of the lifts $P$, $R$, and~$R'$, preserving the limit points on~$\partial \tilde \surface$.  Thus, even after a homotopy, one crossing between $P$ and $Q$, $R$, or~$R'$ must remain (even if $P$ shares one endpoint with~$R$ and/or one endpoint with~$R'$). This contradicts the assumption that $L$ can be untangled.
\end{proof}

Let $v$ be a vertex of~$\loopgraph$, and let $e$ be a directed major edge of~$\loopgraph$ based at~$v$.  We say that $e$ is \emphdef{pulled to the right} if there exists a minor edge based at~$v$ that leaves~$v$ and/or arrives at~$v$ to the right of~$e$; in other words, in counterclockwise order around~$v$, we do not see consecutively the target of~$e$ and the source of~$e$.  We say that a directed major lift is \emph{pulled to the right} if its projection is.

\begin{lemma}\label{L:majoverlap}
  Let $P$ and~$Q$ be two major lifts, projecting to different edges of~$\loopgraph$. Assume that they overlap; assume (up to reversing one of them) that they have the same direction.  Then (1) at most one of $P$ and~$Q$ is pulled to the right, and (2) none of them is pulled both to the left and to the right.
\end{lemma}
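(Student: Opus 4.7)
The plan is to derive both (1) and (2) by contradiction from the assumption that $\loopgraph$ can be untangled, i.e., that there is an embedding $\epsilon:\loopgraph\to\surface$ homotopic to $\mu$. Since $P$ and $Q$ are bi-infinite geodesics of $\tilde\patchsystem$ that overlap and share a direction, they are pieces of the same maximal geodesic, hence have the same image in $\tilde\patchsystem$ and the same ordered pair of limit points in $\partial\tilde\surface$. Let $p,q$ be the major edges whose lifts are $P,Q$, based at vertices $v_p,v_q$. Combining Lemmas~\ref{L:samelimit} and~\ref{L:primitive} (the latter applied to the simple closed curves $\epsilon(p)$ and $\epsilon(q)$), the curves $\epsilon(p)$ and $\epsilon(q)$ are freely homotopic in $\surface$ as oriented closed curves. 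They are simple and disjoint (distinct components of $\loopgraph$ share no vertex), so they cobound an annulus $A\subset\surface$; moreover, being freely homotopic with the same orientation, walking along each in its orientation places $A$ on opposite sides (the two boundary circles of an oriented annulus receive opposite induced orientations).

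The crux is the following claim: under $\epsilon$, no minor edge of the component of $\loopgraph$ containing $p$, nor of the one containing $q$, has its interior in $A$. Indeed, let $m$ be such a minor edge in the component of $p$ with $\epsilon(m)$ having interior in $A$. Then $\epsilon(m)\subset\bar A$ is a simple loop based at $v_p\in\partial\bar A$. In the closed annulus $\bar A$, such a simple loop is either contractible --- hence contractible in $\surface$, contradicting non-contractibility from sparsity --- or isotopic to a boundary component, in which case it is based-homotopic in $\bar A$ to $\epsilon(p)^{\pm1}$, so $m$ is homotopic to $p^{\pm1}$ as an edge, contradicting pairwise non-homotopy from sparsity. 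Because $\epsilon(m)$ is disjoint from $\epsilon(q)$ and meets $\epsilon(p)$ only at $v_p$, its interior lies entirely in a single connected component of $\surface\setminus(\epsilon(p)\cup\epsilon(q))$; the claim says that component is never $A$.

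Parts (1) and (2) then follow by case analysis. Let $S_p$ (respectively $S_q$) be the connected component of $\surface\setminus(\epsilon(p)\cup\epsilon(q))$ locally on the right of $\epsilon(p)$ (respectively $\epsilon(q)$) at $v_p$ (respectively $v_q$); each is either $A$ or a component disjoint from $A$. By the orientation argument, exactly one of $S_p,S_q$ equals $A$. If $P$ is pulled to the right, some minor edge in the component of $p$ lies locally to the right of $\epsilon(p)$ at $v_p$, so its interior lies in $S_p$, and the claim forces $S_p\ne A$. Symmetrically, $P$ pulled to the left forces $S_p=A$. These being incompatible yields (2) for $P$, and (2) for $Q$ follows identically. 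For (1), if both $P$ and $Q$ were pulled to the right, then $S_p\ne A$ and $S_q\ne A$, contradicting that exactly one of them equals $A$. The main delicate point is the orientation bookkeeping guaranteeing that exactly one of $S_p,S_q$ equals $A$; once this is in hand, everything else reduces to the annulus claim and the definition of ``pulled to the right''.
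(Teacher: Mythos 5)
Your overall strategy --- descending to the surface, producing the annulus $A$ cobounded by $\epsilon(p)$ and $\epsilon(q)$, showing that minor edges of the two components are excluded from $A$, and using orientations to place $A$ on the right of exactly one of the two curves --- is a legitimate alternative to the paper's argument, which stays in the universal cover and derives a forced crossing between $\bar Q$ and the H-block of the minor lift that pulls $P$. The annulus claim, the derivation of $\epsilon(p)\simeq\epsilon(q)$ as oriented curves from Lemmas~\ref{L:samelimit} and~\ref{L:primitive}, and the orientation bookkeeping (which you single out as the delicate point) are all sound.

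The genuine gap is the sentence ``If $P$ is pulled to the right, some minor edge in the component of $p$ lies locally to the right of $\epsilon(p)$ at $v_p$.'' Being pulled to the right is, by definition, a condition on the drawing $\mu$ (the rotation at $v_p$ in the geodesic representative), whereas your sets $S_p,S_q$ are defined from the embedding $\epsilon$. The homotopy from $\mu$ to $\epsilon$ could a priori move a minor edge from one side of its major edge to the other, and nothing in your write-up rules this out; without it, the hypothesis (about $\mu$) is never connected to the case analysis (about $\epsilon$). The fact is true, but establishing it is precisely the content that the paper extracts from the H-block and limit-point machinery: if $M$ is a lift of the minor edge with H-block $\{P,P'\}$, its relative interior is disjoint from $P$ (Lemma~\ref{L:majmin}), so the side of $P$ on which $M$ departs is the side containing $P'$, which is read off from which arc of $\partial\tilde\surface$ (cut by the limit points of $P$) contains the limit points of $P'$; these limit points are homotopy invariants (Lemma~\ref{lem:lift-limit-points}), and in the embedding the interior of $\bar M$ is again disjoint from $\bar P$, so $\bar M$ departs on the same side. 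Adding this argument (which is available without circularity, since Lemmas~\ref{L:maj}--\ref{L:majmin} are proved before the present statement) would complete your proof.
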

\begin{figure}
\centering
\def\svgwidth{20em}
\begingroup%
  \makeatletter%
  \providecommand\color[2][]{%
    \errmessage{(Inkscape) Color is used for the text in Inkscape, but the package 'color.sty' is not loaded}%
    \renewcommand\color[2][]{}%
  }%
  \providecommand\transparent[1]{%
    \errmessage{(Inkscape) Transparency is used (non-zero) for the text in Inkscape, but the package 'transparent.sty' is not loaded}%
    \renewcommand\transparent[1]{}%
  }%
  \providecommand\rotatebox[2]{#2}%
  \newcommand*\fsize{\dimexpr\f@size pt\relax}%
  \newcommand*\lineheight[1]{\fontsize{\fsize}{#1\fsize}\selectfont}%
  \ifx\svgwidth\undefined%
    \setlength{\unitlength}{104.78780365bp}%
    \ifx\svgscale\undefined%
      \relax%
    \else%
      \setlength{\unitlength}{\unitlength * \real{\svgscale}}%
    \fi%
  \else%
    \setlength{\unitlength}{\svgwidth}%
  \fi%
  \global\let\svgwidth\undefined%
  \global\let\svgscale\undefined%
  \makeatother%
  \begin{picture}(1,0.59243698)%
    \lineheight{1}%
    \setlength\tabcolsep{0pt}%
    \put(0,0){\includegraphics[width=\unitlength,page=1]{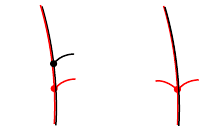}}%
    \put(0.14911882,0.4601921){\color[rgb]{1,0,0}\makebox(0,0)[lt]{\lineheight{1.25}\smash{\begin{tabular}[t]{l}$P$\end{tabular}}}}%
    \put(0.23592508,0.49533528){\makebox(0,0)[lt]{\lineheight{1.25}\smash{\begin{tabular}[t]{l}$Q$\end{tabular}}}}%
    \put(0.7126099,0.46743274){\color[rgb]{1,0,0}\makebox(0,0)[lt]{\lineheight{1.25}\smash{\begin{tabular}[t]{l}$P$\end{tabular}}}}%
    \put(0.80099611,0.49302001){\makebox(0,0)[lt]{\lineheight{1.25}\smash{\begin{tabular}[t]{l}$Q$\end{tabular}}}}%
  \end{picture}%
\endgroup%

\caption{The two cases forbidden by Lemma~\ref{L:majoverlap}.}\label{fig:left-right-property}
\end{figure}
\begin{proof}
  We first prove~(1); see Figure~\ref{fig:left-right-property}, left.  Assume, for the sake of a contradiction, that both $P$ and~$Q$ are pulled to the right.  Let $\bar P$ and~$\bar Q$ be these lifts after a homotopy that untangles~$\loopgraph$.  (Generally, we use bars to denote lifts after this homotopy.) They must be disjoint (except at their limit points), so without loss of generality, up to exchanging $P$ and~$Q$, assume that $\bar P$ lies to the left of~$\bar Q$.  Let $R$ be the minor lift pulling $P$ to the right; thus, $P$ is part of the H-block of $R$; let $P'$ be the lift that forms the H-block of~$R$ together with~$P$.  Then, $\bar Q$ intersects either $\bar R$, or $\bar P'$ (twice), which is impossible.

  We now prove~(2); see Figure~\ref{fig:left-right-property}, right.  Assume that $P$ is pulled both to the left and to the right.  After the homotopy, the lifts $\bar P$ and~$\bar Q$ are disjoint, so without loss of generality, up to reversing the orientations of $P$ and~$Q$, assume that $\bar P$ lies to the left of~$\bar Q$.  We are then in the same situation as the previous paragraph, and conclude similarly.
\end{proof}

\begin{lemma}\label{L:minmin}
  The relative interiors of any two distinct minor lifts are disjoint.
\end{lemma}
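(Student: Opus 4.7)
\emph{Proof plan.} Assume for contradiction that two distinct minor lifts $P$ and $Q$ have intersecting relative interiors. Let $(R,P,R')$ and $(S,Q,S')$ denote their respective H-blocks. By Lemma~\ref{L:nogon}, either $P$ and $Q$ cross transversally exactly once, or they overlap on a sub-arc of a common bi-infinite geodesic $\gamma$ of $\tilde\patchsystem$. I treat these two cases separately, using the H-blocks to translate the assumption into a topological configuration that cannot be untangled.

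\emph{Overlap case.} Each endpoint of $P$ lies on one of the major lifts $R,R'$ of its H-block; by Lemma~\ref{L:majmin} no major lift meets the relative interior of $Q$, so no endpoint of $P$ can lie in the relative interior of $Q$. By symmetry, no endpoint of $Q$ lies in the relative interior of $P$ either. For two overlapping geodesic segments sharing the line $\gamma$ whose endpoints do not fall in each other's relative interiors, a direct case analysis on the position of the endpoints along $\gamma$ forces $P$ and $Q$ to have the same endpoint pair, hence $P=Q$ as geodesic segments, contradicting distinctness (since distinct minor lifts are distinct as parametrized geodesic paths in $\tilde\patchsystem$).

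\emph{Crossing case.} Because $P$ and $Q$ cross once, the H-shape $R\cup P\cup R'$ separates $\tilde\surface\cup\partial\tilde\surface$ into four regions, and the two endpoints of $Q$ lie in opposite ``middle'' regions across $P$. By Lemma~\ref{L:majmin}, neither $S$ nor $S'$ crosses the interior of $P$, and by Lemma~\ref{L:maj} each of $S,S'$ is either equal to or disjoint from each of $R,R'$; consequently $S$ and $S'$ are confined to the closures of opposite middle regions of $P$'s H-block. This forces the four limit points of $S,S'$ to lie in two specific arcs of $\partial\tilde\surface$ determined by the limit points of $R,R'$, producing an interleaving on $\partial\tilde\surface$ in the spirit of the proof of Lemma~\ref{L:majmin}. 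By Lemma~\ref{lem:lift-limit-points}, these limit points are preserved by any homotopy, so after untangling, $\bar S$ and $\bar{S}'$ remain in opposite middle regions of $\bar R\cup\bar P\cup\bar R'$; hence $\bar Q$ must still cross one of $\bar P,\bar R,\bar R'$, contradicting that the untangling is an embedding.

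The main obstacle will be the crossing case, in particular the degenerate subcases where some of the four major lifts $R,R',S,S'$ coincide. To handle these, one invokes Lemma~\ref{L:majminsame} (forbidding a major lift to contain both endpoints of a minor lift in its own component) together with the primitivity of major edges (Lemma~\ref{L:primitive} and sparsity) to rule out accidental coincidences of limit points, so that the interleaving argument still produces a forced crossing after untangling. The overlap case, by contrast, follows cleanly from Lemma~\ref{L:majmin} applied at the endpoints.
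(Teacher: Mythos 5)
Your overall strategy is the paper's: the overlap case is dispatched via Lemma~\ref{L:majmin} essentially as in the paper (an endpoint of one minor lift in the relative interior of the other would put a major lift in that relative interior), and your generic crossing case---all four major lifts of the two H-blocks pairwise disjoint, eight limit points interleaved on $\partial\tilde\surface$ and preserved under homotopy by Lemma~\ref{lem:lift-limit-points}---is the paper's first case, with a bit more detail on why the cyclic order is forced. The gap is in the degenerate sub-cases, which you correctly flag as the main obstacle but propose to close with tools that do not close them.

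First, Lemma~\ref{L:maj} says two major lifts are disjoint or have the same \emph{image}, not that they are disjoint or \emph{equal}. When $P$ and $Q$ come from different connected components of $\loopgraph$ whose major edges happen to be freely homotopic (sparsity forbids homotopic edges only \emph{within} a component), a major lift of one H-block and a major lift of the other can be distinct lifts with identical images, hence identical limit points. Neither primitivity nor Lemma~\ref{L:majminsame} excludes this configuration, and once limit points coincide your interleaving argument says nothing. The paper handles this case by a different mechanism: the two overlapping major lifts, directed the same way, are each pulled to the same side by their respective minor lifts, contradicting Lemma~\ref{L:majoverlap}(1)---a lemma you never invoke. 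Second, when the two H-blocks genuinely share a major lift ($R=T$, possibly also $R'=T'$), only four or six limit points remain, and their cyclic order alone no longer forces a crossing to survive the homotopy; the paper supplements the limit-point invariant with the relative order of the endpoints of $P$ and $Q$ \emph{along the shared major lifts}, which is also preserved. Without Lemma~\ref{L:majoverlap} and this additional invariant, the degenerate cases of your crossing analysis remain open.
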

\begin{proof}
  If the relative interiors of two distinct minor lifts $P$ and~$Q$ intersect without crossing, they overlap. Since $P$ and $Q$ are distinct they do not have the same pair of endpoints. Thus the relative interior of one of $P$ and~$Q$ must intersect a major lift, a contradiction with Lemma~\ref{L:majmin}.

\begin{figure}
\centering
\def\svgwidth{40em}
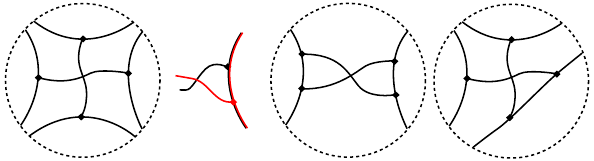
\caption{The four cases in the proof of Lemma~\ref{L:minmin}.}\label{fig:minor-minor}
\end{figure}

  If $P$ and~$Q$ cross, they do so exactly once by Lemma~\ref{L:nogon}. Let $R$ and~$R'$ be the major lifts forming the H-block of~$P$, and similarly let $T$ and~$T'$ be the major lifts forming the H-block of~$Q$. By Lemma~\ref{L:maj}, and up to exchanging notations, we distinguish between four cases, depicted in Figure~\ref{fig:minor-minor}:
  \begin{itemize}
  \item If $R$, $R'$, $T$, and~$T'$ are pairwise disjoint, the cyclic ordering of the limit points on~$\partial \tilde \surface$ is necessarily $r_1r_2t_1t_2r'_1r'_2t'_1t'_2$, with obvious notations ($R$ has limit points $r_1$ and~$r_2$, and so on), with possible identifications of two consecutive limit points in this cyclic order if they come from different lifts.  Any homotopy of~$\loopgraph$ induces a homotopy of the lifts $P$, $Q$, $R$, $R'$, $T$, and~$T'$, preserving the limit points on~$\partial \tilde \surface$.  Thus, even after a homotopy, one crossing must remain. This contradicts the assumption that $L$ can be untangled.
  \item If $R$ is distinct from $T$ but has the same image as $T$, these two lifts with the same image, when directed in the same way, are both pulled to the right, or both pulled to the left, which is impossible (Lemma~\ref{L:majoverlap}(1)).
  \item If $R$ is equal to $T$ and $R'$ is equal to $T'$, then any homotopy of $\loopgraph$ induces a homotopy of $P,Q,R,R'$, preserving the four limit points on $\partial \tilde \surface$ and the relative orders of the endpoints of $P$ and $Q$ on $R$ and $R'$. Thus, even after a homotopy, one crossing must remain.
  \item If $R$ is equal to $T$ and $R'$ is disjoint from $T'$, then, again, a crossing must remain after lifting a homotopy of $\loopgraph$.\qedhere
  \end{itemize}
\end{proof}

\begin{proof}[Proof of Proposition~\ref{prop:untangle-loop-graph}]
  By Lemmas \ref{L:nogon}, \ref{L:majmin}, and~\ref{L:minmin}, the only reason why $\loopgraph$ may fail to be an embedding is because two distinct major lifts intersect, which implies that they have the same image by Lemma~\ref{L:maj}, and thus come from distinct major edges.

  So now, consider an inclusionwise maximal set~$A$ of at least two overlapping, simple, major edges, directed in the same way. By Lemma~\ref{L:majoverlap} $A$ contains at most one edge pulled to the left (and not to the right), at most one edge pulled to the right (and not to the left), and possibly several edges which are pulled neither to the left nor to the right.  We can slightly perturb these edges to make them disjoint: from left to right, the edge pulled to the left (if it exists), then, the edges not pulled at all, and finally, the edge pulled to the right (if it exists). After this operation we have an embedding of~$\loopgraph$ that approximates $\lambda$.
\end{proof}

\section{Untangling on reducing triangulations: Proof of Theorem~\ref{thm:main-theorem}}\label{sec:subdivision}

In this section, we prove Theorem~\ref{thm:main-theorem}, by providing an algorithm to untangle a graph on a reducing triangulation. We already know how to do this for sparse loop graphs, so at a very high level, our strategy is to transform the input graph~$\graph$ into a sparse loop graph; but this hides many technicalities, both at the conceptual and at the computational level. We first compute a \emph{graph factorization}, described in the next subsection.   Then we need some auxiliary lemmas, mostly relating the possibility of untangling a graph to that of the corresponding sparse loop graph.  Finally, we conclude the proof of Theorem~\ref{thm:main-theorem}, with a weaker complexity than announced; this is fixed in Section~\ref{sec:contraction}, which contains a better algorithm for computing a graph factorization.

\subsection{Graph factorization}\label{S:factorization}

To transform an arbitrary graph $\graph$ into a sparse loop graph, the natural idea is to contract a spanning forest of~$\graph$, to ignore the resulting contractible loops, and to identify the resulting homotopic loops.  In this section, we describe this operation formally and provide a first, rather naïve, algorithm to compute it.  Unless noted otherwise, $\surface$ is an arbitrary surface, with or without boundary.

\begin{figure}[!h]
\centering
\includegraphics[scale=0.8]{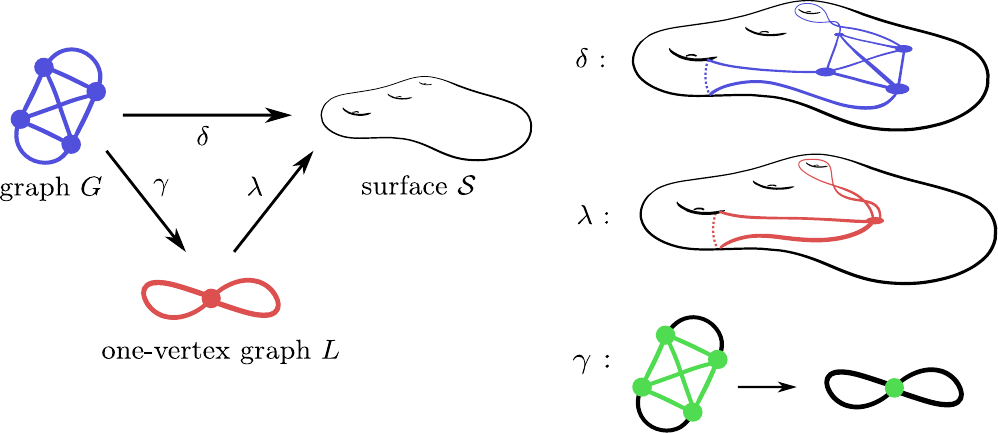}
\caption{A factorization $(L, \lambda, \gamma)$ of the drawing $(G, \delta)$ of Figure~\ref{fig:graph-plane}.}\label{fig:example-1}
\end{figure}

We need to carefully distinguish an abstract graph from its drawings (or embeddings).  Let $\graph_1$ be a connected component of~$\graph$ and let $\drawing$ be the input drawing of~$\graph_1$ on~$\surface$; moreover, let $\spanningtree$ be a spanning tree of~$\graph_1$. A \emphdef{factorization} of $(\graph_1,\drawing)$ (with respect to~$\spanningtree$) is obtained by the following process (see Figure~\ref{fig:example-1}).  First, change the drawing~$\drawing$ of~$\graph_1$ by contracting homotopically the spanning tree~$\spanningtree$ to a single point~$p$.  Let $\drawing'$ be the new drawing of~$\graph_1$ on~$\surface$.  The non-tree edges of~$\graph_1$ are now drawn as loops under~$\drawing'$.  Move these loops that are contractible to the constant loop based at~$p$.  Now, whenever there are several loops of~$\graph_1$ in the same homotopy class (with basepoint~$p$), select one of these loops, $\ell$, arbitrarily, and redraw the other ones in the same way as~$\ell$.  Let $\drawing''$ be the new drawing of~$\graph_1$ on~$\surface$.  We remark that, under~$\drawing''$, the graph~$\graph_1$ is first drawn onto a one-vertex graph~$\loopgraph$, which is itself sparsely drawn in~$\surface$.  The factorization of~$(\graph_1,\drawing)$ is given by the one-vertex graph~$\loopgraph$, its sparse drawing~$\lambda$ on~$\surface$, and the drawing~$\gamma$ of~$\graph_1$ onto~$\loopgraph$.  By construction, $\drawing$ and~$\lambda\circ\gamma$ are homotopic; see Figure~\ref{fig:example-1}, left.

Equivalently, but we will not need this equivalence, a factorization of $(\graph_1,\drawing)$ (with respect to~$\spanningtree$) is given by:
\begin{itemize}
\item a one-vertex graph~$\loopgraph$, and a \emph{sparse} drawing~$\lambda$ of~$\loopgraph$ on~$\surface$,
\item a drawing~$\gamma$ of~$\graph_1$ \emph{onto}~$\loopgraph$, mapping each edge of~$\spanningtree$ to the vertex of~$\loopgraph$, and each edge of~$\graph_1\setminus \spanningtree$ to either the vertex of~$\loopgraph$ or a loop of~$\loopgraph$,
\end{itemize}
such that the drawings $\drawing$ and~$\lambda\circ\gamma$ of~$\graph_1$ on~$\surface$ are homotopic.

A \emphdef{factorization} of a drawing $\drawing$ of a disconnected graph $\graph$ is given by a factorization for each of the connected components of $\graph$. The \emphdef{depth} of a factorization $(L,\lambda,\gamma)$ of~$(\graph,\drawing)$ is the maximal length of the walks $\lambda(\ell)$ over the loops $\ell$ of $L$. The \emphdef{width} of $(L,\lambda,\gamma)$ is the number of loops of $L$.

A crucial tool is the following proposition.
\begin{proposition}\label{prop:contraction-algorithm}
  Under the hypotheses of Theorem~\ref{thm:main-theorem}, we can, in $O(gn\log(gn))$ time, compute a factorization of~$(\graph,\drawing)$ of width $O(g)$ and depth $O(n)$, or correctly report that $\drawing$ cannot be untangled on~$\surface$.
\end{proposition}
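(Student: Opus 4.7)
My plan is to handle each connected component of $\graph$ independently. Fix a connected component $\graph_1$, a spanning tree $\spanningtree$ of $\graph_1$ rooted at some vertex $r$, and set $p := \drawing(r)$. For each vertex $v$ of $\graph_1$, let $P_v$ denote the walk in $\rtriangulation$ obtained by concatenating the walks $\drawing(e)$ along the tree path from $r$ to $v$, and for each non-tree edge $e = uv$, set $\ell_e := P_u \cdot \drawing(e) \cdot \overline{P_v}$, a walk in $\rtriangulation$ from $p$ to $p$ representing the based loop at $p$ obtained after homotopically contracting $\spanningtree$. I would then apply Proposition~\ref{prop:trail-algo} to reduce each $\ell_e$ to a reduced walk $\rho_e$. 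By Proposition~\ref{prop:trail-homotopy}, $\rho_e$ is a canonical invariant of the based homotopy class of $\ell_e$: $\ell_e$ is contractible iff $\rho_e$ is trivial, and two non-tree edges are based-homotopic at $p$ iff $\rho_e = \rho_{e'}$. From this I can build $\loopgraph$ with one vertex per connected component and one edge per distinct non-trivial class of $\rho_e$; $\lambda$ sends each edge of $\loopgraph$ to its representative reduced walk, and $\gamma$ collapses $\spanningtree$, routes contractible non-tree edges to the base vertex, and identifies parallel homotopic loops via a common representative.

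The hard part will be efficiency, since a direct implementation is too slow: the walks $P_v$ can have total length $\Omega(n \cdot \mathrm{depth}(\spanningtree))$, so forming and reducing the $\ell_e$ independently would cost $\Omega(n^2)$ in time and intermediate storage. To reach the $O(gn\log(gn))$ running time and $O(gn)$ output size, I would traverse $\spanningtree$ in depth-first order while maintaining, in the compressed homotopy sequence data structure of Section~\ref{sec:reduced-walks-reduction}, a reduced walk from $r$ to the currently active vertex; moving down or up a tree edge is realised by appending or trimming a short portion of the sequence and re-reducing locally at the seam, so that the total traversal cost is linear in the material actually modified rather than in the accumulated walk length. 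For each non-tree edge $e = uv$, the reduced walk $\rho_e$ is then obtained by splicing the current compressed representations at $u$ and $v$ with $\drawing(e)$ and reducing only near the two seams; equality of reduced walks (needed to identify homotopy classes) is tested by hashing or by balanced-tree fingerprints on the compressed sequences.

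Finally, the algorithm short-circuits and reports that $\drawing$ cannot be untangled as soon as a necessary condition for embeddability is violated, for instance when the total length of the stored reduced walks or the number of distinct non-trivial classes at a single vertex exceeds the bound that Euler's formula imposes on any embedding of a loop graph on a genus-$g$ surface; indeed, any embedded factorization must satisfy such a bound. The tailored data structures and the amortized analysis that realise the claimed time and size bounds are technical and are developed in Section~\ref{sec:contraction}.
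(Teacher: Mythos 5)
Your overall strategy (contract a DFS spanning tree, form the based loops $\ell_e$, reduce them, use uniqueness of reduced walks to detect contractibility and homotopy, and abort once the number of distinct classes exceeds the $O(g)$ bound forced by an embedding) is the same as the paper's. The gap is in the efficiency argument, which is the entire content of the proposition, and your proposal contains two concrete errors there.

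First, the claim that $\rho_e$ can be obtained by splicing the compressed representations of $P_u$, $\drawing(e)$ and $\overline{P_v}$ and ``reducing only near the two seams'' is false. The concatenation of two reduced walks is not reduced, and the cancellation at the junction can cascade arbitrarily far back: if $\drawing(e)$ begins with the reversal of a long suffix of $P_u$, the spur moves propagate through essentially all of $P_u$. The correct local statement, which the paper isolates and proves, is much weaker: extending a reduced walk by a \emph{single} edge changes only a bounded number (at most $49$) of elementary subwalks at the end of its compressed representation. Consequently one must pay one $O(\log)$-time \textsc{Extend} operation per edge of the material being appended; there is no constant-time seam repair. Second, even granting efficient single-edge extension, your scheme processes each non-tree edge $e=uv$ independently and must append the return path $\overline{P_v}$, of length up to the depth of $\spanningtree$; summed over up to $n$ non-tree edges this is $\Omega(n^2)$, and nothing in your proposal amortizes it. The paper's essential idea, absent from your write-up, is to compute the keys \emph{bottom-up} along $\spanningtree$: at each tree vertex $v$ it maintains the set $K(v)$ of keys of the truncated loops $\ell_e^v$, deduplicates it, and aborts if $|K(v)|>12g$ (distinct truncated loops yield distinct full loops after appending the common suffix, so more than $12g$ keys certifies non-untangleability). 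Deduplication caps each $K(v)$ at $O(g)$ keys, so pushing these sets up one tree edge costs only $O(g)$ \textsc{Extend} calls, giving $O(gn)$ calls in total; the shared suffixes of the loops are thus extended $O(g)$ times per tree edge rather than once per non-tree edge. Relatedly, your equality test by ``hashing or balanced-tree fingerprints'' is unnecessary and fragile: storing all reduced walks in a shared trie (the paper's compressed homotopy tree) makes keys canonical pointers, so homotopy testing is an $O(1)$ pointer comparison. Finally, ``total length of the stored reduced walks'' is not by itself a valid abort certificate---only the count of distinct homotopy classes is---and deferring the data structures and amortized analysis to Section~\ref{sec:contraction} defers precisely the part of the argument that the proposition asks for.
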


We first give a brute force algorithm to prove this proposition with a worse running time, namely $O(n^3)$, and output width, namely $O(n)$, in order to illustrate the general strategy.  While this can be improved using simple arguments, to achieve the claimed running time of $O(gn\log(gn))$ requires  several algorithmic ingredients.  Most notably, we need a new data structure, possibly of independent interest, to store the contracted loops of~$\graph$ in a compressed form.  We defer the full proof of Proposition~\ref{prop:contraction-algorithm}  to Section~\ref{sec:contraction}.

\begin{proof}[Proof of Proposition~\ref{prop:contraction-algorithm}, with a worse running time and output width]
  Without loss of generality, we may assume that $\graph$ is connected.  We take an arbitrary vertex~$v$ of an arbitrary spanning tree~$\spanningtree$ of~$\graph$.  We then contract~$\spanningtree$ to~$v$, thus obtaining a drawing~$\drawing'$ of~$\graph$ in which all edges of~$\spanningtree$ are mapped to~$v$.  Each of the $O(n)$ non-tree edges is drawn in~$\drawing$ as a walk of length $O(n)$.  We reduce all these walks in $O(n^2)$ total time using Proposition~\ref{prop:trail-algo}.    
  and contract, in the drawing~$\drawing$, the edges of~$\spanningtree$.  Now, by Proposition~\ref{prop:trail-homotopy}, this ensures that contractible loops are shrunk to the basepoint and that homotopic loops overlap, which implies that the image of~$\graph$ is that of a sparse drawing of a one-vertex graph.

  We then compare pairwise these $O(n)$ walks, each of length $O(n)$, in $O(n^3)$ total time.  From there, we can immediately compute the loop graph~$\loopgraph$, whose image is the union of these walks, its drawing~$\lambda$, and the map $\gamma$. The width and the depth of the output are both $O(n)$.
\end{proof}

\subsection{Auxiliary lemmas}\label{sec:auxiliary-lemmas}

We present three auxiliary lemmas needed for the proof of Theorem~\ref{thm:main-theorem}.  These lemmas are valid for arbitrary (compact, orientable) surfaces, with arbitrary genus, with or without boundary, which will be useful later.

The following lemma might not be new, but we could not find a reference, so we provide a proof.  It may be of independent interest.  We will only use its corollary (Corollary~\ref{cor:loops-and-order}).  An \emphdef{isotopy} between two embeddings of a graph~$\graph$ is a continuous family of embeddings of~$\graph$ between them.

\begin{lemma}\label{lem:loops-and-order}
  On a surface~$\surface$ with or without boundary, let each of $L=(\ell_1,\ldots,\ell_k)$  and~$L'=(\ell'_1,\ldots,\ell'_k)$ be a set of simple, pairwise disjoint, pairwise non-homotopic, non-contractible loops with basepoint~$v$.  Assume that $\ell_i$ and~$\ell'_i$ are homotopic (with basepoint fixed) for each~$i$. Then $\loopgraph$ and~$L'$ are isotopic (with basepoint fixed).
\end{lemma}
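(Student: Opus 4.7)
The plan is to view $L$ and $L'$ as two embeddings of the one-vertex graph with $k$ loops (a wedge of $k$ circles) into $\surface$, both sending the base vertex to $v$, and then to invoke the classical criterion of Ladegaillerie~\cite{l-cip1c-84}, refined in~\cite{cm-tgis-14}: two embeddings of a graph in $\surface$ are isotopic with vertices held fixed if and only if (i) they induce the same rotation system at each vertex, and (ii) corresponding edges are homotopic rel endpoints. Condition~(ii) is exactly our hypothesis, so the proof reduces to verifying~(i): that the cyclic orders at~$v$ of the $2k$ edge-ends of $L$ and~$L'$ agree.

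To check this, I would pass to the universal cover $\pi:\tilde\surface\to\surface$ and fix a lift~$\tilde v$ of~$v$. For each~$i$, the two lifts at~$\tilde v$ of the directed loops $\ell_i^{\pm1}$ (and similarly $\ell_i'^{\,\pm1}$) terminate at $g_i^{\pm1}\tilde v$, where $g_i=[\ell_i]\in\pi_1(\surface,v)$. These endpoints are shared between $L$ and $L'$ (by hypothesis (ii)) and are pairwise distinct, since the loops are non-contractible, pairwise non-homotopic, and $\pi_1(\surface)$ is torsion-free in all the relevant cases. Simplicity and pairwise disjointness of the loops in $\surface$ imply that the $2k$ lifts at~$\tilde v$ form an embedded star of pairwise disjoint simple arcs in the simply connected planar surface~$\tilde\surface$, and similarly for~$L'$. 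By construction, the rotation system at~$v$ is precisely the cyclic order of this star at~$\tilde v$.

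The crux, and the main obstacle, is to show that this cyclic order is forced by the endpoint data $\{g_i^{\pm1}\tilde v\}$ alone. I would exploit the $\pi_1(\surface)$-equivariance of the lifted systems $\tilde L$ and~$\tilde L'$: the same star appears at \emph{every} lift of~$v$, and these translated stars must globally assemble into a properly embedded $\pi_1$-invariant graph in the planar surface~$\tilde\surface$. A planarity argument combining this equivariance with the simple connectedness of~$\tilde\surface$ shows that two different cyclic orders at~$\tilde v$ would force a crossing somewhere between $\tilde L$ and a translate of~$\tilde L'$ (both of which share endpoint data at every lift of~$v$), contradicting embeddedness. Once the rotation systems are shown equal, Ladegaillerie's theorem delivers the ambient isotopy fixing~$v$; the low-complexity cases (sphere/disk with $k=0$, annulus with $k\le1$, torus via classical rigidity of disjoint simple curve systems) can be dispatched by direct inspection.
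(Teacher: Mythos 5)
Your overall strategy---reduce to Ladegaillerie's isotopy criterion and verify equality of rotation systems in the universal cover---is genuinely different from the paper's proof, which works directly in $\surface$ minus a small disk around~$v$: it uses Epstein's theorems to put the two arc systems in minimal position by removing innermost bigons and half-bigons, concludes that corresponding arcs become disjoint and cobound disks empty of~$\loopgraph$, and pushes across those disks. However, your route has two genuine gaps. The step you yourself call the crux is only asserted, and the assertion as phrased---that the cyclic order at~$\tilde v$ is ``forced by the endpoint data $\{g_i^{\pm1}\tilde v\}$ alone''---is false: given finitely many interior points of an open disk, a star of pairwise disjoint simple arcs from~$\tilde v$ to those points can realize \emph{any} cyclic order at~$\tilde v$ (compose one realization with a homeomorphism of the disk permuting the endpoints). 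What rigidifies the cyclic order is not the endpoints of single lifts but something global, e.g.\ the limit points on $\partial\tilde\surface$ of the full bi-infinite lifts (for hyperbolic surfaces), or an honest equivariance argument; this rigidity is exactly the content of Corollary~\ref{cor:loops-and-order}, which the paper deduces \emph{from} Lemma~\ref{lem:loops-and-order}, so you must supply an independent proof of it and separately treat the torus, annulus, and bounded cases you defer to ``inspection''.

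Second, the criterion you attribute to Ladegaillerie is misquoted. The theorem of~\cite{l-cip1c-84,cm-tgis-14} states that two embeddings are isotopic if and only if they are homotopic \emph{and} related by an orientation-preserving homeomorphism of the surface (same combinatorial map, including the homeomorphism types of the complementary faces). For cellular embeddings the rotation system determines the combinatorial map, but a system of $k$ loops is generally far from cellular, and equality of rotation systems does not by itself determine how genus and boundary are distributed among the faces. You would need a further argument (say, using that each $\ell_i$ is individually ambient isotopic to~$\ell'_i$ by Epstein~\cite{e-c2mi-66}, together with an induction over the faces) to upgrade ``same rotation system plus edgewise homotopic'' to the hypothesis Ladegaillerie actually requires. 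Both gaps are plausibly fillable, but as written the proof is incomplete precisely at its load-bearing steps.
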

\begin{proof}
  Let $\hat\surface$ be obtained from~$\surface$ by removing a small disk~$D$ around~$v$.  We can assume that all loops are piecewise linear with respect to a fixed triangulation of~$\surface$~\cite[Appendix]{e-c2mi-66}, and thus that no crossing occurs in~$D$, and that each loop crosses the boundary of~$D$ exactly twice.  We let $\hat\ell_i$ and $\hat\ell'_i$ be the arcs in~$\hat\surface$ that are the pieces of $\ell_i$ and~$\ell'_i$ obtained after removing~$D$, and $\hat L$ and $\hat L'$ be the corresponding sets of arcs.

  For each $i$, since $\ell_i$ and~$\ell'_i$ are homotopic on~$\surface$, they are isotopic on~$\surface$, by a result of Epstein~\cite[Theorem~4.1]{e-c2mi-66}.  So $\hat\ell_i$ and~$\hat\ell'_i$ are homotopic on $\hat\surface$, where the homotopy allows to slide the endpoints on $\partial\hat\surface$.

   Because $\hat L$ and~$\hat L'$ are simple, whenever there is an embedded bigon or an embedded ``half-bigon''~\cite[Section~1.2.7]{fm-pmcg-12} between $\hat L$ and~$\hat L'$, there is an innermost embedded bigon or innermost embedded half-bigon, which we can remove using an isotopy of~$L$ (sliding endpoints on the boundary is allowed), decreasing the number of crossings.  So without loss of generality we can assume that there is no embedded bigon or half-bigon between $\hat L$ and $\hat L'$.

  In particular, $\hat\ell_i$ and~$\hat\ell'_i$ are in minimal position in their homotopy classes~\cite[Section~1.2.7]{fm-pmcg-12}, and since they are homotopic (allowing sliding on the boundary), they are disjoint.  The corresponding loops $\ell_i$ and~$\ell'_i$ bound a disk on~$\surface$.  Moreover, because the loops in~$\loopgraph$ are pairwise non-homotopic, and because there is no embedded (half-)bigon between $\hat L$ and~$\hat\ell'_i$, the disk does not meet~$\loopgraph$.  We can thus, for each~$i$, push $\ell_i$ to~$\ell'_i$ by an isotopy of~$L$.
\end{proof}
\begin{corollary}\label{cor:loops-and-order}
  Let $\lambda$ and~$\lambda'$ be homotopic sparse embeddings of the same one-vertex graph~$\loopgraph$ on a surface with or without boundary.  Then the rotation systems of $\lambda$ and~$\lambda'$ are the same.
\end{corollary}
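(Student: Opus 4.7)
The plan is to reduce Corollary~\ref{cor:loops-and-order} to Lemma~\ref{lem:loops-and-order} by first arranging that $\lambda$ and $\lambda'$ share the same basepoint and that their respective loops become homotopic \emph{with basepoint fixed}. Let $v$ be the unique vertex of $\loopgraph$, let $e_1,\ldots,e_k$ be its loops, and let $(\lambda_t)_{t\in[0,1]}$ be a homotopy of drawings with $\lambda_0=\lambda$ and $\lambda_1=\lambda'$. The basepoint traces a path $p(t):=\lambda_t(v)$ in~$\surface$, and I cannot appeal to Lemma~\ref{lem:loops-and-order} directly, since the loops $\lambda(e_i)$ and $\lambda'(e_i)$ are only freely homotopic, and since conjugating naively by~$p$ to turn a free homotopy into a based one destroys embeddedness near the basepoint.

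The key step is to invoke the (topological) isotopy extension theorem on~$\surface$ to promote the path~$p$ into an ambient isotopy: a continuous family of orientation-preserving self-homeomorphisms $\phi_t:\surface\to\surface$, with $\phi_0=\mathrm{id}$ and $\phi_t(p(t))=p(0)=\lambda(v)$ for all~$t$. Setting $\mu_t:=\phi_t\circ\lambda_t$, one obtains a homotopy of drawings of~$\loopgraph$ with \emph{fixed} basepoint $\lambda(v)$, from $\mu_0=\lambda$ to $\lambda'':=\mu_1=\phi_1\circ\lambda'$. Crucially, $\lambda''$ is still a sparse embedding (the composition of a homeomorphism with a sparse embedding), and for each~$i$ the loop $\lambda(e_i)$ is homotopic to~$\lambda''(e_i)$ with basepoint fixed, via the family $\mu_t(e_i)$.

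I would then apply Lemma~\ref{lem:loops-and-order} to the sparse embeddings $\lambda$ and~$\lambda''$, both based at $\lambda(v)$: it yields a basepoint-fixing isotopy from $\lambda$ to $\lambda''$, so they have identical rotation systems at~$v$. Finally, the family $(\phi_t\circ\lambda')_{t\in[0,1]}$ is an ambient isotopy from $\lambda'$ to $\lambda''$ through orientation-preserving homeomorphisms (orientation is preserved by continuity from $\phi_0=\mathrm{id}$), which preserves the cyclic order of half-edges at the basepoint; hence $\lambda'$ and $\lambda''$ also share the same rotation system. Chaining the two equalities yields the claim. The main (minor) obstacle is the correct invocation of the ambient isotopy extension step, together with verifying that the hypotheses of Lemma~\ref{lem:loops-and-order} are met: the loops $\lambda''(e_i)$ remain non-contractible and pairwise non-homotopic because $\lambda''$ differs from the sparse embedding $\lambda'$ by a homeomorphism of~$\surface$.
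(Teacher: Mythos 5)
Your proposal is correct and follows essentially the same route as the paper: both use an ambient isotopy of the surface along the basepoint path to convert the free homotopy between $\lambda$ and $\lambda'$ into a basepoint-fixing one, and then invoke Lemma~\ref{lem:loops-and-order}. The only cosmetic difference is the direction of the push (you pull $\lambda'$ back to the basepoint of $\lambda$, the paper pushes $\lambda$ forward to that of $\lambda'$), which changes nothing of substance.
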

\begin{proof}
  Assume first that the surface has no boundary.  Note that the basepoint~$b$ may move during the homotopy between $\lambda$ and~$\lambda'$, following a (possibly non-simple) path~$p$.  By an ambient isotopy of the surface (a continuous family of self-homeomorphisms), starting from the embedding~$\lambda$, we push the basepoint~$b$ along path~$p$, obtaining an embedding~$\lambda''$ isotopic to~$\lambda$, and thus with the same rotation system as~$\lambda$, that is also homotopic to~$\lambda'$ \emph{with basepoint fixed}.  We can then apply Lemma~\ref{lem:loops-and-order} to $\lambda'$ and~$\lambda''$, obtaining the result.
\end{proof}

We now present our two main technical lemmas concerned with factorizations.  Here, our notations distinguish graphs from their drawings.

\begin{lemma}\label{lem:factoriz-end1}
On a surface with or without boundary, consider a factorization $(\loopgraph,\lambda,\gamma)$ of $(\graph,\drawing)$.  If $(\graph,\drawing)$ can be untangled, then $(\loopgraph,\lambda)$ can be untangled.
\end{lemma}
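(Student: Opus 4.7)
The plan is to take an embedding of $\graph$ homotopic to $\drawing$, contract the spanning forest to get an embedding of the quotient loop graph $\graph/\spanningtree$, and then restrict that embedding to a subgraph canonically isomorphic to $\loopgraph$. The resulting drawing of $\loopgraph$ will be an embedding and, via restriction of the relevant homotopy, will be homotopic to~$\lambda$.

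\textbf{Contracting the spanning forest.} Start with an embedding $\drawing^*$ of $\graph$ homotopic to $\drawing$. For each connected component $\graph_i$ the spanning tree $\spanningtree_i$ used in the factorization is embedded by $\drawing^*$ as a tree in~$\surface$; since trees are contractible it admits a regular neighborhood which is a disk, disjoint from the rest of~$\drawing^*$. An ambient isotopy of~$\surface$ shrinking each such disk to its basepoint $p_i$ produces a drawing $\epsilon$ homotopic to~$\drawing^*$ that (i)~maps every tree edge to a constant at a basepoint, and (ii)~viewed as a drawing of the quotient loop graph $\graph/\spanningtree$, is an embedding: each non-tree edge is a simple loop at its basepoint, and any two such loops of the same component are disjoint outside that basepoint, with cyclic order around the basepoint inherited from the boundary of the shrunk disk.

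\textbf{Extracting an embedding of $\loopgraph$ and its homotopy to $\lambda$.} The map $\gamma$ factors through the contraction as $\gamma:\graph\to\graph/\spanningtree \xrightarrow{\gamma_0} \loopgraph$. By construction of the factorization, every loop $\ell$ of $\loopgraph$ is the image under $\gamma_0$ of at least one loop of $\graph/\spanningtree$ (any non-tree edge whose class in $\drawing'$ was selected as~$\ell$); pick one such representative $e_\ell$. The basepoints together with the loops $\{e_\ell\}$ form a subgraph $\loopgraph'$ of $\graph/\spanningtree$, and $\gamma_0$ restricts to a graph isomorphism $\loopgraph' \to \loopgraph$. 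Define $\lambda':=\epsilon|_{\loopgraph'}$, transported along this isomorphism: it is an embedding of $\loopgraph$ since it is a restriction of the embedding~$\epsilon$ to a subgraph. Finally, $\epsilon \simeq \drawing^* \simeq \drawing \simeq \lambda\circ\gamma$ as drawings of $\graph$, and since tree edges are constant in both $\epsilon$ and $\lambda\circ\gamma$, both descend to homotopic drawings of $\graph/\spanningtree$, giving $\epsilon \simeq \lambda\circ\gamma_0$. Restricting this homotopy to $\loopgraph'$ and identifying $\loopgraph'$ with $\loopgraph$ via $\gamma_0|_{\loopgraph'}$ yields $\lambda' \simeq (\lambda\circ\gamma_0)|_{\loopgraph'}=\lambda$, so $\lambda$ is homotopic to the embedding~$\lambda'$, which is exactly the claim.

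\textbf{Main obstacle.} The only real topological content lies in the contraction step: checking rigorously that the ambient-isotopy shrinking of a disk regular neighborhood of each $\spanningtree_i$ preserves the simplicity of non-tree edges and the disjointness of distinct loops outside the basepoint, so that the quotient drawing is honestly an embedding of the loop graph and not merely a drawing. The rest of the argument is formal manipulation of restrictions and compositions of homotopies; in particular, no fact specific to reducing triangulations or to the surface's genus is needed, which is why the statement holds for arbitrary compact orientable surfaces with or without boundary.
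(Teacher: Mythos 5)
Your proposal is correct and follows essentially the same route as the paper's proof: take an embedding homotopic to $\drawing$, contract the embedded spanning trees to obtain an embedded loop graph, keep one representative loop per homotopy class of $\loopgraph$, and observe that the result is an embedding homotopic to~$\lambda$. You simply spell out the details (regular neighborhood for the contraction, descent of the homotopy to the quotient graph) that the paper's three-sentence proof leaves implicit.
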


\begin{proof}
  Let $\drawing'$ be an embedding of~$\graph$ homotopic to~$\drawing$.  Starting from~$\drawing'$, we contract the edges of~$\graph$ that are part of the spanning trees used to define the factorization, obtaining a loop graph embedded on~$\surface$.  We then remove contractible loops and keep only one loop whenever several loops are in the same homotopy class.  In this way, we obtain an embedding~$\lambda'$ of~$\loopgraph$ in~$\surface$ that is homotopic to~$\lambda$.  Thus, $(\loopgraph,\lambda)$ can be untangled.
\end{proof}

\begin{lemma}\label{lem:factoriz-end2}
On a surface with or without boundary, consider a factorization $(\loopgraph,\lambda,\gamma)$ of $(\graph,\drawing)$.  If $\drawing$ and $\lambda$ are embeddings, then $(\graph,\lambda\circ\gamma)$ is a weak embedding.
\end{lemma}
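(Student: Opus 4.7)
The plan is to build an embedding $\drawing'$ of $\graph$, isotopic to $\drawing$ on $\surface$, that lies inside an arbitrarily thin neighborhood of $\lambda(\loopgraph)$ and has the combinatorial structure of $\lambda \circ \gamma$ with respect to the patch system of $\lambda(\loopgraph)$. By Section~\ref{sec:weak-embeddings}, this exhibits $\lambda \circ \gamma$ as a limit of embeddings, which is exactly the weak-embedding property.

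First, I would fix a thin closed neighborhood $N$ of $\lambda(\loopgraph)$ in $\surface$, consisting of a small disk $D_{v_0}$ around each vertex $\lambda(v_0)$, joined by thin disjoint strips along each loop $\lambda(\ell)$. This $N$ is a concrete realization of the patch system $\patchsystem(\loopgraph)$ inside $\surface$, so it suffices to isotope $\drawing$ into $N$ so that each vertex of $\graph$ ends up in the right disk $D_{v_0}$ and each edge traverses the right sequence of strips. I would work on one connected component $\graph_1$ of $\graph$ at a time, with spanning tree $\spanningtree$ used in the factorization and image vertex~$v_0$. Since $\drawing(\spanningtree)$ is an embedded tree on $\surface$, it has a regular neighborhood that is a topological disk; an ambient isotopy supported in this disk shrinks it into $D_{v_0}$, so that all vertices of $\graph_1$ and all tree edges of $\graph_1$ lie inside $D_{v_0}$. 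After this move, the non-tree edges of $\graph_1$ are pairwise disjoint simple arcs in $\surface \setminus \mathrm{int}(D_{v_0})$ with endpoints on~$\partial D_{v_0}$. By definition of the factorization, each such arc closes (through $D_{v_0}$) to a loop that is either contractible (in which case $\gamma$ sends the edge to $v_0$) or freely homotopic to some chosen loop $\lambda(\ell)$ of $\lambda(\loopgraph)$ (in which case $\gamma$ sends the edge to $\ell$).

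Next I would push the contractible arcs into $D_{v_0}$ one by one: each closes to a contractible simple loop which bounds a disk on $\surface$, and at each step I pick an innermost such disk (one whose interior meets no other non-tree arc), which exists since these loops are pairwise disjoint and simple. The corresponding arc is isotoped through this innermost disk into $D_{v_0}$. Once this is done, for each non-contractible homotopy class, the remaining arcs mapped to the same $\ell$ by $\gamma$, together with $\lambda(\ell)$ itself, form a family of pairwise disjoint simple loops based at (or near) $\lambda(v_0)$, all freely homotopic. A bigon-elimination argument, directly analogous to the one used in the proof of Lemma~\ref{lem:loops-and-order}, allows me to isotope this family into mutually parallel loops lying in an arbitrarily thin tubular neighborhood of $\lambda(\ell)$, and in particular inside the strip of $N$ along $\ell$. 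These successive isotopies are supported on disjoint regions for distinct components and for distinct homotopy classes, so they assemble into a single ambient isotopy of $\surface$ taking $\drawing$ to an embedding $\drawing'$ contained in $N$.

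Finally, by construction, $\drawing'$ sends each vertex of $\graph$ into the appropriate disk $D_{v_0}$; each spanning-tree edge and each contractible non-tree edge stays inside that disk; and each non-contractible non-tree edge sent by $\gamma$ to $\ell$ traverses the strip of $N$ along $\lambda(\ell)$ once, crossing exactly the arc $a(\ell)$ of $\patchsystem(\loopgraph)$, in agreement with $\lambda \circ \gamma$. Since $N$ can be chosen arbitrarily thin, $\drawing'$ is arbitrarily close to $\lambda \circ \gamma$ in the compact-open topology, which yields the conclusion. The step I expect to be the main obstacle is the joint bigon-elimination for several disjoint simple loops in the same free homotopy class: one has to verify that they can be pushed simultaneously to parallel position without undoing the contractible-loop shrinking already performed and without interfering with the other homotopy classes, which is achieved by working in the complement of the already-shrunk configuration and repeatedly appealing to Epstein's theorem~\cite[Theorem~4.1]{e-c2mi-66} that homotopic simple arcs/loops are isotopic.
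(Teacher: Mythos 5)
Your first half (shrinking $\drawing(\spanningtree)$ into a disk, absorbing the contractible loops via innermost disks, bundling the loops of each non-trivial homotopy class into parallel position) matches the paper's argument. But the step where you bring the result onto $\lambda(\loopgraph)$ itself has a genuine gap. You assert that the non-tree arcs of $\drawing$ sent by $\gamma$ to a loop $\ell$, ``together with $\lambda(\ell)$ itself, form a family of pairwise disjoint simple loops based at (or near) $\lambda(v_0)$.'' This is false: $\drawing$ is an arbitrary embedding of $\graph$ whose image has no reason to be disjoint from $\lambda(\loopgraph)$, and after you shrink the tree into $D_{v_0}$ the resulting loops may cross $\lambda(\ell)$ (and the other $\lambda(\ell')$) arbitrarily many times. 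Removing these crossings is exactly the content of Lemma~\ref{lem:loops-and-order}, and it needs more than you supply: that lemma requires the two loop systems to be homotopic \emph{with a common fixed basepoint}, whereas the factorization only gives you a free homotopy $\drawing\simeq\lambda\circ\gamma$. If you park the shrunk tree in $D_{v_0}$ via the ``wrong'' path, each $\drawing$-loop is based-homotopic only to a conjugate $w\cdot\lambda(\ell)\cdot w^{-1}$, and no isotopy keeping the basepoint in $D_{v_0}$ will align it with $\lambda(\ell)$. You would have to first drag the basepoint along the path traced by the vertex during the homotopy, as is done in the proof of Corollary~\ref{cor:loops-and-order}, and then run the half-bigon argument of Lemma~\ref{lem:loops-and-order} between the $\drawing$-derived sparse system and $\lambda(\loopgraph)$ — while also checking that these isotopies do not undo the shrinking of the contractible loops and duplicates. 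None of this is in your write-up.

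The paper deliberately avoids this entire difficulty: it only isotopes $\drawing$ into a neighborhood of \emph{some} sparse embedding $\lambda'$ of the loop graph (the one arising from $\drawing$'s own contraction), concludes that $\lambda'\circ\gamma$ is a weak embedding, and then transfers the conclusion to $\lambda\circ\gamma$ using two facts: homotopic sparse embeddings of a one-vertex graph have the same rotation system (Corollary~\ref{cor:loops-and-order}), and weak embeddability of a drawing in an embedded graph depends only on that graph's rotation system (Section~\ref{sec:weak-embeddings}). If you want to keep your direct approach, you must either prove the ambient isotopy from the $\drawing$-derived loop system onto $\lambda(\loopgraph)$ in full (basepoint transport plus multi-curve bigon removal), or switch to the paper's rotation-system transfer, which makes the problematic step unnecessary.
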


See Figure~\ref{fig:example-2} for an illustration of this lemma.
\begin{figure}
\centering
\includegraphics[scale=0.8]{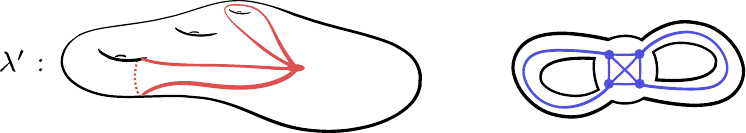}
\caption{(Continuation of Figures~\ref{fig:graph-plane} and~\ref{fig:example-1}) Although the map $\lambda$ is homotopic to an embedding $\lambda'$ (left), the map $\lambda' \circ \gamma$ is not a weak embedding (right), which implies by Lemma~\ref{lem:factoriz-end2} that $\delta$ is not homotopic to an embedding.}\label{fig:example-2}
\end{figure}

\begin{proof}
  We first remark that, without loss of generality, we can assume that $\graph$ is connected; this is because $\gamma$ maps the connected components of~$\graph$ to those of~$\loopgraph$ bijectively.

  Then, we prove that we can isotope~$(\graph,\drawing)$ so that its new image in~$\surface$ lies in the neighborhood of some sparsely embedded one-vertex graph.  For this purpose, we almost-contract the spanning tree~$\spanningtree$ used to define the factorization, keeping the fact that we have an embedding of~$\graph$.  The resulting loops fall into homotopy classes.  The loops in the trivial homotopy class can be isotoped close to the basepoint; indeed, in an embedded one-vertex graph, any contractible simple loop bounds a disk with only (possibly) contractible loops inside it~\cite[Theorem~1.7]{e-c2mi-66}.  The loops in any other given homotopy class can be bundled together so that they are parallel; indeed, in an embedded one-vertex graph, any pair of homotopic loops bounds a disk with only (possibly) contractible or homotopic loops inside it.  To summarize, we have an embedding~$\drawing'$ of~$\graph$ homotopic to~$\drawing$ whose image lies in a neighborhood of the image of some embedding~$\lambda'$ of a sparse loop graph~$\loopgraph'$.
  
  Actually, there is a canonical isomorphism between $\loopgraph$ and~$L'$, because each edge of $\loopgraph$ and~$L'$ corresponds to a particular set of non-tree edges of~$\graph$ (forming a single homotopy class after contracting~$\spanningtree$); so we can identify $\loopgraph$ and~$\loopgraph'$ via this isomorphism.  Also, for use at the end of the proof, we remark that $\lambda$ and~$\lambda'$ are (freely) homotopic.
  
  By construction, the embedding $\drawing'$ can be chosen to lie in any neighborhood of the map $\lambda'\circ\gamma$ (for the compact-open topology of maps from $\graph$ to $\surface$). Hence $\lambda'\circ\gamma$ is a weak embedding. Alternatively, it would now be easy to build the patch system of~$\lambda'(L')$ (see Section~\ref{sec:weak-embeddings}) associated to~$\lambda'$, so that $\drawing$ and this patch system indeed witness that $(\graph,\lambda'\circ\gamma)$ is a weak embedding.

  Now, we recall that $\lambda$ and~$\lambda'$ are homotopic embeddings of~$\loopgraph$, so the rotation systems are the same in $\lambda$ and~$\lambda'$ by Corollary~\ref{cor:loops-and-order}.  Also, as recalled in Section~\ref{sec:weak-embeddings}, the fact that $\lambda'\circ\gamma$ is a weak embedding depends only on the rotation system of~$(\loopgraph,\lambda')$, not on~$\lambda'$ itself.  This implies that $(\graph,\lambda\circ\gamma)$ is a weak embedding as well.
\end{proof}

\subsection{End of proof of Theorem~\ref{thm:main-theorem}}\label{sec:end-of-proof}

We are now in a position to give the algorithm for Theorem~\ref{thm:main-theorem}.  In a nutshell, we compute a factorization of~$\graph$, untangle the corresponding loop graph~$\loopgraph$ (or return that $\graph$ cannot be untangled), determine whether~$\graph$ is a weak embedding in (a tubular neighborhood of)~$\loopgraph$, and declare that $\graph$ can be untangled if and only if it is the case.

\begin{proof}[Proof of Theorem~\ref{thm:main-theorem}]
  Let $\rtriangulation$ be a reducing triangulation of a surface~$\surface$ of genus~$g$, let $\graph$ be a graph, and let $\drawing$ be a drawing of~$\graph$ on~$\rtriangulation$. Let us first explain how to determine whether there is an embedding homotopic to $\delta$ in $O(gn \log(gn))$ time:

  1.  We apply Proposition~\ref{prop:contraction-algorithm} (proved with the announced running time in the following section), that is: In $O(gn\log(gn))$ time, we either determine that $(\graph,\drawing)$ cannot be untangled on~$\surface$ (and thus abort), or compute a factorization of~$(\graph,\drawing)$ of width $O(g)$ and depth $O(n)$ given by the loop graph~$\loopgraph$, its sparse drawing~$\lambda$ on~$\surface$, and the drawing~$\gamma$ of~$\graph$ onto~$\loopgraph$, such that $\drawing$ and $\lambda\circ\gamma$ are homotopic.
  
  2.  We decide whether $(\loopgraph,\lambda)$ can be untangled on~$\surface$, using Proposition~\ref{prop:untangling-a-loop-graph-general}, in $O(gn\log(gn))$ time.  If it is not the case, we abort, because $\graph$ cannot be untangled, by Lemma~\ref{lem:factoriz-end1}.  Otherwise, $(\loopgraph,\lambda)$ can be untangled, and Proposition~\ref{prop:untangling-a-loop-graph-general} provides a weak embedding $\lambda' : \loopgraph \to T$ homotopic to $\lambda$, in which for every loop $\ell$ of $\loopgraph$ the walk $\lambda'(\ell)$ has length $O(n)$. Using the result of Akitaya, Fulek, and T\'oth~\cite{aft-rweg-19}, restated in Theorem~\ref{thm:toth-et-al} above, we can compute an embedding $\lambda''$ that approximates $\lambda'$, and read the rotation system of $\lambda''$, still in $O(gn\log(gn))$ time.  By Corollary~\ref{cor:loops-and-order}, and since $(\loopgraph,\lambda)$, and thus also $(\loopgraph,\lambda'')$, is sparse, we know that this rotation system is actually the same for all untanglings of~$(\loopgraph,\lambda)$.
  
  3.  Using Theorem~\ref{thm:toth-et-al} again, we determine whether $(\graph,\lambda''\circ\gamma)$ is a weak embedding on~$\surface$.  The rotation system is actually the information needed on~$\lambda''$ for the input of the algorithm of Akitaya et al. Thus, their algorithm determines whether $(\graph,\lambda''\circ\gamma)$ is a weak embedding in $O(gn\log(gn))$ time, because the input has size~$O(gn)$.

  4.  If $(\graph,\lambda''\circ\gamma)$ is a weak embedding, then the algorithm returns that $(\graph,\drawing)$ can be untangled; this is correct because $\drawing$ and $\lambda''\circ\gamma$ are homotopic on~$\surface$.  Otherwise, the algorithm returns that $(\graph,\drawing)$ cannot be untangled; this is correct by Lemma~\ref{lem:factoriz-end2}.  We have proved the correctness at every step, and by the preceding considerations, the running time is $O(gn\log(gn))$.
  
Assuming that there is an embedding homotopic to $\drawing$, we proved that the composed drawing $\delta' := \lambda' \circ \gamma$ is a weak embedding. And $\delta'$ can be computed in $O(n^2)$ time, because there are $O(n)$ edges to draw, each of length $O(n)$.
\end{proof}

\section{Efficient graph factorization: Proof of Proposition~\ref{prop:contraction-algorithm}}\label{sec:contraction}

This section is devoted to the proof of Proposition~\ref{prop:contraction-algorithm} with the claimed running time. In particular, we fix a reducing triangulation $T$ of an orientable surface $\surface$ of genus $g \geq 2$ without boundary. While a simple proof of Proposition~\ref{prop:contraction-algorithm} with a weaker complexity was presented in Section~\ref{S:factorization}, obtaining a near-linear complexity requires more data structures.

\subsection{Compressed homotopy trees}\label{sec:contract-datastruct}

Our main tool for the proof of Proposition~\ref{prop:contraction-algorithm} is a data structure, called \emph{compressed homotopy tree} extending the \emph{compressed homotopy sequence} from Section~\ref{sec:reduced-walks-reduction}.  While the compressed homotopy sequence is a linear list, the compressed homotopy tree is a tree-like structure.  We will also extend the proof techniques from that section.

Let $r$ be an arbitrary vertex of our reducing triangulation~$\rtriangulation$.  We need to support fast homotopy queries on an evolving set of walks starting at~$r$.  Intuitively, there is a map~$\kappa$ that associates a pointer to each walk starting at~$r$, such that two walks $W$ and~$W'$ are homotopic if and only if $\kappa(W)=\kappa(W')$.  We call these $\kappa(W)$ \emphdef{keys}.  Our data structure, a \emphdef{compressed homotopy tree}, provides efficient access to such a map~$\kappa$ in the following sense: If $W'$ is the concatenation of~$W$ with a single edge, then we can compute~$\kappa(W')$ from~$\kappa(W)$ quickly (in time logarithmic in the number of operations already performed).

\begin{figure}
\centering
\includegraphics[scale=1.7]{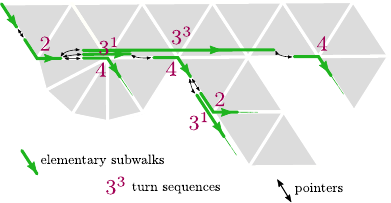}
\caption{Homotopy classes of walks are represented by storing the elementary subwalks of their reduced walks in a tree-like fashion.}\label{fig:data-structure}
\end{figure}

For this purpose, we remark that the homotopy class of a walk~$W$ is determined by the unique reduced walk~$\rho(W)$ homotopic to~$W$.  We store the reduced walks $\rho(W)$ in compressed form, by enhancing the data structure from Section~\ref{sec:reduced-walks-reduction} in a tree-like fashion, instead of a linear list (in a way similar to shortest path trees); for each walk~$W$ starting at~$r$, we let $\kappa(W)$ be a pointer to the last elementary subwalk of~$W$.  Each node of the tree corresponds to an elementary subwalk, and stores (see Figure~\ref{fig:data-structure}):
\begin{itemize}
  \item the first and last directed edges of~$\rtriangulation$ of the elementary subwalk;
  \item the turn sequence of the elementary subwalk, in compressed form (as in Section~\ref{sec:reduced-walks-reduction}: the elementary subwalk has a turn sequence of the form $3^k$, $(-3)^k$, or another symbol~$a$; we store its form together with the integer $k$ or the symbol~$a$);
  \item a pointer to its parent;
  \item the elementary subwalks of its children, stored in appropriate search trees (see below).
\end{itemize}
We maintain the following invariants:
\begin{itemize}
    \item the children of any given node in the tree correspond to pairwise distinct turn sequences;
    \item the first edge of an elementary subwalk equals the last edge of its parent.
\end{itemize}
By these invariants, and because reduced paths are unique, two walks starting at~$r$ are homotopic if and only if they have the same keys.  Thus, once the keys of two walks have been computed, we can, in $O(1)$ time, decide whether they are homotopic.

We now list a few operations that can be handled by our data structure.  Two of them are easy:
\begin{lemma}
  Compressed homotopy trees support the following operations:
  \begin{itemize}
  \item\textsc{TrivialKey}: return the key corresponding to the empty walk at~$r$, in $O(1)$ time;
  \item\textsc{ReducedWalk}: given the key of a walk~$W$, return the compressed turn sequence of the reduced walk homotopic to~$W$, in time linear in its length.
  \end{itemize}
\end{lemma}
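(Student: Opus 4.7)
The plan is to handle the two operations by exploiting the tree structure directly, since both are essentially bookkeeping about the stored compressed representation.

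For \textsc{TrivialKey}, I would reserve, at initialization of the compressed homotopy tree, a sentinel root node that represents the empty walk based at~$r$ (so that it corresponds to no elementary subwalk at all, and its ``last directed edge'' field is set to a conventional null value so that subsequent extend-by-an-edge operations can start from it). A pointer to this root is stored as a field of the data structure. The operation then simply returns this stored pointer in $O(1)$ time. Correctness follows because the empty walk is its own reduced walk, so its key must be this distinguished sentinel, which is consistent with the invariants (the sentinel has no elementary subwalks above it and no parent).

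For \textsc{ReducedWalk}, I would proceed by walking up parent pointers. Given $\kappa(W)$, which by the definition of the data structure points to the last elementary subwalk of~$\rho(W)$, I follow the parent pointer repeatedly until reaching the sentinel root; at each visited node I read off, in $O(1)$ time, its compressed turn sequence (which is one of $3^k$, $(-3)^k$, or a single symbol~$a$, stored explicitly in that compressed form). The resulting list of compressed pieces is in reverse order along~$\rho(W)$, and a single pass reverses it in-place. The total time is proportional to the number of elementary subwalks in $\rho(W)$, which is exactly the length of the compressed turn sequence being output. Correctness follows from the invariant that the tree path from the root to $\kappa(W)$ spells out exactly the sequence of elementary subwalks of $\rho(W)$, together with the uniqueness of reduced walks (Proposition~\ref{prop:trail-homotopy}), which ensures that this reconstruction depends only on the homotopy class of~$W$.

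Neither step poses a real obstacle; the only thing to be careful about is that the sentinel root must behave consistently with the invariants listed for internal nodes (no parent, a conventional ``last directed edge'' value that the extension operations recognize as the starting state at~$r$), so that both operations above, as well as the more involved operations developed later in Section~\ref{sec:contract-datastruct}, can treat it uniformly.
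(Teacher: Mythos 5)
Your proposal is correct and matches the paper's proof: \textsc{TrivialKey} returns the stored (sentinel) root, and \textsc{ReducedWalk} walks up parent pointers collecting the compressed labels and reverses them, with the paper likewise handling the root as a special virtual elementary subwalk. No gaps.
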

\begin{proof}
  The \textsc{TrivialKey} operation is indeed trivial.

  The \textsc{ReducedWalk} operation can be done by traversing the tree from~$\kappa(W)$ up to the root, noting the labels of the nodes in order in which they are encountered, and then reading them in reverse order.
\end{proof}

More interestingly, we can also extend the set of walks considered:
\begin{lemma}
    Compressed homotopy trees support (with additional data structures described in the proof) the operation \textsc{Extend}: Given the key of a walk~$W$, finishing at vertex~$v$, and given a directed edge $e$ of~$\rtriangulation$ starting at~$v$, return the key of the concatenation of $W$ and~$e$, in $O(\log p)$ time, where $p$ is the number of \textsc{Extend} operations already performed.
\end{lemma}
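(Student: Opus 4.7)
The plan is to augment each node of the compressed homotopy tree with a balanced binary search tree (say, a red--black tree) storing its children, keyed by the compressed turn sequence of the child (equivalently, by the first directed edge of the child together with its form $3^k$, $(-3)^k$, or a single symbol $a$). Both lookup and insertion of a child then cost $O(\log p)$, whereas traversing the parent pointer is $O(1)$; hence any bounded number of trailing elementary subwalks of $\rho(W)$ are accessible from $\kappa(W)$ in $O(1)$ time.

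The implementation of \textsc{Extend} is a \emph{local} version of the reduction algorithm from the proof of Proposition~\ref{prop:trail-algo}. First, read the last directed edge $e_0$ of $\rho(W)$ recorded at $\kappa(W)$ and compute with Operation~(1) of Section~\ref{sec:reduced-walks-reduction} the $i$-turn formed by $e_0$ followed by $e$. If this turn is not bad, then $\rho(W)\cdot e$ is already reduced: either $e$ prolongs the trailing $3^k$ or $(-3)^k$ run, in which case we search for (or create) the sibling of $\kappa(W)$ whose turn sequence has exponent $k+1$ inside the children tree of $\kappa(W)$'s parent; or $e$ opens a fresh elementary subwalk, in which case we search for (or create) a child of $\kappa(W)$ in its children tree. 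Either case performs a single $O(\log p)$ operation on a search tree.

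If the new turn is bad, then we apply the matching spur, spike, bracket, or flip move. In each case, only a bounded number of trailing elementary subwalks of $\rho(W)\cdot e$ are rewritten, and the target suffix can be located or created by walking up $O(1)$ parent pointers from $\kappa(W)$ and performing a single lookup/insertion in the children tree of the resulting ancestor. A spike or flip may create a new bad turn one elementary subwalk further back, which triggers another move, and so on in a cascade.

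The main obstacle is bounding the cost of such cascades. I would reuse the potential function $\varphi(W')=3|W'|+(\text{number of bad turns of }W')$ from the proof of Proposition~\ref{prop:trail-algo}: one \textsc{Extend} increases $\varphi$ of the affected walk by at most a constant, whereas each reduction move decreases $\varphi$ by at least $1$, so the total number of reduction moves performed over $p$ consecutive extensions is $O(p)$, for an amortized cost of $O(\log p)$ per operation. The worst-case bound claimed in the lemma is then obtained by the usual device of deferring uncompleted cascade moves to subsequent \textsc{Extend} calls (charging each deferred move to the operation that created it), which is compatible with how \textsc{Extend} and \textsc{ReducedWalk} are invoked in Section~\ref{sec:contraction}. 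The delicate point in making this rigorous is verifying, for each of the four move types, that after the local rewriting the new trailing elementary subwalk can be located or created by a single children-tree operation and that the tree invariants (distinct children having distinct turn sequences, and the first edge of a child matching the last edge of its parent) are preserved; uniqueness of the resulting node then follows from Proposition~\ref{prop:trail-homotopy}.
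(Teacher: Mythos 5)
Your data-structure setup (balanced search trees over the children of each node, constant-time access to a bounded suffix of $\rho(W)$ via parent pointers, and a local reduction cascade starting at the new edge) matches the paper's. The gap is in how you bound the cascade. The paper does \emph{not} use amortization: it proves that each single \textsc{Extend} triggers at most a \emph{constant} number of reduction moves (at most eight), worst case. The missing observation is the inequality $\varphi(W)\le\varphi(W')+4$, where $W'$ is the reduced walk homotopic to $W.e$; it holds because $W$ is the unique reduced walk homotopic to $W'\cdot\bar e$, so $W$ is obtained from $W'\cdot\bar e$ by $\varphi$-decreasing moves, and $\varphi(W'\cdot\bar e)\le\varphi(W')+4$. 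Combined with $\varphi(W.e)\le\varphi(W)+4$ this gives $\varphi(W.e)\le\varphi(W')+8$, so the cascade has length at most eight and touches only an $O(1)$ window of trailing elementary subwalks ($49$ in the paper's accounting), after which a constant number of search-tree operations suffice. Without this, the constant-length window you implicitly rely on when "walking up $O(1)$ parent pointers" is unjustified.

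Your proposed substitute argument does not close the gap, for two reasons. First, the telescoping bound "total moves over $p$ extensions is $O(p)$" is only valid when the extensions form a single chain; in the compressed homotopy tree the same key may be (and, in Section~\ref{sec:algo-contraction}, is) extended many times, so a node's potential surplus would be spent once per branch and the global sum need not be $O(p)$. Second, even granting the amortized bound, the deferral device for converting it into a worst-case bound is incompatible with the data structure's correctness invariant: \textsc{Extend} must return the key of the \emph{fully reduced} walk, since the $O(1)$ homotopy test and the \textsc{Partition} operation rely on homotopic walks having identical keys (via Proposition~\ref{prop:trail-homotopy}); a partially reduced, lazily completed representation would assign distinct keys to homotopic walks. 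The fix is simply to establish the constant worst-case cascade length as above, which also makes the amortization unnecessary.
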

\begin{proof}
  Because we must maintain the invariant that the children of any given node in the tree correspond to pairwise distinct turn sequences, we require one more ingredient to our data structure.  Given an elementary subwalk~$W$, we need to be able to find the child of~$W$ that has a given turn sequence in $O(\log p)$ time, or certify that such a child does not exist.  For this purpose, recall that each elementary subwalk has the turn sequence $3^k$ or $(-3)^k$ for some $k\ge1$, or $a$ for some $a\not\in\set{-3,3}$.  We split the children of~$W$ according to each of these three categories.  In a given category, a child of~$W$ is encoded by the nonnegative integer $k$ or $a$.  We use three red-black trees~\cite{gs-dfbt-78} (one for each category, indexed by either $k$ or $a$) to decide whether the child of~$W$ with a specified compressed turn sequence exists, or to insert it if it does not exist.  Each of these operations takes $O(\log p)$ time, because each red-black tree contains $O(p)$ elements.
  
  The rest of the proof reuses tools and arguments from Section~\ref{sec:reduced-walks-reduction}.  Let $W$ be a \emph{reduced} walk finishing at vertex~$v$; let $e$ be a directed edge starting at~$v$; let $W.e$ be the concatenation of~$W$ and~$e$; and let $W'$ be the reduced walk homotopic to~$W.e$.  We claim that the compressed homotopy sequences of $W$ and~$W'$ each have at most~49 elementary subwalks after their longest common prefix. To prove the claim it is enough to prove that the compressed homotopy sequences of $W.e$ and~$W'$ each have at most~48 symbols after their longest common prefix; indeed the compressed homotopy sequences of $W$ and $W.e$ do not differ by more than one final elementary subwalk.  Now recall from Section~\ref{sec:reduced-walks-reduction} that, for any walk~$W''$, $\varphi(W'')$ equals three times the length of~$W''$ plus the number of bad turns of~$W''$, and that $\varphi$ strictly decreases when performing a reduction move.  In the same spirit as the proof of Proposition~\ref{prop:trail-algo}, observe that $\varphi(W.e)\le\varphi(W)+4$.  Observe also that $\varphi(W)\le\varphi(W')+4$, since $W$ can be obtained from the concatenation of~$W'$ and the reversal of~$e$ by performing reduction moves, which decrease the value of~$\varphi$.  So $\varphi(W.e)\le\varphi(W')+8$.  Since $W'$ is the unique walk obtained by reducing~$W.e$, we have that any sequence of reductions on~$W.e$ has length at most eight.  Initially, only the last elementary subwalks of~$W.e$ can yield a reduction.  Each reduction replaces at most six consecutive elementary subwalks by at most seven ones, and then backtracks by at most six elementary subwalks (see the proof of Proposition~\ref{prop:trail-algo}).  Thus, only the last 48 elementary subwalks of~$W.e$ can be changed using successive reductions, which implies that the remaining first elementary subwalks of~$W.e$ are not affected when reducing to~$W'$. This proves the claim.

  Let $W$ and $e$ be as in the statement of the lemma; we can assume that $W$ is reduced.  The claim implies that we can compute $\kappa(W.e)$ from $\kappa(W)$ in $O(\log p)$ time, for example as follows.  Starting at the elementary subwalk representing~$W$, in $O(1)$ time we go up in the tree by 49 levels, reaching an elementary subwalk that represents a walk~$W_1$.  We can then compute, again in $O(1)$ time, the compressed homotopy sequence of~$W_2$, the walk obtained from~$W$ by removing its prefix~$W_1$ and by concatenating~$e$ to the result.  We reduce this compressed homotopy sequence in $O(1)$ time (see again the proof of Section~\ref{prop:trail-algo}); finally, using the first paragraph of this proof, we extend the elementary subwalk representing~$W$ with each elementary subwalk of this compressed homotopy sequence in the tree in turn, in $O(\log p)$ time, reaching eventually the elementary walk for~$W'$.
\end{proof}

Finally:
\begin{lemma}
  Compressed homotopy trees support (with additional data structures described in the proof) the operation \textsc{Partition}: Given a set~$X$ of instances of some data structure containing, in particular, a key, we can, in time linear in the size of~$X$, compute the partition of~$X$ such that two elements in~$X$ belong to the same part if and only if they contain the same key.
\end{lemma}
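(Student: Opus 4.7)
The plan is to augment each node of the compressed homotopy tree with a pair of auxiliary fields and to maintain, across all calls to \textsc{Partition}, a global integer counter~$c$ that is incremented each time \textsc{Partition} is invoked. For every tree node~$v$, I would store a timestamp $c_v$ (set to $0$ when the node is created, which takes $O(1)$ additional time per node creation and hence does not affect the complexity of the other operations) and a pointer $\pi_v$ whose contents are meaningful only when $c_v$ equals the current value of~$c$.

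To execute \textsc{Partition}$(X)$, I would first increment~$c$ and initialize an empty output list of parts. Then I would traverse~$X$ once: for each $x\in X$, I read its key, which is a pointer to some node~$v$ of the compressed homotopy tree, and test in $O(1)$ time whether $c_v<c$. If so, the node~$v$ has not yet been encountered during this call, so I create a new singleton part containing~$x$, append it to the output list, assign $c_v\gets c$, and set $\pi_v$ to point to this new part. Otherwise $v$ has already been encountered, and I append~$x$ to the existing part pointed to by~$\pi_v$. This clearly runs in $O(|X|)$ time in the RAM model. Correctness follows because two elements $x,x'\in X$ are placed in the same part if and only if their keys are equal as pointers to the same tree node~$v$, which by the invariants of the compressed homotopy tree (uniqueness of the reduced walk representing each homotopy class, and bijection between keys and reduced walks from~$r$) is equivalent to saying that the walks they represent are homotopic.

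The only delicate point—and the main reason to use the timestamp trick rather than, say, a hash table or an auxiliary array indexed by node IDs—is that we need worst-case $O(|X|)$ time without any initialization step whose cost depends on the current size of the tree. The timestamp technique handles this via lazy reset: the pointer $\pi_v$ of a node that has not been touched in the current call can contain arbitrary stale data, but the comparison $c_v<c$ correctly distinguishes fresh from stale entries. Since the counter~$c$ is incremented at most once per \textsc{Partition} call and the overall algorithm performs only polynomially many operations, $c$ fits in a constant number of machine words and can be manipulated in $O(1)$ time under the RAM model assumptions recalled after Theorem~\ref{thm:surf}.
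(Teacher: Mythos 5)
Your proposal is correct and follows essentially the same approach as the paper: augment each tree node with a pointer to its part and scan $X$ once, creating a new part on first encounter of a node and appending on subsequent encounters. The only difference is cosmetic---you use a timestamp for lazy invalidation whereas the paper simply restores the visited pointers to NULL at the end of the call, which also costs only $O(|X|)$ time.
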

\begin{proof}
  This operation relies on easy bookkeeping techniques; in detail: Slightly extend the data structure of elementary subwalks with a pointer, initially NULL; for each element $x$ of~$X$ in turn, if its key refers to an elementary subwalk whose pointer is NULL, create a new part of the partition, initially containing only~$x$, and make the elementary subwalk refer to that part; otherwise, the key of~$x$ refers to an elementary subwalk~$W$ that has already been visited, and $x$ can be added to the part thanks to the pointer in~$W$.  After computing this partition, restore all pointers to NULL.
\end{proof}

One final detail: The root of the tree of our data structure must be handled in a special way.  By convention, it is a virtual elementary subwalk  whose only directed edge~$e$ does not belong to the triangulation~$\rtriangulation$ but points towards the root~$r$, so that by convention the turns from $e$ to each edge starting at~$r$ are $\infty$, with the obvious modifications on the algorithm to reduce walks.  We omit the tedious and trivial details.

\subsection{Algorithm}\label{sec:algo-contraction}

We can now give the algorithm for Proposition~\ref{prop:contraction-algorithm}.  The intuition is, again, to contract the edges of a spanning forest of~$\graph$ and to record the homotopy classes of the resulting loops.  This computes the sparse drawing~$\lambda$ of the loop graph~$\loopgraph$.  In a second step, we compute the drawing~$\gamma$ of~$\graph$ onto~$\loopgraph$.

\begin{proof}[Proof of Proposition~\ref{prop:contraction-algorithm}]
  By definition of a factorization, we can assume in the whole proof that $\graph$ is connected.  Let $r$ be an arbitrary root vertex of~$\graph$ and let $\spanningtree$ be a spanning tree of~$\graph$ obtained by a \emph{depth-first search} from~$r$.  Let us orient each edge $e$ of~$\graph\setminus \spanningtree$ so that its target $t(e)$ is an ancestor of its source $s(e)$.  Let $\ell_e$ be the loop that is the concatenation of the (unique) path in~$\spanningtree$ from $r$ to~$s(e)$, edge~$e$, and the (unique) path in~$\spanningtree$ from~$t(e)$ to~$r$.  Given a vertex~$v$ on the path from~$t(e)$ to~$r$, we let $\ell_e^v$ be the loop~$\ell_e$ where the last segment from $v$ to~$r$ is truncated (thus, $v$ always appears in $\ell_e^v$).

  In a first step, we compute the key of (the image in $T$ of) the path in~$\spanningtree$ from $r$ to each vertex of~$\graph$.  This can be done in $O(n\log n)$ time by traversing~$\spanningtree$ in top-down order and using the \textsc{Extend} operation $O(n)$ times.  Similarly, in the same amount of time, we can compute, for each edge $e$ of~$\graph\setminus \spanningtree$, the key $\kappa(\ell_e^{t(e)})$ of the image of~$\ell_e^{t(e)}$ in the triangulation~$\rtriangulation$. 

  In a second step, using the keys already computed, we compute the keys of the entire loops $\ell_e$, but to get an efficient algorithm we eliminate duplicate keys as we encounter them, and halt whenever there are too many distinct keys (i.e., too many homotopy classes, so that $\graph$ cannot be untangled).  We do this in a bottom-up way in the tree~$\spanningtree$.  For each vertex $v$ of~$\graph$, let $K(v)$ be the set of keys of the walks $\ell_e^v$, for each edge $e$ of~$\graph\setminus \spanningtree$ such that $v$ appears after~$e$ in $\ell_e$.

  Assume that we have computed the sets $K(v')$ for each child~$v'$ of a vertex~$v$; using the \textsc{Partition} operation, we can assume each $K(v')$ to be without duplicates.  If one of them has size larger than $12g$, then the \emph{directed} loops $\ell_e$ fall into at least $12g+1$ distinct homotopy classes, which, by a folklore result (see, e.g., \cite[Lemma~2.1]{ccelw-scsh-08}), implies that $\graph$ cannot be untangled, so we abort.  Otherwise, for each child $v'$ of~$v$ in~$T$ and for each key $k\in K(v')$, we apply the \textsc{Extend} function to $k$ and edge $v'v$.  Let $K(v',v)$ be the resulting set of keys.  We now observe that $K(v)$ is obtained by removing the duplicates in the union of the following sets of keys:
  \begin{itemize}
      \item the sets $K(v',v)$, for each child $v'$ of~$v$, and
      \item the set of keys of the form $\kappa(\ell_e^v)$, for each edge $e$ such that $t(e)=v$.
  \end{itemize}
  At the end of the recursion, unless the algorithm aborted, we have computed $K(r)$.  At each vertex~$v$, the number of times we apply \textsc{Extend} is $O(g\deg(v))$, where $\deg(v)$ is the degree of~$v$ in~$\spanningtree$.  So in total, \textsc{Extend} has been applied $O(gn)$ times, and the entire algorithm takes $O(gn\log (gn))$ time.

  When $K(r)$ has been computed, we can also compute the compressed turn sequence of the reduced walks of~$K(r)$ in time linear in their lengths, using the \textsc{ReducedWalk} operation. Actually, since $K(r)$ corresponds to $O(g)$ walks of length~$O(n)$ in~$\graph$, we may uncompress these turn sequences and return the explicit description of the $O(g)$ reduced walks in $O(gn)$ time, thus asymptotically without overhead in the running time.  We have thus computed the sparse drawing~$\lambda$ of the one-vertex graph~$\loopgraph$ on~$\surface$.

  Finally, remember that we also need to compute the drawing~$\gamma$ of~$\graph$ onto~$\loopgraph$.  In other words, we need to compute the mapping from each edge $e\in\graph\setminus \spanningtree$ to the loops of~$\loopgraph$ (or to its basepoint), or, equivalently, the key of~$\ell_e$ in~$K(r)$.  We now explain how to refine the second step above to achieve this.  In this second step, every key considered is of the form $k=\kappa(\ell_e^v)$; whenever we encounter such a key~$k$, we also store together with it a corresponding edge~$e$ such that $\kappa(\ell_e^v)=k$.  Whenever we detect redundancies in the current set of keys (see the end of Section~\ref{sec:contract-datastruct}), we actually compute the set of edges $e_1,\ldots,e_p\in\graph\setminus \spanningtree$, $p\ge1$, whose keys share a common value~$k$, which in particular implies $\kappa(\ell_{e_1})=\ldots=\kappa(\ell_{e_p})$.  In that case, we declare that the edge corresponding to the common key~$k$ is~$e_1$, declare that the \emph{leader} of $e_2,\ldots,e_p$ is~$e_1$, and then completely forget about the edges $e_2,\ldots,e_p$.  There is no overhead in the running time of the algorithm.  At the very end, once the keys in $K(r)$ have been computed (each of them coming with an associated edge in~$\graph\setminus \spanningtree$), we need to recover, for each edge $e\in\graph\setminus \spanningtree$, the corresponding key $\kappa(\ell_e)\in K_r$. For this purpose, we remark that the set of edges in $\graph\setminus \spanningtree$ is implicitly organized as a forest, in which the ``leader'' relation is actually a ``parent'' relation.  Thus, our problem boils down to this:  Given a forest, we need to compute, for each node, the root of its corresponding tree.  This can easily be done in time linear in the size of the forest, and thus, in our case, in $O(n)$ time.
\end{proof}

\section{Proof of Theorem~\ref{thm:surf} when $g\geq2$ and $b=0$}\label{sec:untangling-closed-ggeq2}

In this section we prove Theorem~\ref{thm:surf} in the case where our input surface~$\surface$ has genus at least two and is without boundary, using Theorem~\ref{thm:main-theorem}. To do so, we push our drawing into a reducing triangulation $T$, and we determine if the drawing in $T$ can be untangled. If required, we even compute a weak embedding in $T$, that we push back into the initial cellularly embedded graph $H$. The conversions between $T$ and $H$ can be described independently of the drawings, and are computed in the following lemma:

\begin{lemma}\label{L:conversion-closed-surface}
Let $H$ be a graph of size $m$ cellularly embedded on a surface $\surface$ of genus $g \geq 2$ without boundary. In $O(gm)$ time, we can construct a reducing triangulation $T$ of $S$, and a drawing $\varphi : H \to T$ homotopic to the inclusion map $H \subset \surface$, such that $\varphi$ maps each edge of $H$ to a walk of length $O(g)$ in $T$. Moreover, in $O(g^2m)$ time, we can push the 1-skeleton of $T$ arbitrarily close to the 1-skeleton of $H$ by isotopy, and compute the resulting weak embedding $\psi : T \to H$, such that $\psi$ maps each edge of $T$ to a walk of length $O(gm)$ in $H$.
\end{lemma}

\begin{figure}
    \centering
    \def\svgwidth{40em}
    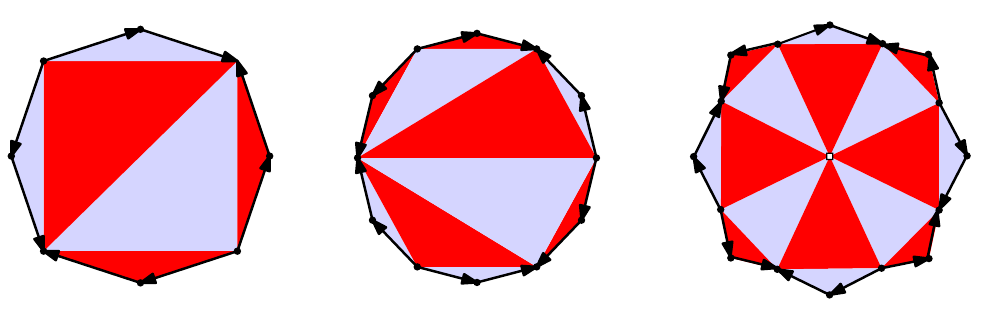
    \caption{How to build a reducing triangulation from a canonical polygonal schema of~$\surface$.  The cases $g=2$ (left) and $g=3$ (middle) use special constructions; the case $g\ge4$ (right) trivially generalizes to higher genus.  In all cases, there is a walk of length $O(1)$ between any two corners of the polygonal schema.}
    \label{fig:valid-tr-from-canonical-sc}
\end{figure}

A \emph{canonical system of loops}~$K$ of~$\surface$ is a set of pairwise disjoint simple loops with a common basepoint~$b$, such that, when cutting $\surface$ along~$K$, we obtain a \emph{canonical polygonal schema}, a $4g$-gon whose boundary reads $a_1b_1\bar a_1\bar b_1\ldots a_gb_g\bar a_g\bar b_g$ in this order, where $a_1,\ldots, a_g, b_1,\ldots, b_g$ denote the loops in~$K$, and bar denotes reversal.

\begin{figure}[ht]
    \centering
    \includegraphics[scale=0.8]{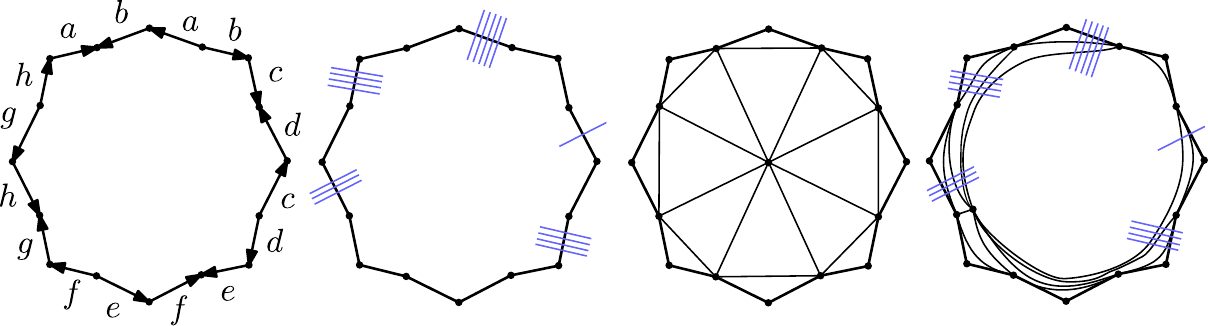}
    \caption{From left to right: (a) The canonical polygonal schema associated to the canonical system of loops $K$ for the surface $\surface$ of genus $g=4$. (b) A blue graph $Q$ on $\surface$, in general position with respect to $K$, that intersects every edge of $K$ at most $m$ times. (c) A reducing triangulation $T$ whose 1-skeleton contains $K$. (d) An embedding of $T$ on $\surface$ such that each edge of $T$ crosses $Q$ at most $O(gm)$ times.}
    \label{fig:embed-redux}
\end{figure}

\begin{figure}[ht]
    \centering
    \includegraphics[scale=0.8]{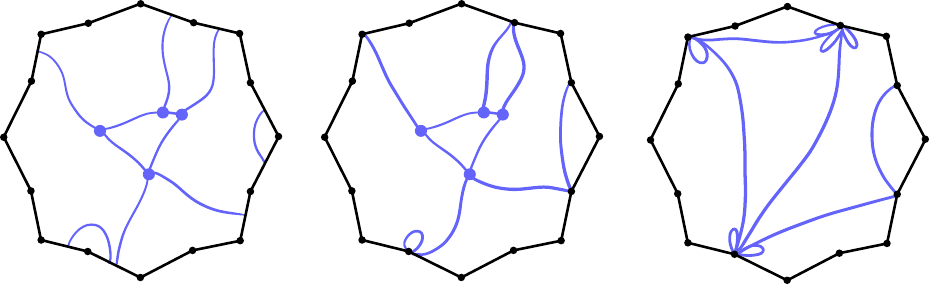}
    \caption{(Left) In the proof of Lemma~\ref{L:conversion-closed-surface}, the graph $Q$ is represented in blue in the face of $K$. (From Left to Middle) The intersection points between $Q$ and $K$ are slided along the edges of $K$. (From Middle to Right) Some subedges of $Q$ are contracted to push the vertices of $Q$ to the vertex of $K$.}
    \label{fig:push}
\end{figure}
\begin{figure}[ht]
    \centering
    \includegraphics[scale=0.15]{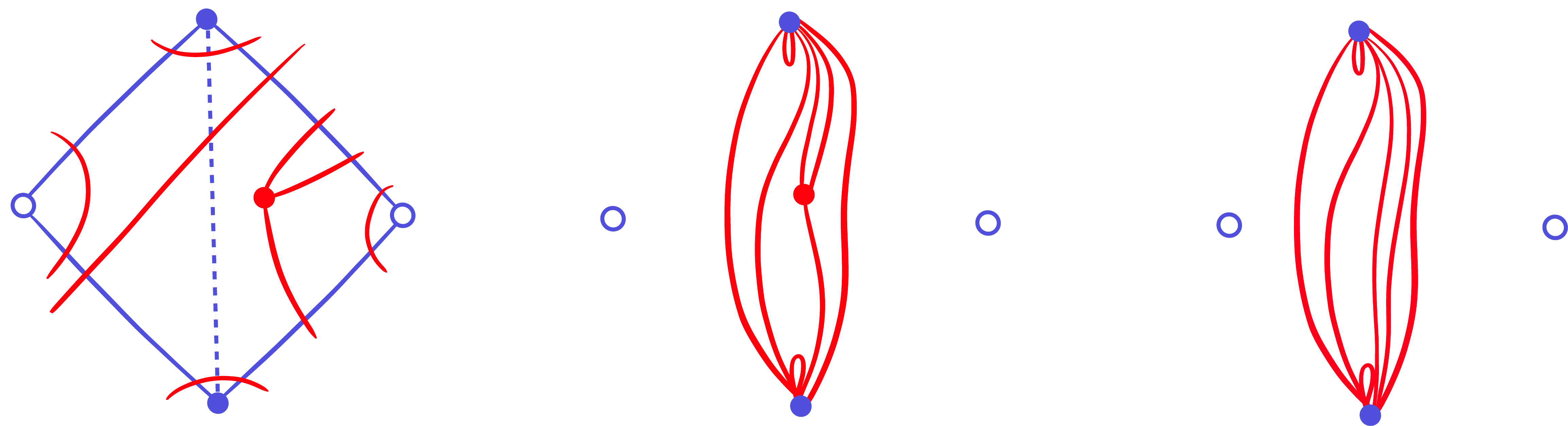}
    \caption{(Left) In the proof of Lemma~\ref{L:conversion-closed-surface}, a portion of $T$ is here represented in red in some face of $Q$. The two blue disk vertices belong to $H$. The two blue circle vertices were inserted in faces of $H$ to build $Q$, they are dual vertices. The dashed edge was deleted from $H$ to build $Q$. The plain edges belong to $Q$. (From Left to Middle) In the overlay $A$ between $T$ and $Q$, every edge of $Q$ is detached from its incident dual vertex, then contracted. (From Middle to Right) Some subedge incident to the red disk vertex of $T$ is contracted.}
    \label{fig:contractions}
\end{figure}

\begin{proof}
Let $Q$ be obtained from $H$ by inserting a new vertex in every face, by adding an edge between this vertex and every corner of the face, and by removing from $H$ all its initial edges. Then $Q$ is a quadrangulation of size $O(m)$.  Let $K$ be the canonical system of loops for the surface of genus $g$ without boundary. Using a result by Pocchiola, Lazarus, Vegter, and Verroust~\cite{lpvv-ccpso-01}, we compute in $O(gm)$ time an embedding of $K$ in general position with respect to $Q$, such that each loop of $K$ crosses each edge of $Q$ at most four times. Then we consider the reducing triangulation $T$ for the surface of genus $g$ without boundary depicted in Figure~\ref{fig:valid-tr-from-canonical-sc}. This triangulation $T$ extends $K$, so the embedding of $K$ extends to an embedding of $T$ in general position with respect to $Q$, in which each edge of $T$ crosses $Q$ at most $O(gm)$ times, see Figure~\ref{fig:embed-redux}. While we already computed the overlay of $K$ and $Q$, we do not compute the overlay of $T$ and $Q$ yet, as this overlay may have size $\Omega(g^2m)$. 

Let us first explain how to push the 1-skeleton of $H$ by homotopy into the 1-skeleton of $T$, and retrieve the resulting drawing $\varphi : H \to T$, all in $O(gm)$ time. As a preliminary, we push every edge $e$ of $H$ to a walk of length $2$ in $Q$. We shall now explain how to push $Q$ into the 1-skeleton of $T$ by homotopy, in a way that will send every edge of $Q$ to a walk of length $O(g)$ in $T$.  Consider the overlay between $K$ and $Q$.  For each loop of~$K$ crossed $k$ times by~$Q$, we contract $k-1$ such subedges, and then we contract some subedges of~$Q$ in order to bring every vertex of~$Q$ to the basepoint of~$K$ (Figure~\ref{fig:push}).  Let $Q'$ be the resulting contracted version of~$Q$; now, we have the overlay of~$Q'$ and of~$K$, two one-vertex graphs sharing the same vertex. The effect is that each edge of $Q'$ is transformed into an ordered set of $O(g)$ pairwise non-crossing arcs in the polygonal schema, each connecting two corners. Finally, we push each of those arcs into a path of length $O(1)$ in $T$.

Let us now explain how to push the 1-skeleton of $T$ arbitrarily close to $H$, and retrieve the resulting weak embedding $\psi : T \to H$, all in $O(g^2m)$ time. To do so, we forget about the modifications of the previous paragraph, and we compute the overlay $A$ of $T$ and~$Q$ (Figure~\ref{fig:embed-redux}, Right) in $O(g^2m)$ time. Then, we modify $A$ in two steps. See Figure~\ref{fig:contractions}. First, recall that by construction, every edge $e$ of $Q$ is between a vertex $v$ of $H$ and a dual vertex $w$ that was inserted in a face of $H$ when building $Q$.  In the overlay $A$, we bring the $k$ intersection points between $e$ and $T$ close to $v$, by contracting the $k-1$ subedges of $e$ that are not incident to $w$. Second, we contract some subedges of~$T$ in $A$ to put every vertex of $T$ close to a vertex of $H$. In the end, every edge $e$ of $T$ is close to a walk of length $O(gm)$ in $H$, so we can retrieve the desired weak embedding of the 1-skeleton of $T$ into $H$. 
\end{proof}

Given a graph~$\embeddedgraph$ embedded on~$\surface$, a \emphdef{monogon} is face of~$\embeddedgraph$ that is a disk bounded by a single edge of~$\embeddedgraph$ (which is thus a loop).  A \emphdef{bigon} is a face of~$\embeddedgraph$ that is a disk bounded by two edges of~$\embeddedgraph$ (which are thus parallel edges).

\begin{proof}[Proof of Theorem~\ref{thm:surf} in the case $g\ge2$, $b=0$]
First assume that we are only interested in determining whether there is an embedding of~$\graph$ on~$\surface$ homotopic to~$\drawing$. In that case, we claim that we can assume without loss of generality that $\embeddedgraph$ is a set of $O(g)$ loops with a common basepoint.  Indeed, we can, in $O(m)$ time, contract an arbitrary spanning tree of~$\embeddedgraph$ and transform the input drawing~$\drawing$ of~$G$ homotopically into a drawing of~$G$ in that new graph.  Then we can iteratively remove edges forming monogons, and merge together edges forming bigons, all in $O(m)$ time.  Euler's formula and double counting of the edge-face incidences then implies that $\embeddedgraph$ has $O(g)$ loops. This proves the claim. We use the claim immediately, and apply Lemma~\ref{L:conversion-closed-surface} to compute in $O(g^2)$ time a reducing triangulation $T$ of $\surface$, and a drawing $\varphi : H \to T$, homotopic to the inclusion map $H \subset \surface$, such that $\varphi$ maps each edge of $H$ to a walk of length $O(g)$ in $T$. The drawing $\varphi \circ \drawing$ is homotopic to $\drawing$, and has size $O(gn)$. We apply Theorem~\ref{thm:main-theorem} to determine in $O(g^2n \log(gn))$ time whether $\varphi \circ \drawing$ can be untangled.

Now, assuming that there is an embedding homotopic to $\delta$, we describe how to compute the desired weak embedding $\delta' : G \to H$. (This time we do not assume that $H$ is a set of loops.) First, we apply Lemma~\ref{L:conversion-closed-surface} to compute in $O(g^2m)$ time the reducing triangulation $T$ of $\surface$ and the drawing $\varphi : H \to T$, but also, and this is new, a weak embedding $\psi : T \to H$, obtained pushing the 1-skeleton of $T$ arbitrarily close to the 1-skeleton of $H$ by isotopy, that maps each edge of $T$ to a walk of length $O(gm)$ in $H$. The drawing $\varphi \circ \drawing$ is homotopic to $\drawing$, and has size $O(gn)$. We apply Theorem~\ref{thm:main-theorem} to construct in $O(g^2(n^2 + n \log(gn)))$ time a weak embedding $\delta'' : G \to T$, homotopic to $\varphi \circ \delta$, such that $\delta''$ maps each edge of $G$ to a walk of length $O(gn)$ in $T$. Then we return $\delta' := \psi \circ \delta''$, which is as desired.
\end{proof}
}

\section{Proof of Theorem~\ref{thm:surf} for the remaining cases}\label{sec:torus-and-boundary}

In this section we solve our untangling problem for the (orientable) surfaces not handled in the previous section, namely when $\surface$ is the torus or has non-empty boundary.

Recall that the entire algorithm for surfaces of genus at least two works as follows.  In Step~(0), we transform the input graph~$\embeddedgraph$ into a reducing triangulation (Section~\ref{sec:untangling-closed-ggeq2}).  Then we apply the four steps in the proof of Theorem~\ref{thm:main-theorem} (Section~\ref{sec:end-of-proof}): (1) We compute a factorization~$(\loopgraph,\lambda,\gamma)$ of our graph drawing~$(\graph,\drawing)$ (or immediately return that $(\graph,\drawing)$ cannot be untangled); (2) we untangle $(\loopgraph,\lambda)$, obtaining an embedding~$(\loopgraph,\lambda')$ (or return that $(\graph,\drawing)$ cannot be untangled); (3) we decide whether $(\graph,\lambda'\circ\gamma)$ is a weak embedding; (4) we return the answer accordingly.

We follow the same general strategy.  The arguments are still valid because they rely on auxiliary lemmas (Section~\ref{sec:auxiliary-lemmas}) valid for arbitrary surfaces.  Three ingredients must be adjusted: In Step~(0), we must transform the input graph; in Step~(1), we must compute a factorization; in Step~(2), we must untangle a loop graph.

\subsection{Untangling on the torus}

We now prove Theorem~\ref{thm:surf} when the input surface~$\surface$ is the torus.

The proof relies on three lemmas, for Steps (0), (1), and~(2), respectively.  We remark that we cannot apply the exact same strategy as in the case of genus at least two, because the torus admits neither a hyperbolic metric (by the Gauss--Bonnet theorem) nor a reducing triangulation (by Euler's formula).  On the other hand, we can replace it with a simpler structure because of the Abelian structure on the homotopy group of the torus, see below. A \emph{canonical system of loops}~$K$ of the torus is a set of two disjoint simple loops $(k_1,k_2)$with common basepoint that cross at the basepoint.  Similar to Lemma~\ref{L:conversion-closed-surface}, we have the following lemma, the proof of which is strictly easier than Lemma~\ref{L:conversion-closed-surface}, and thus omitted.

\begin{lemma}\label{lem:canon-torus}
  Let $\embeddedgraph$ be a graph of size~$m$ cellularly embedded on a torus~$\surface$. In $O(m)$ time, we can construct a canonical system of loops~$K$ of $\surface$, and a drawing $\varphi : H \to K$ homotopic to the inclusion map $H \subset \surface$, such that $\varphi$ maps each edge of $H$ to a walk of length $O(1)$ in $K$. Moreover, in $O(m)$ time, we can push the 1-skeleton of $K$ arbitrarily close to the 1-skeleton of $H$ by isotopy, and compute the resulting weak embedding $\psi : K \to H$, such that $\psi$ maps each edge of $K$ to a walk of length $O(m)$ in $H$.
\end{lemma}

The homotopy group on the torus is isomorphic to~$\cZ^2$; this means that every closed walk in~$K$ is homotopic to $k_1^{u_1}\cdot k_2^{u_2}$ (where ``$\cdot$'' denotes concatenation) for a unique $(u_1,u_2)\in\cZ^2$, and we represent homotopy classes by such vectors.  Moreover, walk concatenation translates, in homotopy, to vector addition, and free homotopy classes can also be represented by such vectors.  We say that a walk in~$K$ is \emphdef{reduced} if it is of the form $k_1^{u_1}\cdot k_2^{u_2}$, for some integers $u_1$ and~$u_2$.  Every walk in~$K$ is homotopic to a unique reduced walk; we can reduce walks in linear time. 

\begin{lemma}\label{lem:contract-torus}
Let $K$ be a canonical system of loops of the torus. Let $G$ be graph, and let $\delta : G \to K$ be a drawing of size $n$. We can, in time $O(n \log n)$, compute a factorization of $(\graph, \drawing)$, of size $O(n)$, or correctly report that $\drawing$ cannot be untangled.
\end{lemma}

\begin{proof}
  The algorithm is an easy variation on the one described in Section~\ref{sec:algo-contraction}. The only change concerns the data structure that stores homotopy classes of walks in~$K$, described in Section~\ref{sec:contract-datastruct} for reducing triangulations.  Here we replace the reduced walks in a reducing triangulation by reduced walks in~$K$.  The compressed homotopy tree structure is replaced by a two-level tree-like structure, the first level for the integer~$u_1$ and the second level for the integer~$u_2$ in the notations above.  The four operations introduced in Section~\ref{sec:contract-datastruct} extend to this context, with the same complexity.  Section~\ref{sec:algo-contraction} extends immediately, because it only uses these four operations.
\end{proof}

The following adapts classical arguments, following~\cite[Lemma~4.5]{dubois2024making}:

\begin{lemma}\label{lem:untangling-loops-torus}
Let $K$ be a canonical system of loops of the torus. Let $\loopgraph$ be a loop graph and $\lambda$ be a sparse drawing of $\loopgraph$ in $K$, of size $n$. We can decide whether $\lambda$ can be untangled in $O(n \log n)$ time. If so, we can compute a weak embedding $\lambda' : \loopgraph \to K$ of size $O(n)$, homotopic to $\lambda$.
\end{lemma}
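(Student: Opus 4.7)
The plan is to exploit the abelian structure $\pi_1(\surface)\cong\cZ^2$, so that the homotopy class of every loop becomes an integer pair and untanglability reduces to an arithmetic condition on these pairs. This is a drastic simplification compared with Proposition~\ref{prop:untangling-a-loop-graph-general}: no reducing triangulation, no weak embedding test, no hyperbolic universal cover is needed.

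First I would reduce each walk $\lambda(e)$ to its canonical form $q_1^{p_e}\cdot q_2^{r_e}$ in time linear in the length of $\lambda(e)$ and record the pair $(p_e,r_e)\in\cZ^2$, for total time $O(|\lambda|)$. By sparsity every $(p_e,r_e)\neq(0,0)$ and, within each component, the unordered classes $\set{\pm(p_e,r_e)}$ are pairwise distinct. Next, using two classical facts about the flat torus, I would check necessary conditions: a non-trivial free homotopy class contains a simple closed curve if and only if it is primitive (in particular by Lemma~\ref{L:primitive}), and two simple closed curves with classes $(p,r)$ and $(p',r')$ have geometric intersection number exactly $|pr'-p'r|$. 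Thus, if $\lambda$ is homotopic to an embedding, then each $(p_e,r_e)$ is primitive; every pair of edges $e,e'$ in the same component satisfies $|p_er_{e'}-p_{e'}r_e|=1$, so that the corresponding simple loops meet only at their shared basepoint; and every pair of edges in distinct components satisfies $|p_er_{e'}-p_{e'}r_e|=0$, i.e., their primitive classes coincide up to sign, since loops based at distinct vertices must be disjoint as closed curves. All these conditions can be tested in $O(|\lambda|)$ time in the RAM model; I abort if any fails. As a byproduct each component then has at most three loops, and the surviving configurations are, up to an $SL_2(\cZ)$-change of basis: either a single component whose loop classes form a subset of $\set{(1,0),(0,1),(1,1)}$ or $\set{(1,0),(0,1),(1,-1)}$, or several components each carrying a single loop, with all these single-loop classes agreeing up to sign.

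The step I expect to demand the most care is the sufficiency direction together with the extraction of the rotation system. I would argue it by realizing every loop as a straight geodesic on the flat torus based at its vertex: the intra-component arithmetic forces the geodesics through a common vertex to meet only there, and for several single-loop components the basepoints can be placed on distinct parallel geodesics of the common primitive direction (after an arbitrarily small homotopy of the vertices), yielding an embedding homotopic to $\lambda$. Finally, the cyclic order of the edge ends around each vertex is read off by sorting by argument in $[0,2\pi)$ the at most six direction vectors $(p_e,r_e),(-p_e,-r_e)$ coming from the loops at that vertex, which gives the rotation system of $\lambda'$ in $O(|\lambda|)$ total time.
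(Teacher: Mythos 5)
Your proposal is correct and follows essentially the same route as the paper: reduce each loop to its class in $\cZ^2$, test primitivity and the pairwise determinant conditions (which, as you note, force at most three loops per component and parallel classes across components), and realize the surviving configuration by straight geodesics on the flat torus to read off the rotation system. The paper organizes the arithmetic as an explicit case analysis on a connected component with one, two, or three loops rather than as a uniform pairwise test, but the two formulations are equivalent.
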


\begin{proof}
We claim that we can assume, without loss of generality, that $\loopgraph$ is connected and contains at most three loops.  Indeed, if $(\loopgraph,\lambda)$ can be untangled, then each connected component of~$\loopgraph$ has at most three loops (by Euler's formula) and if it has at least two loops, then by sparsity every embedding of that connected component cuts the torus into a disk.  Thus, either the conclusion of the claim holds, or $\loopgraph$ has several connected components, each made of a single loop.  These loops must be pairwise freely homotopic, because otherwise they cannot be untangled; but then $(\loopgraph,\lambda)$ can be untangled if and only if its restriction to a single connected component of~$\loopgraph$ can be untangled, and a weak embedding of this single loop would immediately provide a weak embedding of $\loopgraph$ (and in that case the rotation systems are trivial).  All this takes linear time.  This proves the claim.


Let $K^*$ be the canonical system of loops dual to $K$ on the torus $\surface$. Let $k_1, k_2$ be the two loops of $K$, and let $k_1^*, k_2^*$ be their respective dual loops. We identify $\surface$ with the quotient $\cR^2 / \cZ^2$, such that $K^*$ lifts to the following grid: the vertex of $K^*$ lifts to $\cZ^2$, the loop $k_1^*$ lifts to the vertical segments between $(i,j)$ and $(i,j+1)$ for $i,j \in \cZ$, and the loop $k_2^*$ lifts to the horizontal segments between $(i,j)$ and $(i+1,j)$ for $i,j \in \cZ$; the loops $k_1$ and $k_2$ are oriented so that, in $\cR^2$, the lifts of $k_1$ cross the lifts of $k_1^*$ from left to right, and the lifts of $k_2$ cross the lifts of $k_2^*$ from bottom to top.  In this way, $\surface$ is endowed with a flat metric.

Now, we define a map $f:L\to\surface$, homotopic to~$\lambda$, by representing the loops of~$L$ as geodesics in this metric.  For this purpose, we first select arbitrarily the image of the basepoint of~$L$ under~$f$; let $p$ be one of its lifts in~$\mathbb R^2$.  The images of any loop~$\ell$ of~$L$ under~$f$ are then uniquely determined: Indeed, if the reduced walk associated to~$\lambda \circ \ell$ is $k_1^{u_1} \cdot k_2^{u_2}$, then $f \circ \ell$ must lift to a line segment from $p$ to $p+(u_1,u_2)$.  Finally, we can ensure that $f$ does not intersect the vertex of~$K^*$, by slightly perturbing~$p$ if necessary.
Note that $f$ actually maps each edge of $L$ to a geodesic closed curve (not only a geodesic path).  Now for every edge $\ell$ of $L$, the path $f \circ \ell$ crosses $O(n)$ edges of $K^*$.  We can compute the sequence of crossings of~$f \circ \ell$ with the edges of~$K^*$ in $O(n)$ time: Indeed, $f \circ \ell$ lifts to a line segment in~$\mathbb R^2$ that does not intersect the integer points, and given its endpoints, we can compute the sequence of horizontal and vertical edges of the grid crossed by the line segment, in order, with the standard Bresenham's line algorithm~\cite{5388473}, which uses integer arithmetic only.  This sequence of crossings of $f \circ \ell$ with~$K^*$ gives, by duality, a walk in $W_\ell$ in $K$. Let $\lambda' : L \to K$ be the drawing that maps each edge $\ell$ of $L$ to its walk $W_\ell$.  Finally, we determine whether $\lambda'$ is a weak embedding, in $O(n \log n)$ time, with Theorem~\ref{thm:toth-et-al}; if it is the case, we return $\lambda'$; otherwise, we return that $\lambda$ cannot be untangled.

We now prove correctness.  If $\lambda'$ is a weak embedding, then the answer is obviously correct.  Conversely, if $\lambda$ (and thus $f$) can be untangled, we claim that $f$ is an embedding.  It is then immediate that $\lambda'$ is a weak embedding, since $f$ approximates $\lambda'$ in the patch system of~$K$, which is the surface obtained from $\surface$ by removing the vertex of $K^*$.

There remains to prove the claim.  For this purpose, let us assume that $f$ can be untangled. Recall (see, e.g., Farb and Margalit~\cite[Section~1.2.3]{fm-pmcg-12}) that on the torus $\surface$, if a non-contractible geodesic closed curve $\gamma$ is homotopic to a simple closed curve, then $\gamma$ itself is simple. Also any two non-contractible non-homotopic simple geodesic closed curves $\gamma_0$ and $\gamma_1$ cross minimally, up to (free) homotopy. Thus every loop $\ell$ of $\loopgraph$ is mapped to a simple closed curve by $f$, by sparsity and since the image curve of $\ell$ is simple in an embedding homotopic to $f$. Also for any two distinct loops $\ell_1$ and $\ell_2$ of $\loopgraph$, the closed curves $f \circ \ell_1$ and $f \circ \ell_2$ cross only once (at the vertex of $L$), by sparsity and since the image curves of $\ell_1$ and $\ell_2$ cross only once in an embedding homotopic to $f$. That proves that $f$ is an embedding.
\end{proof}

The proof of Theorem~\ref{thm:surf} in the case of the torus is now straightforward.

\begin{proof}[Proof of Theorem~\ref{thm:surf} for the torus]
We apply Lemma~\ref{lem:canon-torus} to construct in $O(m)$ time a canonical system of loops~$K$ of $\surface$, and a drawing $\varphi : H \to K$ homotopic to the inclusion map $H \subset \surface$, such that $\varphi$ maps each edge of $H$ to a walk of length $O(1)$ in $K$. 

We apply the algorithm of the proof of Theorem~\ref{thm:main-theorem} in Section~\ref{sec:end-of-proof} to the drawing $\varphi \circ \delta$, but where Lemma~\ref{lem:contract-torus} replaces Proposition~\ref{prop:contraction-algorithm}, and Lemma~\ref{lem:untangling-loops-torus} replaces Proposition~\ref{prop:untangling-a-loop-graph-general}. That is, in time $O(n \log n)$, we determine whether $\varphi \circ \delta$ (and thus $\delta$) can be untangled, and if it is the case, in $O(n^2)$ time, we compute a weak embedding $\delta'' : G \to K$, homotopic to $\varphi \circ \delta$ (and thus to $\delta$), that maps each edge of $G$ to a walk of length $O(n)$ in $K$.

Now Lemma~\ref{lem:canon-torus} pushes the 1-skeleton of $K$ arbitrarily close to the 1-skeleton of $H$ by isotopy, and provides the resulting weak embedding $\psi : K \to H$, such that $\psi$ maps each edge of $K$ to a walk of length $O(m)$ in $H$. We compute $\delta' := \psi \circ \delta''$ in $O(mn^2)$ time, which is as desired.
\end{proof}

\subsection{Untangling on surfaces with non-empty boundary}\label{sec:boundary}

In this section, we prove Theorem~\ref{thm:surf} in the case where our input surface~$\surface$ has non-empty boundary.  An obvious strategy is to attach a handle to each boundary component and to apply Theorem~\ref{thm:surf} to the resulting surface without boundary.  While this is certainly a valid approach, we show here that we can circumvent a large part of the technical machinery, even though this does not improve the asymptotic running time.

A \emphdef{loop system} of~$\surface$ is a set~$Y$ of pairwise disjoint simple loops with a common basepoint on~$\surface$ such that each face of~$Y$ has genus zero and contains a single component of the boundary of~$\surface$.

\begin{lemma}\label{lem:canon-bd}
  Let $\embeddedgraph$ be a graph of size~$m$ cellularly embedded on a surface~$\surface$ of genus $g \geq 0$ with $b \geq 1$ boundary components. In $O((g+b)m)$ time, we can construct a loop system~$Y$ of $\surface$, and a drawing $\varphi : H \to Y$, homotopic to the inclusion map $Y \subset \surface$, such that $\varphi$ maps each edge of $H$ to a walk of length $O(g+b)$ in $Y$. Moreover, in the same amount of time, we can push the 1-skeleton of $Y$ arbitrarily close to the 1-skeleton of $H$ by isotopy, and compute the resulting weak embedding $\psi : Y \to H$, such that $\psi$ maps each edge of $Y$ to a walk of length $O(m)$ in $H$.
\end{lemma}

\begin{proof}
 Let us first explain how to construct the loop system $Y$ from $H$.  First, we contract a spanning tree in~$H$, so that now $H$ has only one vertex.  Second, as long as an edge of~$H$ is incident to a face containing a boundary component of~$\surface$ \emph{and} to a face containing no boundary component of~$\surface$, we remove it.  It is easy to achieve this in linear time, e.g., by using a spanning forest of the faces of~$H$ in which the ``seeds'' are the faces containing a boundary component. We constructed $Y$ from $H$ by contracting a spanning tree, and then by deleting some edges, so we have access to the desired weak embedding $\psi : Y \to H$ at no extra cost.

  Let us now explain how to push the 1-skeleton of $H$ into the 1-skeleton of $Y$ by homotopy, and retrieve the resulting drawing $\varphi : H \to Y$.  First, as above, we contract the edges of a spanning tree of~$H$; the contracted edges are mapped to the basepoint of~$Y$.  Let us now consider the other edges of~$H$.  Any such edge that is not in~$Y$ belongs to a unique face of~$Y$, and can be rerouted homotopically in $\varphi$ to a subpath of that face, on the side of that face that does not contain the puncture.  By Euler's formula, $Y$ contains $O(g+b)$ loops, so the faces have size $O(g+b)$ in total.  Thus, $\varphi$ maps each edge of $H$ to a walk of length $O(g+b)$ in $Y$.
\end{proof}

We remark that $\surface$ is homeomorphic to the patch system~$\Sigma$ of~$Y$, and the arcs of~$\Sigma$ correspond to the dual of~$Y$.  Since $\surface$ is a surface with non-empty boundary, the homotopy group of~$\surface$ is a free group; see, e.g., Stillwell~\cite[Chapter~2]{s-ctcgt-93}.  This implies two facts that we shall use later: (1) The homotopy classes of paths or loops in~$\surface$ correspond to sequences of crossings with the arcs of~$\Sigma$ that are \emph{reduced} (do not contain two consecutive crossings with the same arc in opposite directions); (2) the free homotopy classes of closed curves on~$\surface$ correspond to \emph{cyclically reduced} sequences of crossings.

(Such properties are folklore, and follow from their analogs for the homotopy of paths and closed walks in graphs.  It is clear that two paths with the same reduced sequence of crossings are homotopic; conversely, any two homotopic paths lift to paths with the same endpoints in the universal cover, and their sequences of crossings, once reduced by homotopies, must be the same because $\Sigma$ lifts to pairwise disjoint simple arcs, each of which separates the universal cover into two pieces.  A similar argument holds for closed walks, where the homotopy between them must be lifted.)

\begin{lemma}\label{lem:contract-boundary}
Let $Y$ be a loop system of size $m$ in a surface $\surface$ with non-empty boundary. Let $\graph$ be a graph and $\drawing$ be a drawing of $\graph$ in $Y$, of size $n$. In $O(mn \log(mn))$ time, we can compute a factorization of $(\graph, \drawing)$ of width $O(m)$, and depth $O(n)$,  or correctly report that $\drawing$ cannot be untangled.
\end{lemma}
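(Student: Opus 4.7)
The plan is to mimic the near-linear factorization algorithm of Section~\ref{sec:contraction} by replacing the compressed homotopy tree (which encoded reduced walks in a reducing triangulation) with a plain trie of reduced words. The crucial simplification in the boundary case is that $\surface$ deformation retracts onto the wedge of circles~$Q$, so $\pi_1(\surface,v)$ is free on the $m$ loops of~$Q$; as just recalled above this lemma, homotopy classes of walks in~$Q$ correspond bijectively to reduced words in these loops, i.e.\ walks in~$Q$ without immediate backtracking (traversing an edge immediately followed by its reverse). Free reduction is a single local cancellation, with no cascading and no analogue of bad turns, which eliminates the most delicate parts of Section~\ref{sec:contract-datastruct}.

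First I would build the analogue of the compressed homotopy tree as a trie whose nodes are in bijection with the reduced walks at~$v$ encountered so far, whose edges are labelled by directed edges of~$Q$, and whose children at each node are stored in a balanced search tree keyed by their label for $O(\log(m|\delta|))$ access. A key $\kappa(W)$ is a pointer to the node representing the reduced walk homotopic to~$W$. The four operations of Section~\ref{sec:contract-datastruct} are then easy to provide: \textsc{TrivialKey} returns the root; \textsc{Partition} is handled by a scratch pointer on each node; \textsc{ReducedWalk} walks from the node up to the root, reading labels in reverse; and \textsc{Extend} on a node~$u$ by a directed edge~$e$ returns the parent of~$u$ when the label of~$u$ equals $\bar e$, and otherwise returns the child of~$u$ labelled~$e$, creating it on the fly when none exists, in $O(\log(m|\delta|))$ time either way.

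Next I would run the algorithm of Section~\ref{sec:algo-contraction} essentially verbatim with this data structure. Root a depth-first spanning tree~$\spanningtree$ of each connected component of~$\graph$ at a vertex~$r$, apply \textsc{Extend} along the walks $\drawing(e)$ to compute the key of the $\spanningtree$-path from~$r$ to every vertex of~$\graph$, and then, for each non-tree edge~$e$, the key of the associated lasso $\ell_e^{t(e)}$; this uses $O(|\delta|)$ \textsc{Extend} calls in total. Propagate and deduplicate the key sets $K(v)$ bottom-up through~$\spanningtree$ via \textsc{Extend} and \textsc{Partition}, and abort whenever some $K(v)$ exceeds the folklore bound $O(g+b)=O(m)$ on the number of homotopy classes of pairwise non-homotopic non-contractible simple loops through a common point, which certifies that $(\graph,\drawing)$ cannot be untangled (as in the corresponding step of Section~\ref{sec:algo-contraction}). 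If the algorithm completes, call \textsc{ReducedWalk} on each of the $O(m)$ surviving keys in $K(r)$ to emit the sparse drawing~$\lambda$ of~$\loopgraph$ (each loop of length $O(|\delta|)$, so total size $O(m|\delta|)$), and use the leader bookkeeping to output the map~$\gamma$.

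The dominant costs are $O(|\delta|\log(m|\delta|))$ for the \textsc{Extend} operations and $O(m|\delta|)$ for writing out the factorization, yielding the claimed bound. The main conceptual adjustment relative to Section~\ref{sec:contraction} is recalibrating the early-abort threshold against $m$ rather than against the genus alone; the free-group structure of $\pi_1(\surface)$ renders every other ingredient noticeably simpler than in the closed-surface case, so no deeper obstacle is expected.
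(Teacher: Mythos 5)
Your proposal matches the paper's proof, which likewise runs the algorithm of Section~\ref{sec:algo-contraction} unchanged and only swaps the compressed homotopy tree for a tree of freely reduced words in the loops of~$Q$ (your trie), relying on the free-group description of $\pi_1(\surface)$ recalled just before the lemma. One small slip in your cost accounting: the bottom-up deduplication phase applies \textsc{Extend} to each of the up to $O(m)$ keys in $K(v')$ for every tree edge $v'v$, so the total number of \textsc{Extend} calls is $O(m|\drawing|)$ rather than $O(|\drawing|)$ --- which still yields exactly the claimed $O(m|\drawing|\log(m|\drawing|))$ bound.
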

\begin{proof}
The algorithm is the one described in Section~\ref{sec:algo-contraction} for the proof of Proposition~\ref{prop:contraction-algorithm}. In the same way as for the torus the only change concerns the data structure that stores homotopy classes of walks. Here we replace the reduced walks in a reducing triangulation by reduced walks in $Y$. That is, we store these walks as lists of directed edges of $Y$ in a tree-like fashion to implement the data-structure of Section~\ref{sec:contract-datastruct}.   The four operations introduced in Section~\ref{sec:contract-datastruct} extend to this context, with the same complexity.  Section~\ref{sec:algo-contraction} extends immediately, because it only uses these four operations.
\end{proof}

From the above factorization, we obtain a drawing~$\lambda$ of a loop graph~$\loopgraph$ on~$\patchsystem$.  As in Section~\ref{sec:untangle-loop-graph}, we choose an arbitrary edge of each connected component of~$\loopgraph$ and declare it to be a \emphdef{major edge}; the other edges are \emphdef{minor edges}.  We say that $\lambda$ is \emphdef{straightened} if, under~$\lambda$, (1) the sequence of crossings of each major edge with the arcs of~$\patchsystem$ is cyclically reduced, and (2) the sequence of crossings of each minor edge with the arcs of~$\patchsystem$ is reduced (possibly not cyclically).  The analog of Proposition~\ref{prop:untangle-loop-graph} becomes: 

\begin{lemma}\label{lem:untangling-loops-boundary}
Let $\loopgraph$ be a loop graph and let $\lambda : \loopgraph \to \patchsystem$ be a drawing. Assume that $\lambda$ is sparse and straightened. If there is an embedding homotopic to $\lambda$, then $\lambda$ is a weak embedding.
\end{lemma}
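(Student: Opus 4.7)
The plan is to follow the proof of Proposition~\ref{prop:untangle-loop-graph} essentially line for line, replacing the hyperbolic machinery used for closed surfaces of genus at least two by the free-group structure of $\pi_1(\surface)$ and the combinatorial characterization of (free) homotopy via (cyclically) reduced crossing sequences with the arcs of $\patchsystem$ recalled just before the lemma. The presence of a boundary simplifies the argument rather than complicating it, so the main work is translation. First, I would endow $\patchsystem\cong\surface$ with a complete hyperbolic metric with geodesic boundary in which every arc of $\patchsystem$ is a geodesic; such a metric exists whenever $\chi(\surface)<0$, and the two remaining cases (the disk and the annulus) can be dealt with by ad hoc arguments reducing to planarity testing and winding-number bookkeeping, respectively.

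Next, I would construct the candidate $\mu:\loopgraph\to\patchsystem$ approximating $\lambda$ by replacing each major edge by the unique geodesic closed curve in its free homotopy class and each minor edge by the unique geodesic arc in its homotopy class with the updated endpoints, followed by an infinitesimal perturbation that places the images of the vertices of $\loopgraph$ at pairwise distinct points in general position with respect to the arcs. The straightened hypothesis ensures that the (cyclically) reduced sequence of crossings of each edge of $\lambda$ with the arcs of $\patchsystem$ already coincides with the sequence for its geodesic representative, so one can slide $\lambda$ to $\mu$ through a homotopy that keeps every vertex inside the same disk of $\patchsystem$, showing that $\mu$ approximates $\lambda$. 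I would then pass to the universal cover $\tilde\patchsystem$, homeomorphic to an open disk and compactifiable into a closed disk with circle at infinity $\partial\tilde\patchsystem$; major lifts become bi-infinite geodesics with two distinct limit points on $\partial\tilde\patchsystem$, and the analogs of Lemmas~\ref{lem:exist-distinct-limit-points}--\ref{L:samelimit} transfer by the same hyperbolic arguments used in Appendix~\ref{A:limit}. With these ingredients the whole chain of Lemmas~\ref{L:nogon}, \ref{L:maj}, \ref{L:majminsame}, \ref{L:majmin}, \ref{L:majoverlap}, and~\ref{L:minmin} carries over verbatim, each invoking only uniqueness of (cyclically) reduced representatives (now furnished by reducibility of words in the free group $\pi_1(\patchsystem)$), primitivity of simple non-contractible closed curves (Lemma~\ref{L:primitive}), and the limit-point lemmas. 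The endgame is then identical to that of Proposition~\ref{prop:untangle-loop-graph}: the only remaining obstruction to $\mu$ being an embedding is overlap between major lifts of distinct major edges, which Lemma~\ref{L:majoverlap} lets us resolve by splitting each maximal family of mutually overlapping major edges into at most three transversely separated sheaves, yielding an embedding that approximates $\lambda$.

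The step I expect to be the main obstacle is the verification of the limit-point lemmas for major edges that happen to be freely homotopic to a boundary component of $\surface$: these curves are non-contractible and simple, so they are legitimate major edges in a sparse drawing, yet in the geodesic-boundary metric their unique geodesic representative is the boundary component itself. To sidestep this pathology, I would replace that representative by a curve infinitesimally parallel to the corresponding boundary component; in the universal cover the lifts are still bi-infinite geodesics with two distinct limit points on $\partial\tilde\patchsystem$ (namely the limit points of a lift of that boundary component), so the limit-point analysis and all the downstream lemmas are unaffected.
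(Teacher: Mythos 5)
Your proposal is essentially sound but takes a genuinely different, and much heavier, route than the paper. The paper's proof of this lemma deliberately \emph{avoids} re-running the hyperbolic/universal-cover machinery of Proposition~\ref{prop:untangle-loop-graph}: it starts from the assumed embedding $\lambda'$ homotopic to $\lambda$ and deforms $\lambda'$ by ambient isotopies until it crosses the arcs of $\patchsystem$ with exactly the same sequences as $\lambda$, at which point $\lambda'$ itself approximates $\lambda$. Concretely, it removes innermost bigons between the (embedded, pairwise disjoint) major edges and the arcs to make them cyclically reduced, slides each basepoint along its major edge to synchronize the crossing sequence with that of $\lambda$ (using that two cyclically reduced freely homotopic closed curves have the same cyclic crossing sequence), resolves the remaining ambiguity of the basepoint's position by the observation that a loop commuting with a primitive element of a free group is a power of it (Nielsen--Schreier), and finally removes bigons between minor edges and arcs. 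This uses only the free-group combinatorics recalled before the lemma, plus Lemma~\ref{L:primitive}. Your route instead reconstructs a geodesic candidate $\mu$ and reruns Lemmas~\ref{L:nogon}--\ref{L:minmin}; that can be made to work, but ``verbatim'' is optimistic: the limit-point lemmas are proved in Appendix~\ref{A:limit} only for closed surfaces of genus at least two, so you must choose a hyperbolic structure with geodesic boundary (so that every nontrivial deck transformation is a hyperbolic translation and the universal cover compactifies to a closed disk containing the limit set), redo the ideal-polygon construction for faces containing boundary components, and handle the boundary-parallel major edges you rightly flag (including the final perturbation step, where edges overlapping a boundary geodesic can only be pushed to the interior side), as well as the degenerate annulus case where the limit set is two points and the interleaving arguments collapse. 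The paper's approach buys a shorter, purely combinatorial argument with none of these case distinctions; yours buys uniformity with the closed case at the cost of reproving the compactification theory for surfaces with boundary.
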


\begin{proof}
Since $\lambda$ is straightened, $\lambda(e)$ is cyclically reduced for every major edge $e$ of $L$, and $\lambda(e')$ is reduced for every minor edge $e'$ of $\loopgraph$. There is by assumption an embedding $\lambda' : \loopgraph \to \Sigma$ freely homotopic to~$\lambda$.

In a first step, we modify the embedding~$\lambda'$ by an ambient isotopy that fixes the boundary of~$\patchsystem$ so that the major edges of $L$ are cyclically reduced (with respect to the arcs of~$\patchsystem$).  For this purpose, we remark that the major edges are, in~$\lambda'$, pairwise disjoint simple closed curves.  Whenever there is a bigon between the image of a major edge and an arc of~$\patchsystem$, there exists an innermost bigon, which we can remove by an ambient isotopy; the number of crossings with the arcs decreases.  We repeat this operation until there is no more bigon, at which point the major edges are all cyclically reduced in~$\lambda'$.

Now consider a connected component $\loopgraph_0$ of $\loopgraph$. We can make the major edge $e$ of $\loopgraph_0$ cross the arcs of $\Sigma$ with the same sequences in $\lambda'$ and in $\lambda$ (not up to cyclic permutation, \emph{exactly} the same sequence) simply by sliding the image of the basepoint $v$ of $\loopgraph_0$ along the image of $e$ in $\lambda'$. We slide $v$ by an ambient isotopy in the tubular neighborhood of $e$. We can do so since the sequence of crossings of $\lambda'(e)$ is a cyclic permutation of the one of $\lambda(e)$; here we make use of the fact (see above) that if two freely homotopic closed curves are cyclically reduced, then their sequences of crossings with the arcs of $\Sigma$ are equal up to cyclic permutation. In $\lambda'$ we slide the image of $v$ along the image of $e$ so that the two sequences become equal; $\lambda'$ is still an embedding.  We do this for every connected component~$\loopgraph_0$ of~$\loopgraph$ in turn.

Consider again some connected component~$\loopgraph_0$ of~$\loopgraph$, with vertex~$v$.  We claim that we can modify $\lambda'$ by sliding the image of $v$ some finite number of times around the image of the major edge~$e$ of~$\loopgraph_0$ (each time, the image of $v$ making one full loop around the image of $e$) so that, in the end, there is a free homotopy between $\lambda'(\loopgraph_0)$ and $\lambda(\loopgraph_0)$ in which the image of $v$ does not leave its face of the patch system $\patchsystem$. To prove this claim first observe that it would be possible to do so, not by sliding along $e$, but by some \emph{free} homotopy of $\lambda'(\loopgraph_0)$. During this homotopy the image of $v$ makes a loop $\ell$ in $\Sigma$. The loop $\ell$ commutes, up to homotopy, with the loop $\lambda'(e)$ as $\lambda'(e) \simeq \lambda(e)$ by the previous paragraph and $\lambda(e) \simeq \ell \lambda'(e) \ell^{-1}$ by construction, where $\simeq$ denotes homotopy of loops relatively to their basepoint. Moreover, the loop $\lambda'(e)$ is non-contractible since $\lambda'$ is sparse, so it is also primitive by Lemma~\ref{L:primitive}. Thus $\ell$ is a power (up to homotopy) of $\lambda'(e)$.  Let us prove this.  The fundamental group of $\Sigma$ is a free group.  It is known (and we prove) that in a free group if two elements $x$ and~$y$ commute, then they are powers of some common element: indeed, the subgroup~$K$ generated by $x$ and~$y$ is an Abelian subgroup, which is free by the Nielsen--Schreier theorem; but the only Abelian free group is~$\cZ$; so $K$ is cyclic.  Now, as mentioned above, $\lambda'(e)$ is primitive.

Now in $\lambda'$ any bigon between a minor edge of $\loopgraph_0$ and an arc of $\Sigma$ does not contain any vertex of $\loopgraph$.  Indeed, otherwise, the major edge incident to that vertex would not be cyclically reduced, a contradiction.  So we can remove any innermost bigon by an ambient isotopy. When this is not possible anymore, by the preceding claim, the minor edges intersect the arcs of $\Sigma$ with the same sequence in $\lambda$ and~$\lambda'$. Here we make use of the fact (see above) that if two loops are homotopic relatively to their basepoint (or via a free homotopy in which the basepoint does not leave its face of $\Sigma$) and reduced, then they intersect the arcs of $\Sigma$ with the same sequence.
\end{proof}

\begin{proof}[Proof of Theorem~\ref{thm:surf} in the case $b\ge1$]
We apply Lemma~\ref{lem:canon-bd} to construct in time $O((g+b)m)$ time a loop system $Y$ of $\surface$, and a drawing $\varphi : H \to Y$, homotopic to the inclusion map $H \subset \surface$, such that $\varphi$ maps each edge of $H$ to a walk of length $O(g+b)$ in $Y$. 

We apply the algorithm of the proof of Theorem~\ref{thm:main-theorem} in Section~\ref{sec:end-of-proof} to the drawing $\varphi \circ \delta$, but where Lemma~\ref{lem:contract-boundary} replaces Proposition~\ref{prop:contraction-algorithm}, and where Lemma~\ref{lem:untangling-loops-boundary} and Theorem~\ref{thm:toth-et-al} replace Proposition~\ref{prop:untangling-a-loop-graph-general}. (Here we straighten the drawing of the sparse loop graph by first cyclically reducing the major edges and then reducing the minor edges, similar to Lemma~\ref{lem:straightening}.) That is, in time $O((g+b)^2 n \log((g+b)n))$, we determine whether $\varphi \circ \delta$ (and thus $\delta$) can be untangled, and if it is the case, in $O((g+b)n^2)$ additional time we compute a weak embedding $\delta'' : G \to Y$, homotopic to $\varphi \circ \delta$ (and thus to $\delta$), that maps each edge of $G$ to a walk of length $O((g+b)n)$ in $Y$.

Now Lemma~\ref{lem:canon-bd} pushes the 1-skeleton of $Y$ arbitrarily close to the 1-skeleton of $H$ by isotopy, and provides the resulting weak embedding $\psi : Y \to H$, such that $\psi$ maps each edge of $Y$ to a walk of length $O(m)$ in $H$. We compute $\delta' := \psi \circ \delta''$ in $O((g+b)mn^2)$ time, which is as desired.
\end{proof}

\section{Untangling piecewise linear drawings in the punctured plane}\label{sec:plane}

In this section we consider the model for the input where the surface is the punctured plane and the drawing is a piecewise linear map. More precisely we prove Theorem~\ref{thm:plane}.  (The \emph{size} of a piecewise linear drawing~$\delta$ of a graph~$\graph$  is the size of~$\graph$ plus the total number of segments comprising the edges of the images of~$\graph$ under~$\drawing$.)

\begin{figure}
    \centering
    \includegraphics[width=0.8\linewidth]{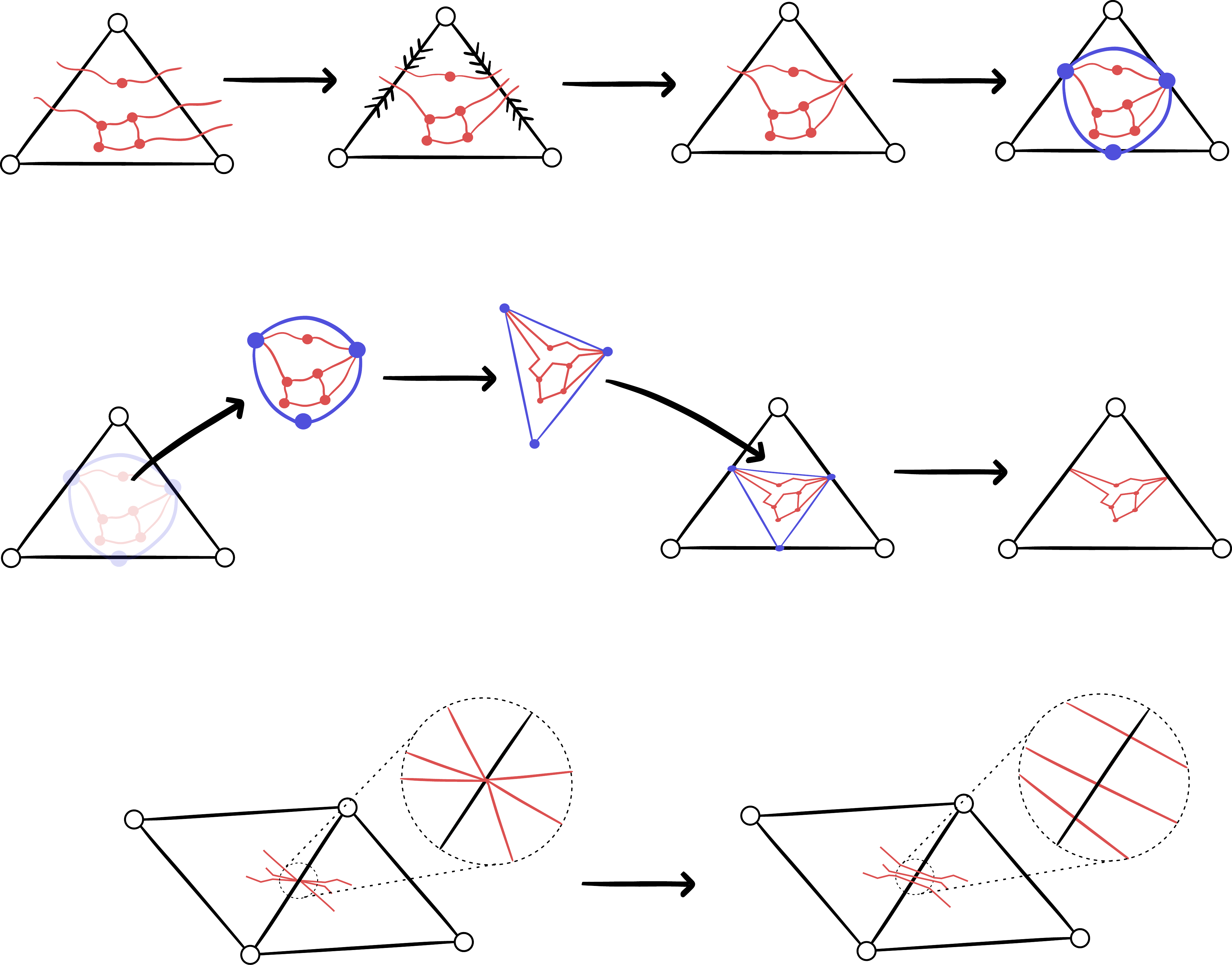}
    \caption{In the proof of Theorem~\ref{thm:plane}, the topological embedding $\drawing'$ (in red) is transformed into a piecewise-linear embedding. (Top) In a triangle $\Delta$ of $T$ (in black), the crossings of $\drawing'$ with the sides of $\Delta$ are packed at the midpoints of those edges, then a cycle with three vertices (in blue) is attached to the image $X$ of $\delta'$ in $\Delta$. (Middle) $X \cup C$ is replaced by a piecewise-linear embedding ambient isotopic to it, then $C$ is deleted. (Bottom) The resulting piecewise linear drawing $\psi'$ of $G$ is made an embedding again by unpacking its crossings the edges of $T$.}
    \label{fig:final embedding}
\end{figure}

\begin{proof}[Proof of Theorem~\ref{thm:plane}]
First we reduce to the combinatorial map model in a way very similar to Cabello, Liu, Mantler, and Snoeyink~\cite[Lemma~12]{clms-thpp-04}, in the same spirit as Colin de Verdière and de Mesmay~\cite[Section~5.2]{cm-tgis-14}. More precisely we do the following. We fix a closed box around each point of $P$ that does not intersect the image of $\drawing$ nor the other boxes, and we denote by $B_P$ the resulting collection of boxes. We also fix a bounding box $B$ that contains the image of $\drawing$ and all the boxes in $B_P$ in its interior. Then we construct in time almost linear in $p$ a cellular decomposition $T$ of $B \setminus B_P$, of size $O(p)$, whose edges are rectilinear. Without loss of generality $T$ and $\drawing$ are in general position. Finally, we consider the dual graph $H$ of $T$, and the topological drawing $f : G \to H$ that (1) maps each vertex $v$ of $G$ to the vertex of $H$ dual to the face of $T$ containing $\drawing(v)$, and (2) maps each edge $e$ of $G$ to the walk in $H$ encoding the crossings between the path $\drawing(e)$ and the edges of $T$. The drawing $f$ has size $O(\lambda n)$, where $\lambda$ is the maximum number of times a segment of the drawing intersects edges of $T$. Also, $f$ is computed in $O(np\log p)$ time by computing and sorting the intersection points of~$T$ with each line segment of $\drawing$, and is homotopic to $\drawing$ in $B \setminus P$. 

First assume that we are just interested in determining if there exists an embedding homotopic to $f$. In this case we can ensure $\lambda = O(\sqrt p)$ by constructing $T$ in such a way that each line in the plane crosses at most $O(\sqrt p)$ edges of $T$~\cite[Lemma~11]{clms-thpp-04}. We apply Theorem~\ref{thm:surf} to~$f$ and~$H$ to determine in $O(p^{5/2} n \log(pn))$ time if there exists an embedding homotopic to $f$, equivalently to $\drawing$, in $B \setminus P$. 

Now assume that there exists an embedding homotopic to $f$, and that we want to compute a piecewise linear embedding homotopic to $f$. In this case we prefer to construct $T$ so that it is a triangulation (for example, but not necessarily, a Delaunay triangulation), without caring about $\lambda$. We still have $\lambda = O(p)$. Let us now explain how to compute the desired embedding in additional $O(p^5 n^2)$ time. Theorem~\ref{thm:surf} provides in $O(p^5n^2)$ time a weak embedding $f': G \to H$, of size $O(p^5n^2)$. The algorithm of Akitaya, Fulek, and Tóth (Theorem~\ref{thm:toth-et-al}) provides in $O(p^5n^2 \log(pn))$ time an embedding $\drawing'$ approximating $f'$ in the patch system of $H$. Here the patch system of $H$ corresponds to the triangulation $T$, and the overlay between $T$ and the embedding $\drawing'$ is given \emph{topologically} by a an embedded graph of size $O(p^5 n^2)$ in which $T$ and $G$ are embedded. 

We transform the topological embedding $\drawing'$ into a piecewise-linear embedding in three steps. See Figure~\ref{fig:final embedding}. As a first step we do the following for each edge $e$ of $T$. We consider the crossings between $e$ and $\drawing'(G)$, and we modify $\drawing'$ by sliding those crossings along $e$ to pack all of them at the midpoint of $e$. We do so for every edge $e$ of $T$, and we consider the resulting drawing $\psi$ of $G$. The crossings between $\psi(G)$ and the edges of $T$ are now packed at the midpoints of those edges, but $\psi$ is still an embedding everywhere else. 

As a second step, we do the following in each triangle $\Delta$ of $T$. We consider the part $X$ of $\psi(G)$ that lies inside $\Delta$. We attach an outer-cycle $C$ to $X$, whose vertices are the midpoints of the edges of $\Delta$; this is the blue cycle in Figure~\ref{fig:final embedding}. If $X$ does not intersect every edge of $\Delta$ some of the vertices of $C$ may not be attached to $X$: this is fine, and is for example the case of the bottom vertex of the blue cycle in Figure~\ref{fig:final embedding}. We now construct an embedding isotopic to $X \cup C$ (in particular, $C$ remains the outer-cycle of the embedding). Ideally, we would like such an embedding in which every edge is a linear segment, a Fáry embedding. But this is not always possible since $X \cup C$ may have loops and parallel edges. This is easily solved: we insert two vertices in each edge of $X$, consider the resulting graph $X'$, observe that $X' \cup C$ has no loops nor parallel edges, and compute a F\'ary embedding isotopic to $X' \cup C$ instead, using classical algorithms~\cite{schnyder1990embedding,de1990draw,de1988small}, in linear time. Equivalently, this is a piecewise-linear embedding isotopic to $X \cup C$ in which every edge of $C$ is a linear segment and every edge of $X$ is a path of three linear segments. Up to applying an affine transformation to the embedding, we may assume without loss of generality that the three vertices of $C$ are embedded at the corresponding midpoints of the edges of $\Delta$. In $\psi$, we replace $X$ by its piecewise-linear embedding, and we forget $C$. We do so for every triangle $\Delta$ of $T$, and we consider the resulting piecewise-linear drawing $\psi'$ of $G$. 

As a third and final step, we make $\psi'$ an embedding by unpacking the crossings of $\psi'$ with each edge of $T$, while keeping $\psi'$ piecewise linear, as in Figure~\ref{fig:final embedding} (Bottom).
\end{proof}

\paragraph*{Acknowledgments.}  We would like to thank Arnaud de Mesmay for many useful discussions, and the anonymous referees whose numerous comments improved the presentation of this paper.

\bibliographystyle{plainurl}
\bibliography{bib}

\appendix

\section{Limit points in the universal cover}\label{A:limit}

In this section, we present the (classical) details omitted from Section~\ref{sec:limit-points}.  A good overview is provided by Farb and Margalit~\cite[Chapter~1]{fm-pmcg-12}, but we need a few more properties, so we provide proofs.

\subsection{Compactification of the hyperbolic plane and fixed points of translations}\label{sec:compactification}

Let $\cH$ be the hyperbolic plane, corresponding to the open unit disk in the Poincaré model.  One can compactify~$\cH$ by considering the set~$\partial\cH$ of ``points at infinity'', corresponding to the unit circle (in the Poincaré model) with its usual topology.  Equivalently~\cite[Chapter~1]{fm-pmcg-12}, the points in~$\partial\cH$ are the equivalence classes of unit speed geodesic rays, where two rays are equivalent if they stay at bounded distance from each other; the union $\bar\cH$ of $\cH$ and $\partial\cH$ is topologized via the basis containing the open sets of $\cH$ plus one open set $U_P$ for each open half-plane $P$ of $\cH$, where $U_P \cap \cH = P$ and $U_P \cap \partial \cH$ contains the equivalence class $\ell$ of unit speed geodesic rays if all the rays in $\ell$ eventually end up in $P$.

Isometries of~$\cH$ extend naturally to~$\bar\cH$, and the \emphdef{hyperbolic translations} are those with exactly two fixed points on~$\partial \cH$. In particular the identity is not considered a hyperbolic translation here. Any hyperbolic translation~$f$ admits a unique geodesic line~$A$, its \emphdef{axis}, such that $f(A) = A$ and such that $f$ is a real translation on $A$. A hyperbolic translation is uniquely determined by its axis and by the image of a point on this axis. Iterating any point of $\cH$ under $f$ makes it converge to one of the two fixed points of $f$ in $\partial \cH$, the \emphdef{fixed point at $+\infty$} of $f$, while iterating under~$f^{-1}$ makes it converge to the \emphdef{fixed point at~$-\infty$} of $f$. See~\cite[p.~13-14]{katok} for more details.

The following lemma is standard and results from simple computations in the Poincaré model of the hyperbolic plane, so we omit the proof.
\begin{lemma}\label{lem:geodesics}
In the hyperbolic plane $\cH$ let $L \subset \cH$ be a geodesic line and $a : \cR \to \cH$ be a unit speed geodesic ray. The distance between $a(t)$ and $L$ either tends to $+\infty$ as $t \rightarrow +\infty$ or it tends to zero. In the latter case, there is a unit speed parameterization $c : \cR \to L$ such that $c(t)$ and $a(t)$ remain at bounded distance over $t \geq 0$.
\end{lemma}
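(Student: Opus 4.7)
The plan is to reduce the statement to an explicit coordinate computation, by exploiting the transitivity of the isometry group of $\cH$ on geodesic lines. First I would pass to the upper half-plane model and apply an isometry carrying $L$ to the positive imaginary axis, so that the endpoints of $L$ in $\partial \cH = \cR \cup \set{\infty}$ become $0$ and $\infty$. In this model, geodesics are vertical lines or Euclidean semicircles centered on the real axis, and closed-form expressions are available both for the distance from a point $x+iy$ to $L$ (namely $\operatorname{arcsinh}(|x|/y)$) and for the distance between two arbitrary points (via $\cosh d_\cH(z_1,z_2) = 1 + |z_1-z_2|^2/(2 \operatorname{Im}(z_1)\operatorname{Im}(z_2))$).

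Next I would classify $a$ according to its forward limit point $\alpha := \lim_{t \to +\infty} a(t) \in \partial \cH$, which exists since $a$ is a geodesic ray. If $\alpha \notin \set{0, \infty}$, then writing $a(t) = x(t) + iy(t)$ one checks in both subcases (whether $a$ is a vertical line or a Euclidean semicircle) that $y(t) \to 0$ while $x(t) \to \alpha \ne 0$, so $|x(t)|/y(t) \to +\infty$ and the distance to $L$ diverges. If $\alpha = \infty$, then $a$ is an upward vertical ray of the form $a(t) = x_0 + i y_0 e^t$ in unit speed parametrization; taking $c(t) := i y_0 e^t$ gives a unit speed parametrization of $L$, and the two-point distance formula yields $\cosh d_\cH(a(t), c(t)) = 1 + x_0^2/(2 y_0^2 e^{2t}) \to 1$, so the distance vanishes (and is in particular bounded) for $t \ge 0$. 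The remaining case $\alpha = 0$ reduces to the previous one via the hyperbolic isometry $z \mapsto -1/z$, which preserves $L$ while interchanging its two endpoints at infinity.

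Since the entire argument amounts to unwinding standard formulas after a convenient choice of coordinates, there is no substantial obstacle. The mildly delicate point is verifying that in the first case the distance actually diverges rather than merely remaining bounded; this reflects the fact that a semicircular geodesic approaches the real axis tangentially at its endpoint, so the ratio $|x(t)|/y(t)$ blows up. Extending the unit speed parametrization $c$ from $[0,\infty)$ to all of $\cR$, as the statement requires, is then immediate since a unit speed parametrization of the geodesic line $L$ is defined on the whole real line.
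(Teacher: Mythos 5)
Your proof is correct. The paper itself omits the proof of this lemma, stating only that it is ``standard and results from simple computations in the Poincaré model''; your argument—classifying the ray by its forward endpoint on $\partial\cH$ and computing explicitly in the (isometric) upper half-plane model with the formulas $\sinh d(x+iy,L)=|x|/y$ and $\cosh d(z_1,z_2)=1+|z_1-z_2|^2/(2\operatorname{Im}z_1\operatorname{Im}z_2)$—is precisely the kind of computation the authors have in mind, and it correctly establishes both the dichotomy and the bounded (indeed vanishing) distance to the fellow-traveling parametrization $c$.
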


We will need the following lemma in the next section.
\begin{lemma}\label{lem:isometries-aux}
Fix any $x \in \cH$. Two hyperbolic translations $f, g : \cH \to \cH$ have the same fixed point at $+\infty$ if and only if they satisfy the following for some $D > 0$: there exist arbitrarily large values of $i, j \geq 0$ for which $f^i(x)$ and $g^j(x)$ are at distance less than $D$.
\end{lemma}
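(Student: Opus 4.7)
\medskip

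The plan is to treat the two directions separately, relying on the fact that the iterates $f^i(x)$ stay at a fixed hyperbolic distance $d_f := d(x, A_f)$ from the axis $A_f$ of $f$ (because $f$ is an isometry preserving $A_f$), and converge along $A_f$ to the fixed point $p_f$ of $f$ at $+\infty$; similarly for $g$. I will also use the fact (standard for $\cH$, but provable from Lemma~\ref{lem:geodesics}) that two geodesic rays in $\cH$ converging to the same boundary point are asymptotic, meaning that the distance between them tends to $0$.

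For the forward direction, assume $p_f = p_g =: p$. Let $c_f, c_g : \cR \to \cH$ be unit-speed parametrizations of $A_f, A_g$ heading to $p$ at $+\infty$. By the asymptoticity above applied to the rays $c_f([0,\infty))$ and $c_g([0,\infty))$, for any $\varepsilon>0$ there exists $T$ with $d(c_f(t), A_g) < \varepsilon$ for $t \geq T$. Let $y$ (resp.\ $z$) be the orthogonal projection of $x$ on $A_f$ (resp.\ $A_g$), and let $T_f, T_g$ be the translation lengths; then $f^i(y) = c_f(t_0 + i T_f)$ and $g^j(z) = c_g(t_0' + j T_g)$ for parameter offsets $t_0, t_0'$. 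For $i$ large, $f^i(y)$ lies within $\varepsilon$ of $A_g$ at some projected point $w_i$; since the points $g^j(z)$ are evenly spaced with step $T_g$ along $A_g$ and march toward $p_g = p$, some $j = j(i) \geq 0$ satisfies $d(w_i, g^j(z)) \leq T_g/2$, and $j(i) \to \infty$ with $i$. A triangle inequality gives
\[
d(f^i(x), g^j(x)) \;\leq\; d_f + \varepsilon + T_g/2 + d_g,
\]
so $D := d_f + d_g + T_g/2 + 1$ works (taking $\varepsilon = 1$).

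For the backward direction, suppose sequences $i_n, j_n \to \infty$ satisfy $d(f^{i_n}(x), g^{j_n}(x)) < D$. As noted, $f^{i_n}(x) \to p_f$ and $g^{j_n}(x) \to p_g$ in $\bar\cH$. The contrapositive of what we want is: if $p_f \neq p_g$, then $d(f^{i_n}(x), g^{j_n}(x)) \to +\infty$. This is a standard feature of the hyperbolic plane: if $a_n \to \alpha$ and $b_n \to \beta$ with $\alpha \neq \beta \in \partial \cH$, then $d(a_n,b_n) \to +\infty$. One proof: pick a basepoint $o$ on the geodesic from $\alpha$ to $\beta$; then $a_n$ and $b_n$ approach distinct boundary points, the Gromov product $(a_n \mid b_n)_o$ stays bounded, while $d(o,a_n), d(o,b_n) \to \infty$, whence $d(a_n, b_n) = d(o,a_n) + d(o,b_n) - 2(a_n \mid b_n)_o \to \infty$. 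Thus $p_f = p_g$, as claimed.

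The main technical step is the forward direction, and the subtle point there is ensuring not only that one can match an $f^i(x)$ with some nearby $g^j(x)$, but that $i$ and $j$ can be made simultaneously arbitrarily large. This is handled by observing that the asymptotic approach $d(c_f(t), A_g) \to 0$ occurs along the tails of the axes toward the common fixed point $p$, so the matching $j(i)$ automatically tends to infinity as $i$ does; the quantization of the possible $g^j(z)$ on $A_g$ by steps of $T_g$ costs only an additive $T_g/2$ in the final bound.
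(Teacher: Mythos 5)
Your proof is correct. The forward direction (same fixed point at $+\infty$ implies the bounded-distance condition) is essentially the paper's argument: both reduce to the fact that the two axes, oriented toward the common fixed point, stay close to one another, and then match up the arithmetic progressions $\{iT_f\}$ and $\{jT_g\}$ of translates along them; the paper invokes the bounded-distance definition of $\partial\cH$ directly and picks $i,j$ with $|iT_f-jT_g|$ small, while you project $x$ onto the axes, use asymptoticity from Lemma~\ref{lem:geodesics}, and absorb the mismatch into an additive $T_g/2$ --- a cosmetic difference. The backward direction is where you genuinely diverge. The paper argues topologically: using Lemma~\ref{lem:geodesics} and the half-plane basis $U_P$ for the topology of $\bar\cH$, it shows that for $i$ large the entire ball of radius $D$ around $f^i(x)$ lies inside any prescribed neighborhood of the fixed point of $f$, so the hypothesis forces $g^j(x)$ to enter every such neighborhood for arbitrarily large $j$, whence the two fixed points coincide. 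You instead prove the contrapositive metrically: sequences converging to distinct ideal points must diverge from each other, via boundedness of the Gromov product $(a_n\mid b_n)_o$ when the limits are distinct. Both are standard; your route is shorter but imports the Gromov product and the comparison $(a\mid b)_o\approx d(o,[a,b])$, which the paper never sets up, whereas the paper's argument stays entirely within the tools it has already established. If your version were spliced into the paper, that one step would need either a short justification (the segments $[a_n,b_n]$ converge to the geodesic from $\alpha$ to $\beta$, which passes near $o$) or a reference.
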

\begin{proof}
Let $\ell \in \partial \cH$ and $\ell' \in \partial \cH$ be the fixed point at $+\infty$ of respectively~$f$~and~$g$.

First assume the existence of some $D > 0$ such that there exist arbitrarily large values of $i, j \geq 0$ for which $f^i(x)$ and $g^j(x)$ are at distance less than $D$. To prove $\ell = \ell'$ we consider any open set $O$ of $\bar \cH$ that contains $\ell$ and we claim the existence of $i_0 \geq 0$ such that for every $i \geq i_0$ the ball of radius $D$ centered at $f^i(x)$ is contained in $O$. This claim, combined with our assumption, implies that there exist arbitrarily large values of $j \geq 0$ for which $g^j(x) \in O$. Since $g^j(x)$ tends to $\ell'$ as $x$ goes to $+\infty$, we have $\ell = \ell'$.

To prove this claim, and without loss of generality, we assume that $O$ belongs to the basis described above to define the topology of $\bar \cH$. Then, and since $\ell \in O$, there is an open half plane $P$ of $\cH$ such that $O = U_P$. Parameterize the axis of $f$ by some unit speed geodesic $a : \cR \to \cH$. By definition  (up to reversing $a$) there is $\alpha > 0$ such that $f \circ a(t) = a(t + \alpha)$ on every $t \in \cR$. Since $f$ is an isometry the distance between $f^i(x)$ and $a(i \alpha)$ remains constant over $i \in \cZ$. By construction $a$ eventually ends up in $P$. Let $L$ be the geodesic line that bounds $P$ in $\cH$. If the distance between $a(t)$ and $L$ goes to infinity as $t \in \cR$ goes to $+\infty$, then the claim is proved. Otherwise this distance goes to zero by Lemma~\ref{lem:geodesics} and there is some unit speed parameterization $c : \cR \to \mathbb{H}$ of $L$ such that $c(t)$ and $a(t)$ remain at bounded distance over $t \geq 0$. Thus $c(t)$ tends to $\ell$ as $t \in \cR$ goes to $+\infty$, contradicting the fact that $c(t) \notin P$ for every $t \in \cR$.

Conversely, assume $\ell = \ell'$. Consider unit speed parameterizations $a : \cR \to \cH$ and $b : \cR \to \cH$ of their respective axes such that $a(t)$ and $b(t)$ tend to $\ell$ as $t$ goes to $+\infty$. By definition of the limit points, there exists $D > 0$ such that for every $t \geq 0$ the distance between $a(t)$ and $b(t)$ is less than $D$. Let $\alpha > 0$ and $\beta > 0$ be the translation lengths of respectively $f$ and $g$ on their axes. There exists $D' > 0$ for which there exist $i, j \geq 0$ arbitrarily large such that $|\alpha i- \beta j| < D'$ and thus such that $a(\alpha i)$ and $b(\beta j)$ are at distance less than $D+D'$. Moreover the distance between $a(\alpha i) = f^i(a(0))$ and $f^i(x)$ does not depend on $i $ since $f$ is an isometry. The same holds for the distance between $b(\beta j)$ and $g^j(x)$.
\end{proof}

\subsection{Limit points of lifts of closed curves}\label{sec:homotopical-invariant}

Every orientable surface $\surface$ of genus at least two without boundary is homeomorphic to the quotient of the hyperbolic plane $\cH$ by the action of some (actually, many) group $\Gamma$ of isometries of $\cH$. The elements of~$\Gamma$ other than the identity are hyperbolic translations~\cite[p.~22]{fm-pmcg-12}. The action is free, in the sense that if $f \in \Gamma$ satisfies $f(x) = x$ on some $x \in \cH$, then $f$ is the identity. The action is also properly discontinuous in the sense that every $x \in \cH$ admits a neighborhood whose intersection with the $\Gamma$-orbit of $x$ is $\{x\}$. The surface $\surface$ then admits a unique hyperbolic metric for which the quotient map $\cH \to \surface$ is a local isometry. Also, the hyperbolic plane $\cH$ is a universal covering space of $\surface$, where the quotient map $\cH \to \surface$ is the covering map.

In this appendix, when we write a surface $\surface$ as a quotient $\cH / \Gamma$, we refer to the construction presented in the previous paragraph. We shall prove the four lemmas of Section~\ref{sec:limit-points}. First we need two preliminary lemmas.

\begin{lemma}\label{lem:lift-and-isometries}
Let $\surface = \cH / \Gamma$. Consider a lift $\tilde c : \cR \to \cH$ of a non-contractible closed curve on $\surface$. There is $f \in \Gamma \setminus \{1\}$ such that $\tilde c(t+1) = f(\tilde c(t))$ on every $t \in \cR$. Moreover $\lim_{+\infty} \tilde c$ and $\lim_{-\infty} \tilde c$ exist and are the fixed points of $f$ in $\partial \cH$, at respectively $+\infty$ and $-\infty$.
\end{lemma}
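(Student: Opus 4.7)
The plan is to first produce $f$ from the periodicity of the lift, and then identify the limit points of $\tilde c$ with the fixed points of $f$ on $\partial\cH$ using the uniform diameter of the pieces of $\tilde c$ between consecutive iterates.

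Construction of $f$. Since $c$ descends from $c':\cR\to\surface$ with $c'(t+1)=c'(t)$, the points $\tilde c(t)$ and $\tilde c(t+1)$ have the same image under the covering projection $\pi:\cH\to\surface$. So for every $t\in\cR$ there is a unique $f_t\in\Gamma$ with $\tilde c(t+1)=f_t(\tilde c(t))$. Because the action of $\Gamma$ is properly discontinuous and $\tilde c$ is continuous, the map $t\mapsto f_t$ is locally constant, hence a single element $f\in\Gamma$. If $f$ were the identity, $\tilde c$ would be $1$-periodic, so $\tilde c|_{[0,1]}$ would be a loop in the simply connected space $\cH$; this loop would project to $c$, and the null-homotopy in $\cH$ would descend to one in $\surface$, contradicting the hypothesis that $c$ is non-contractible. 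Thus $f\in\Gamma\setminus\{1\}$, and by the description recalled in Section~\ref{sec:compactification}, $f$ is a hyperbolic translation with two distinct fixed points on $\partial\cH$.

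Identification of the limit points. From $\tilde c(t+1)=f(\tilde c(t))$ one gets $\tilde c(n)=f^n(\tilde c(0))$ for every integer $n$, so the integer-indexed subsequence converges in $\bar\cH$ to the fixed point $\ell$ of $f$ at $+\infty$ (as recalled in Section~\ref{sec:compactification}). To upgrade this into a limit for the continuous parameter, I let $D$ be the diameter of the compact set $\tilde c([0,1])$; because $f$ is an isometry, $\tilde c([n,n+1])=f^n(\tilde c([0,1]))$ also has hyperbolic diameter $D$, so $\tilde c(t)$ stays within distance $D$ of $f^n(\tilde c(0))$ whenever $t\in[n,n+1]$. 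Given any open neighborhood $O$ of $\ell$ in $\bar\cH$, which I may assume is a basic open set $U_P$ for some half-plane $P$, the argument carried out inside the proof of Lemma~\ref{lem:isometries-aux} shows that the ball of radius $D$ centered at $f^n(\tilde c(0))$ is contained in $U_P$ for all $n$ large enough. Hence $\tilde c(t)\in U_P$ for all $t$ large enough, giving $\lim_{+\infty}\tilde c=\ell$. Applying the same reasoning to $f^{-1}$ (whose fixed point at $+\infty$ is the fixed point of $f$ at $-\infty$) handles the other limit.

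The main obstacle is the transition from the discrete convergence $f^n(\tilde c(0))\to\ell$ to the continuous convergence $\tilde c(t)\to\ell$; this is resolved precisely by the combination of the uniform diameter bound and the topological fact, already established in Lemma~\ref{lem:isometries-aux}, that balls of fixed hyperbolic radius around points converging to $\ell\in\partial\cH$ are eventually swallowed by every basic neighborhood of $\ell$ in $\bar\cH$. I would reuse that piece of the argument verbatim rather than repeating it.
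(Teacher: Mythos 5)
Your proposal is correct and follows essentially the same route as the paper: produce $f_t$ with $\tilde c(t+1)=f_t(\tilde c(t))$, use proper discontinuity (and freeness) to show $t\mapsto f_t$ is locally constant hence constant, and rule out the identity via non-contractibility. The only difference is that you explicitly carry out the passage from the discrete convergence $f^n(\tilde c(0))\to\ell$ to the continuous limit $\lim_{+\infty}\tilde c=\ell$ using the uniform diameter of $\tilde c([0,1])$ and the ball-absorption claim inside the proof of Lemma~\ref{lem:isometries-aux}, a detail the paper leaves implicit; this is a correct and welcome addition rather than a divergence.
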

\begin{proof}
For every $t \in \cR$, there exists some $f_t \in \Gamma$ such that $f_t(\tilde c(t)) = \tilde c (t+1)$. Moreover every such $f_t$ is not the identity as $\tilde c(t) \neq \tilde c(t+1)$ since $\tilde c$ is a lift of a non-contractible closed curve. We claim that $f_t$ does not depend on $t$. This claim concludes the proof. We prove the claim by contradiction so assume the existence of some $t \in \cR$ fixed and of some $t' \in \cR$ arbitrarily close to $t$ such that $f_t \neq f_{t'}$. By choosing $t'$ close enough to $t$ we make the distance between $f_{t'}^{-1} \circ f_t(\tilde c(t))$ and $\tilde c(t)$ go to zero, contradicting the fact that $\Gamma$ acts properly discontinuously on $\cH$.
\end{proof}

\begin{lemma}\label{lem:limit-points-and-isometries}
  Let $\surface = \cH / \Gamma$. Assume that $f,g\in \Gamma \setminus\{1\}$ have the same fixed point at $+\infty$. There are $h \in \Gamma \setminus \{1\}$ and $n, m \geq 1$ such that $f = h^n$ and $g = h^m$.  In particular, they have the same fixed point at~$-\infty$.
\end{lemma}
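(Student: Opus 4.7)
The plan is to first extract an algebraic relation $f^a = g^c$ with positive exponents $a, c$, and then identify the common root $h$ as a generator of the (necessarily infinite cyclic) setwise stabilizer of a common axis in $\Gamma$. To obtain the relation, I would fix any $x \in \cH$ and apply Lemma~\ref{lem:isometries-aux} to produce $D > 0$ and arbitrarily large pairs $(i,j)$ with $d(g^{-j}f^i(x), x) < D$. Since $\Gamma$ acts properly discontinuously and freely on $\cH$, any closed ball meets the discrete orbit $\Gamma \cdot x$ in finitely many points, so only finitely many elements of $\Gamma$ can send $x$ into the ball of radius $D$ around $x$. Pigeonhole then produces two distinct pairs $(i_1, j_1) \neq (i_2, j_2)$ with $g^{-j_1} f^{i_1} = g^{-j_2} f^{i_2}$, hence $f^{i_2 - i_1} = g^{j_2 - j_1}$. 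Setting $a := i_2 - i_1$ (positive up to swapping), the value $c := j_2 - j_1 = 0$ is impossible since $f$ has infinite order, and $c < 0$ would force the $+\infty$ fixed point of $f^a$ (that of $f$) to coincide with the $+\infty$ fixed point of $g^{-|c|}$ (the $-\infty$ fixed point of $g$), collapsing the two distinct fixed points of $g$; thus $a, c \geq 1$.

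Since a nontrivial power of a hyperbolic translation shares its axis with the translation, $f$ and $g$ share an axis $A$. Let $H \leq \Gamma$ be the setwise stabilizer of $A$. As $\Gamma$ is torsion-free and consists of orientation-preserving isometries, $H$ contains neither half-turns about points of $A$ nor reflections across $A$, so every element of $H$ fixes both endpoints of $A$ in $\partial \cH$ and acts on $A \cong \cR$ by translation. The signed translation length is an injective group homomorphism $H \hookrightarrow (\cR, +)$ whose image is discrete (proper discontinuity applied to any $x_0 \in A$ bounds, for each $R$, the number of elements of $H$ of translation length at most $R$). Hence $H$ is infinite cyclic; picking a generator $h$ whose translation direction on $A$ agrees with that of $f$ gives $f = h^n$ with $n \geq 1$, and $g = h^m$ for some $m \in \cZ$. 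The relation $h^{na} = f^a = g^c = h^{mc}$ then forces $m = na/c \geq 1$. The ``in particular'' clause follows because the two fixed points of $h$ on $\partial\cH$ are fixed by every power of $h$, and hence by both $f = h^n$ and $g = h^m$.

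The main obstacle is the first paragraph: turning a purely geometric coincidence of $+\infty$ limit points into a concrete algebraic relation $f^a = g^c$ with both exponents positive. The delicate point there is the sign argument ruling out $c \leq 0$, which requires careful tracking of how fixed points transform under inversion. Once the relation is established, the remainder is a routine appeal to the classification of discrete subgroups of the translation group of a line, combined with torsion-freeness to prevent order-two elements from entering the stabilizer $H$.
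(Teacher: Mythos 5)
Your proof is correct, and the two arguments coincide in their first half: like the paper, you combine Lemma~\ref{lem:isometries-aux} with proper discontinuity (plus freeness) and a pigeonhole argument to extract an algebraic relation between positive powers of $f$ and $g$. Two differences are worth noting. First, a small one: the paper arranges both exponents to be positive directly inside the pigeonhole step (by choosing two coordinatewise-comparable pairs among the infinitely many pairs hitting the same orbit point), whereas you allow an arbitrary sign and then rule out $c\le 0$ using the hypothesis that $f$ and $g$ share their fixed point at $+\infty$; both work, and your sign argument is carried out correctly. Second, and more substantively, the constructions of $h$ diverge. The paper is explicit: from $f^a=g^b$ it compares translation lengths $\alpha,\beta$ along the common axis, writes $a/b=m/n$ in lowest terms, applies Bézout to get $un+vm=1$, and exhibits $h=f^u g^v\in\Gamma$ with translation length $u\alpha+v\beta$, verifying $h^n=f$ and $h^m=g$ by a length computation. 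You instead pass to the setwise stabilizer $H\le\Gamma$ of the common axis, show (using orientation-preservation and freeness to exclude half-turns and reflections, and proper discontinuity for discreteness) that the signed translation length embeds $H$ as a nontrivial discrete subgroup of $(\cR,+)$, hence $H$ is infinite cyclic, and take $h$ to be a generator. Your route buys slightly more---the $h$ you produce is primitive, generating the maximal cyclic subgroup containing both $f$ and $g$---at the cost of invoking the classification of discrete subgroups of $\cR$; the paper's route is more computational and self-contained, producing $h$ as an explicit word in $f$ and $g$. Both establish the lemma as stated.
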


\begin{proof}
Consider some arbitrary fixed $x \in \cH$. We claim the existence of $a, b \geq 1$ such that $f^a(x) = g^b(x)$. Indeed by Lemma~\ref{lem:isometries-aux} there is $D > 0$ that satisfies the following. There are $i, j \geq 0$ arbitrarily large such that $f^i(x)$ and $g^j(x)$ are at distance less than $D$. For every such $i, j$ the point $f^{-i} \circ g^j(x)$ belongs to the closed ball $B_x$ of radius $D$ centered at $x$. Since $\Gamma$ acts properly discontinuously on $\cH$ the $\Gamma$-orbit $\Gamma \cdot x$ of $x$ intersects $B_x$ in finitely many points. Indeed every such point $y \in B_x \cap \Gamma \cdot x$ admits a neighborhood whose intersection with $\Gamma\cdot x$ is $\{y\}$, and finitely many such neighborhoods suffice to cover the compact ball $B_x$. In particular there exist $i, j \geq 0$, $i' > i$ and $j' > j$ such that $f^{-i'} \circ g^{j'}(x) = f^{-i} \circ g^j(x)$. Letting $a := i'-i \geq 1$ and $b := j'-j \geq 1$ proves the claim.

Our claim implies that $f$ and $g$ have the same fixed points both at $+\infty$ and $-\infty$ and thus the same axis, say $A$, oriented from $-\infty$ to $+\infty$. Without loss of generality we assume that $x$ was chosen so that $x \in A$. Recall that $f$ and $g$ are real translations on $A$ and let $\alpha > 0$ and $\beta > 0$ be their respective periods. 

We consider the hyperbolic translation $h$ whose oriented axis is $A$ and whose period $\gamma$ is defined as follows. We proved $a \alpha = b \beta$ since $f^a(x) = g^b(x)$ and since $f^a(x)$ and $g^b(x)$ are the translations of $x$ along $A$ by a distance of respectively $a \alpha$ and $b \beta$. There are $n, m \geq 1$ relatively prime such that $na = mb$. By Bézout's theorem, there are $u,v \in \cZ$ such that $un + vm = 1$. We let $\gamma = u \alpha + v \beta$.

We claim that $h = f^u \circ g^v$, and thus $h \in \Gamma$. To prove this claim observe that the point $h(x)$ is the translation of the point $x$ by a distance $\gamma$ along the oriented axis $A$ of $h$. The point $g^v(x)$ is the translation of $x$ by a distance $v \beta$ along the same oriented axis $A$, and the point $f^u(g^v(x))$ is the translation of $g^v(x)$ by a distance of $u \alpha$ along $A$. Thus $h(x) = f^u \circ g^v(x)$. That proves the claim since hyperbolic translations are defined by their axis, here $A$, and by the image of any point on this axis, here $x$.

In the same way we have $h^n = f$ and $h^m = g$ since these hyperbolic translations have the same oriented axis by construction, and since their periods satisfy $n\gamma = \alpha$ and $m\gamma = \beta$, as seen by a straightforward computation. 
\end{proof}

We can now provide the omitted proofs of the lemmas in Section~\ref{sec:limit-points}, restated for convenience. From now on we fix a surface $\surface = \cH/\Gamma$. The universal cover $\tilde \surface$ of $\surface$ is thus identified with $\cH$.
\LemExistDistinctLimitPoints*
\begin{proof}[Proof of Lemma~\ref{lem:exist-distinct-limit-points}]
The result is given by Lemma~\ref{lem:lift-and-isometries}. 
\end{proof}

\LemLiftLimitPoints*
\begin{proof}[Proof of Lemma~\ref{lem:lift-limit-points}]
Consider the hyperbolic translations, say $f$ and $g$, given by Lemma~\ref{lem:lift-and-isometries} for $\tilde{c}$ and $\tilde{d}$ respectively. Since the homotopy $\tilde{c} \simeq \tilde{d}$ lifts the homotopy $c \simeq d$, the distance between the points $\tilde{c}(k) = f^k(\tilde c(0))$ and $\tilde{d}(k) = g^k(\tilde d(0))$ does not depend on $k \in \cZ$. By Lemma~\ref{lem:isometries-aux} $f$ and $g$ have the same fixed points.
\end{proof}

\LemSeparateLimitPoints*
\begin{proof}[Proof of Lemma~\ref{lem:separate-limit-points}]
By Lemma~\ref{lem:exist-distinct-limit-points}, the two limit points of~$\tilde c$ are distinct, and similarly for~$\tilde d$.  Assume, for a contradiction, that $\tilde{c}$ and $\tilde{d}$ have the same limit point at $+\infty$ (up to reversing $c$ or~$d$).  There are two cases.

First assume that $\tilde{c}$ and $\tilde{d}$ intersect exactly once in some point $x \in \cH$. Without loss of generality assume $\tilde{c}(0) = \tilde{d}(0) = x$. Consider the hyperbolic translations, say $f$ and $g$, given by Lemma~\ref{lem:lift-and-isometries} for $\tilde{c}$ and $\tilde{d}$ respectively. Then $f$ and $g$ have the same fixed point at $+\infty$. Thus by Lemma~\ref{lem:limit-points-and-isometries} there exist $i, j \geq 1$ such that $f^i = g^j$. Then $\tilde{c}(i) = f^i(x) = g^j(x) = \tilde{d}(j)$ and this point is distinct from $x$ since, for example, $\tilde c$ is non-contractible. This is a contradiction.

Now assume that $\tilde{c}$ and $\tilde{d}$ are disjoint lifts of the same curve $c$ in $\surface$. There is a geodesic closed curve $\alpha$ homotopic to $c$~\cite[Proposition~1.3]{fm-pmcg-12}. Lift the homotopy $c \simeq \alpha$ to homotopies $\tilde{c} \simeq \tilde{\alpha}$ and $\tilde{d} \simeq \tilde{\beta}$ for some lifts $\tilde{\alpha}$ and $\tilde{\beta}$ of $\alpha$. By the preceding lemmas, $\tilde{\alpha}$ and $\tilde{\beta}$ have the same limit points. Thus $\tilde{\alpha}$ and $\tilde{\beta}$ have the same image (a geodesic line), so they are equal up to homeomorphism $\cR \to \cR$. By the uniqueness part of the lifting property, $\tilde{c}$ and $\tilde{d}$ are then equal up to homeomorphism  $\cR \to \cR$. We proved that $\tilde{c}$ and $\tilde{d}$ intersect, a contradiction.
\end{proof}

\LemSameLimit*
\begin{proof}[Proof of Lemma~\ref{L:samelimit}]
Let $\tilde c$ and $\tilde d$ be lifts of respectively $c$ and $d$ and assume that $\tilde c$ and $\tilde d$ have the same limit points.  In a first step, if $\tilde c$ and~$\tilde d$ are disjoint, we apply a homotopy of~$c$ to make $\tilde c$ intersect~$\tilde d$; this does not change the limit points of~$\tilde c$ by Lemma~\ref{lem:lift-limit-points}.

So we can assume without loss of generality that $\tilde c$ and $\tilde d$ intersect, and thus also that $\tilde c(0) = \tilde d(0) = x \in \cH$. Consider the hyperbolic translations, say $f$ and $g$, given by Lemma~\ref{lem:lift-and-isometries} for $\tilde{c}$ and $\tilde{d}$ respectively. By Lemma~\ref{lem:limit-points-and-isometries} there exist $n, m \geq 1$ and $h \in \Gamma \setminus \{1\}$ such that $f = h^n$ and $g = h^m$. Thus $f(x) = h^n(x)$ and $g(x) = h^m(x)$. Let $\ell$ be a loop on $\surface$ that lifts to a path from $x$ to $h(x)$. Based at zero, $c$ and $d$ are homotopic as loops to respectively the $n$th power and the $m$th power of the loop $\ell$.
\end{proof}

\end{document}